\theoremstyle{definition}
\newtheorem{definition}{Definition}
\theoremstyle{plain}
\declaretheorem{proposition}
\declaretheorem{lemma}
\declaretheorem{example}
\declaretheorem{theorem}
\newcommand{\IND}{\text{IND}}
\newcommand{\AGG}{\text{AGG}}
\begin{document}
\title{An Equilibrium Model of the First-Price Auction with Strategic Uncertainty: Theory and Empirics}
\author{Bernhard Kasberger\thanks{D\"{u}sseldorf Institute for Competition Economics (DICE), Heinrich Heine University D\"{u}sseldorf, Germany
		\vskip 0em Email: bernhard.kasberger@hhu.de\vskip 0em
		\noindent I thank David Delacr\'etaz, Matteo Escudé (discussant), Simon Finster, Paul Heidhues, Stefan Hubner, Maarten Janssen, Willemien Kets, Paul Klemperer, Simon Martin, Martin Pesendorfer, Karl Schlag and audiences at Berlin, Bonn, D\"{u}sseldorf, Manchester, Mannheim, NYU Abu Dhabi, Oxford, Vienna, the 2020 Conference on Mechanism and Institution Design, the 2020 INFORMS Annual Meeting, the 2020 European Winter Meeting of the Econometric Society, the 2021 Spring Meeting of Young Economists, and the 2021 Annual Meeting of the Austrian Economic Association for their helpful comments and suggestions. All errors are my own.
		}}
\date{\today}


\maketitle

\begin{abstract}
	In many first-price auctions, bidders face considerable strategic uncertainty: they cannot perfectly anticipate the other bidders' bidding behavior. We propose a model in which bidders do not know the entire distribution of opponent bids but only the expected winning bid and lower and upper bounds on the opponents' bids. We characterize the optimal bidding strategies and prove the existence of a novel equilibrium. Finally, we apply the model to estimate the cost distribution in highway procurement auctions. Our model fits the data better than existing approaches in which the entire distribution of opponent bidders is assumed to be known.
\end{abstract}
\noindent\emph{JEL Classification:} C57, C72, D44, D81.\\
\noindent{\emph{Keywords:} First-price auctions, independent private value, non-parametric identification, moment constraints, strategic uncertainty, ambiguity, minimax regret, minimax loss.}
\newpage

\onehalfspacing

\section{Introduction}

This paper proposes a novel game-theoretic equilibrium model of the first-price auction in which observable statistics characterize the players' beliefs. In particular, bidders have a belief about the expected winning bid, a statistic that is observable in many auction markets. Indeed, households interested in bidding for a house can look up the average sales price in a neighborhood but typically not the history of transaction prices. In procurement, bidders may have a small sample of comparable transaction prices and corresponding covariates but do not observe losing offers. Bidders can then estimate the expected winning bid but not much more. Finally, while not necessarily a first-price auction, Google informs bidders about expected prices in ad auctions but keeps the bid history secret.\footnote{There are also other settings in which averages may characterize beliefs. For example, central banks communicate that they want an average inflation rate of 2\%. In a different context, there is a (known) target for the average block size on the Ethereum blockchain \citep{roughgarden-eip-1559}.} In addition to the belief about the expected winning bid, bidders also identify lower and upper bounds on the opponents' bids in our model. These can be, for example, prices in the next worst or next best neighborhoods. These three statistics, the lowest opponent bid, the expected winning bid, and the highest opponent bid, independence, and the number of bidders are all a bidder knows about the opponent bid distribution in our model.


A key characteristic of the model is strategic uncertainty, as many bid distributions are consistent with the observed statistics. Due to this uncertainty about the opponent bid distribution, bidders cannot maximize the (subjective) expected payoffs. Instead, any bid will likely lead to a lower payoff than the optimal bid had the bidder known the opponent bid distribution. We quantify this (interim) suboptimality with a loss function; the loss of a bid given the opponents' bid distribution is the difference between the highest possible payoff and the payoff from the bid. It is then natural to desire a low loss. After all, if the loss of a bid is small, the payoff is close to optimal. Therefore, we assume that bidders choose bids that minimize the maximal loss, leading to a close-to-optimal payoff for any possible bid distribution \citep{savage1951}.

The difference to traditional game-theoretic equilibrium models is that the bidders know much less about their opponents, and what they know is, in principle, observable. In a Bayes-Nash equilibrium, the predominant game-theoretic tool for analyzing the first-price auction, the bidders' value distribution is assumed to be common knowledge, and bidders know the opponents' bidding strategies in equilibrium. Neither is typically observable. Usually, only (statistics of) the bids are observable. In our model, there is no common prior as bidders are ignorant of the value distribution and opponents' strategies; they reason directly about the bid distribution. Their beliefs take the form of statistics of the bid distribution. In contrast, bidders know the entire bid distribution in a Bayes-Nash equilibrium; that is, they can compute for each bid the probability that it wins the auction.\footnote{A survey among academic and professional auction consultants shows that many bidders in high-stake auctions struggle to assign probabilities to opponent bids and values \citep{kasberger-schlag}.} We maintain correct beliefs but dispense with the assumption that the entire distribution is known.

We introduce with \emph{aggregate moment equilibrium} a novel equilibrium notion. In an aggregate moment equilibrium, the players play optimally given their beliefs, and the beliefs are correct given the optimal strategies and the latent (independent) value distribution. We prove that an aggregate moment equilibrium exists and that the beliefs induce bidding functions that are strictly increasing in the bidder's private value (Theorem~\ref{theorem:aggregate_moment_equilibrium}). The equilibrium beliefs naturally depend on the latent value distribution and are such that the lowest bid by any type is the belief about the lowest bid, the highest bid by any type is the belief about the highest bid, and the expected winning bid equals the belief about the expected winning bid. The features of optimal behavior given beliefs and consistent beliefs given optimal behavior and the latent value distribution are shared with Bayes-Nash equilibrium.

Instead of observing the average winning bid, bidders may observe the average bid. Hence, we also present a closely related but alternative model in which bidders have a belief not about the expected winning bid but rather the expected opponent bid (in addition to the lowest and highest possible bids). 
Such a belief might come from an observed list of past bids and corresponding auction covariates. Suppose bidders use the sample to estimate a (possibly linear) relationship between the bids and the covariates.\footnote{Due to the auction heterogeneity, it might not be best to naively best respond to the sample as players do in a sampling equilibrium \citep{OSBORNE2003434}. \cite{doi:10.3982/ECTA17105} consider players who use a sample estimate the distribution of actions. In our model, players would only estimate the first moment due to the covariates. We implicitly assume that the sample is small in relation to the number covariates.} Plugging in the covariates of the current auction gives an estimate of the expected opponent bid.\footnote{The survey of \cite{kasberger-schlag} reveals that some real-world bidders estimate the expected opponent bids.} Alternatively, bidders might directly observe the average bid. For concreteness, we assume that each bidder believes that the expected bids of the other bidders are the same. We again define an equilibrium notion (\emph{moment equilibrium}) in which beliefs are consistent with optimal behavior and the latent value distribution. We characterize the minimax-loss bidding function and prove that a moment equilibrium exists in which the minimax loss bidding function is strictly increasing in the bidder's private value (Theorem~\ref{theorem:individual_moment_equilibrium}). To emphasize that beliefs are now directly about individual behavior, we sometimes refer to this moment equilibrium as the individual moment equilibrium.


The difference between the two types of belief is that with individual moment beliefs, bidders do not face the inference problem that they face with aggregate moment beliefs. With aggregate moment beliefs, bidders have to translate their belief about the expected winning bid to a belief about the distribution of the highest opponent bid. To do so, we assume that bidders believe that their belief about the expected winning bid comes from auctions with $n$ bidders whose bids are i.i.d. draws from some individual bid distribution. In the current auction are also $n$ bidders, so a bidder's payoff depends on the distribution of the maximum of the $n-1$ highest bid. Bidders then consider all individual bid distributions as possible for which the expected value of the highest of $n$ independent draws is their belief about the expected winning bid. They then use independence to identify the worst-case distribution of the highest of $n-1$ independent bids. With moment beliefs directly about the individual bid distributions, the first inference is unnecessary.

Next, we show how we can use our equilibrium models with beliefs in the form of observable statistics for the structural estimation of the latent value distribution. We prove analogous results for both types of belief but now discuss the estimation for the model with a belief about the expected winning bid. The backbone of the structural estimation is a non-parametric identification result: the bid distribution identifies the value distribution if bidders play the strictly increasing aggregate moment equilibrium strategies. The first step in the estimation estimates the players' beliefs non-parametrically by computing the sample minimum and maximum bids as well as the average winning bid. The second step uses the estimated beliefs and the characterized minimax-loss bidding function to translate bids into ``pseudo values.'' In contrast, the first step in the leading non-parametric estimator based on the Bayes-Nash equilibrium as proposed by \cite{GPV} [GPV] requires specifying a kernel and a bandwidth to estimate the probability density function of the bid distribution. The second step in GPV uses the first-order condition of expected utility maximization to translate bids into pseudo values. A notable difference is how easy it is to estimate the bidders' beliefs in the moment-equilibrium-based approach as one only needs to compute some summary statistics. While both estimators are consistent (given that the respective theoretical model generates the data), we expect the moment-equilibrium-based estimator to perform well in small samples as the estimation of the beliefs is super-consistent \citep{donald-paarsch-2002}.

Finally, using the data of \cite{BHT-AER-2014}, we estimate the latent cost distribution in Californian highway procurement auctions non-parametrically. Out-of-sample predictions reveal that our equilibrium approaches perform empirically at least as well as Bayes-Nash equilibrium. In these out-of-sample predictions, we compare five estimators for estimating the latent cost distribution. On the one hand, we use the estimators based on aggregate and individual moment beliefs, respectively, each time with and without removing (high losing) outliers based on the interquartile range. On the other hand, we use the Bayes-Nash-equilibrium-based estimator of GPV as a benchmark. For each number of bidders per auction $n$, we use the auctions with a different number of bidders to estimate the cost distribution and then compute the equilibrium bid distribution had there been $n$ bidders. Finally, we assess the fit by computing the distance between the sample and the predicted bid distribution. Across the specifications, we find that robust models that do not make strong assumptions about bidders' knowledge have higher explanatory power of the observed data than BNE. When all bidders are large corporations, the aggregate moment equilibrium performs best. When all bidders are small, the individual moment equilibrium performs best. In that case, removing outliers for estimating the highest possible bid turns out to be particularly useful because there are some unduly high bids made by fringe firms in the sample. 

\subsection{Related Literature}

The paper stands in the tradition of the so-called \cite{wilson-critique} doctrine that seeks to weaken the common knowledge assumptions made in analyzing practical problems with game-theoretic methods. The extensive literature in this tradition has mostly focused on robust mechanism design. Ex-post implementation weakens the common prior but maintains the assumption that players know the other players' strategies \citep{BM05}. Note that the first-price auction does not admit an ex-post equilibrium. 
Other approaches consider non-Bayesian designers dealing with uncertainty (ambiguity) about aspects of the environment by minimizing the maximal regret \citep{BERGEMANN20112527,RobustMonopolyRegulation,guo-shmaya-project-choice} or maximizing worst-case expected utility \citep{Carroll15,Carroll-Segal-resale-2016}. We consider non-Bayesian players and define an equilibrium notion when these players do not have a common prior and do not know other players' strategies.


Other papers that define equilibrium concepts for games with non-Bayesian players facing ambiguity broadly fall into two strands. A first strand considers games of complete information and players being uncertain about the strategies of the other players. A second strand considers games of incomplete information with players being uncertain about the type (i.e., value) distribution. The first strand is more related as the second assumes that players know the other players' strategies in equilibrium.\footnote{\cite{10.2307/25055108}, \cite{chen2007}, and \cite{Gretschko-Mass} analyze the first-price auction and belong to the second category as the players are uncertain about the value distribution, but there is no strategic uncertainty in equilibrium. \cite{hyafill-boutilier} and \cite{Schlag-Zapechelnyuk-best-compromise} do not analyze first-price auctions but use minimax loss to deal with the unknown type distribution.} Within the first strand, \cite{10.1257/mic.4.1.70} defines with \emph{partially specified equilibrium} a related equilibrium notion for games of complete information. In his model, similar to the moment and range, players' beliefs are characterized by expectations of random variables that depend on the mixed strategies of the other players. However, in a first-price auction, forming beliefs about strategies is only useful if one also has a belief about the value distribution. Such knowledge conflicts with our assumption that bidders' beliefs are characterized by observable statistics; the value distribution is typically unobserved.\footnote{Another difference is the decision criterion. \cite{10.1257/mic.4.1.70} uses maximin expected utility, whereas we use minimax loss. Maximin expected utility does not lead to sharp predictions in the first-price auction with moment beliefs. Using smooth ambiguity, \cite{10.1257/aer.20130930} propose a related solution concept \citep[cf.][]{BATTIGALLI201640}.}

There are also other solution concepts for games of complete information with equilibrium strategic uncertainty in the sense of multiple priors. These concepts differ in how they determine the belief---in the form of a set of other players' possible strategies---in equilibrium. \cite{LO1996443} has the restrictive requirement that only optimal mixed strategies are deemed possible. \cite{Klibanoff1996} and \cite{RENOU2010264} demand that the optimal strategy is just one of the strategies deemed possible, which is quite permissive as they do not determine the entire belief set in equilibrium.\footnote{Other solution concepts in this vein are by \cite{DOW1994305}, \cite{EICHBERGER2000183}, and \cite{MARINACCI2000191} for players maximizing Choquet expected utility \citep{schmeidler1989}.} In our model, the belief set does not contain strategies (actions) but rather distributions of actions (bids). Moreover, we parameterize the belief set with the moment and range beliefs. Similar to \cite{LO1996443}, all possible bid distributions reflect the truth in equilibrium by having the same mean as the true bid distribution in our model. At the same time, echoing \cite{Klibanoff1996} and \cite{RENOU2010264}, the equilibrium bid distribution is just one of the possible.

There are other papers in economics \citep{10.1257/aer.20151462,Carrasco-et-al-2018} and in operations research \citep{Smith1995,Popescu2005,Perakis-Roels-2008} that use moment constraints. These papers do not determine them in equilibrium.

\cite{kasberger-schlag} also analyze the first-price auction under strategic uncertainty with minimax loss as the decision criterion. Their objective is not to explain behavior but to find bidding rules that work well for a single bidder. Without using moment beliefs, they construct the set of possible bid distributions for a single bidder; we determine the set in equilibrium. Empirically, they find that their bidding rules would have led to a higher expected utility, on average, which implies that their approach is not well suited for explaining observed behavior. Note that one can use the minimax loss bidding functions of the current paper in such a normative way. \cite{LINHART1989152} is an early paper that uses minimax loss for the analysis of a strategic decision problem (bilateral bargaining). However, they do not define an equilibrium. 
\section{The Model}
\label{sec:model}

In a nutshell, the $n$ bidders in a first-price auction have independent private values and a belief about the lowest possible bid, the highest possible bid, and the expected winning bid. As there are many bid distributions that are consistent with these three statistics, the bidders face (strategic) uncertainty. They deal with the uncertainty by minimizing the maximal loss. The beliefs are correct in equilibrium.


\subsection{Basics}
A single good is for sale in a first-price sealed-bid auction. The auction rules are such that all bidders simultaneously submit sealed bids. The bidder with the highest bid wins and pays her bid. In the case of ties, the good is allocated uniformly between the bidders with the highest bid. There are $n$ bidders in the auction, $n\ge 2$. Let $N$ denote the set of players.

The bidders are risk neutral. Before the auction, each bidder $i$ learns her willingness-to-pay $v_i$ for the auctioned item. These private values are independent, identically distributed, and drawn from a distribution with the cumulative distribution function $F$. The lowest and the highest value in the support are denoted by $\underline v$ and $\overline v$, respectively, with $0\le \underline v<\overline v<\infty$. Let $\text{supp}(F)$ denote the support of $F$. The payoff from losing is normalized to 0. The payoff from winning with bid $b_i$ equals $v_i-b_i$. Hence, conditional on a profile of bids $(b_1,\dots,b_n)$, the expected utility is equal to
\begin{equation*}
	u_i(b_1,\dots,b_n)=
	\begin{cases}
		\frac{1}{k}(v_i-b_i) &\text{if }b_i = \max_{j \in N} b_j\text{ and } k = |\{j\in N\colon b_j= b_i\}|\\
		0&\text{if } b_i < \max_{j\in N} b_j.
	\end{cases}
\end{equation*}

Bidders choose a bid conditional on their willingness-to-pay. From an interim perspective, bidder $i$'s behavior takes the form of choosing a strategy $\beta_i:\text{supp}(F)\rightarrow \mathbb R_+$ that maps values into bids. The strategy is measurable with respect to the underlying Borel $\sigma$-algebra. Importantly, the value distribution $F$ and the bidding strategy $\beta_i$ generate a bid distribution $B_i\in\mathcal P$, where $\mathcal P$ is the set of measures on $\mathbb R_+$ (defined on the Borel $\sigma$-algebra). Bidders bid independently so that the distributions $B_i$ and $B_j$ are independent for $i\neq j$. The set of probability distributions on set $X$ is denoted by $\Delta X$. 

The bid distributions of the other bidders determine the expected utility of a bid $b_i$. Formally, for joint bid distributions $B_{-i}\in \Delta \mathbb R_+^{n-1}$, the expected utility $U_i(b_i, B_{-i})$ of bidder $i$ while bidding $b_i$ is given by 
\begin{equation*}
	U_i(b_i, B_{-i}) = \int u_i(b_i,b_{-i})dB_{-i}(b_{-i}).
\end{equation*}
In the absence of ties and with independent bid distributions, $\Pi_{j\neq i}B_j(b_i)$ is the probability that $b_i$ is higher than all of the other bidders' bids. Then the expected utility is $U_i(b_i, B_{-i}) = \Pi_{j\neq i} B_j(b_i)(v_i-b_i)$.

Bidders know that there are $n$ bidders in total, that values are independent draws, and that everyone bids independently. Each bidder believes that the other players are ex-ante identical, i.e., that their bids come from the same bid distribution.

The bidders do not know the value distribution (not even the support) and do not know the strategies of the other players.\footnote{Bidders using their belief about the value distribution would impose additional constraints on the set of possible bid distributions. It would be interesting to study the interaction of beliefs about the value distribution and the bid distribution in future work.} Instead of forming beliefs about the value distribution and the strategies, they form beliefs directly about the bid distribution, the object that determines the expected utility. From bidder $i$'s perspective, the bid of bidder $j$ is a random variable with distribution $B \in \mathcal P$.

Bidder $i$'s belief about the other bidders' behavior takes the form of the set of all joint bid distributions $B_{-i}$ that she thinks are possible. Let $\mathcal B_i\subseteq\Delta\mathbb R_+^{n-1}$ denote this set of possible bid distributions (the belief set). Thus, the belief is a set of multiple priors. Note that the agents are not Bayesians \citep{savage1954}, i.e., they do not have a probabilistic prior over the set of possible bid distributions. We put structure on the players' belief sets after defining how players choose their bids for a given belief set. 

Bidder $i$ deals with the uncertainty about the opponent bid distribution by minimizing the maximal loss \citep{savage1951}. Given a bid distribution, the loss of a bid is the difference between the highest possible expected utility and the actual payoff. Formally, the loss of bidding $b_i$ is
\begin{equation*}
	\lambda_i(b_i,B_{-i}|v_i) = \sup_{\tilde b_i \in \mathbb R_+} U_i(\tilde b_i,B_{-i}) - U_i(b_i,B_{-i}).
\end{equation*}{}
Player $i$ chooses the bid $b_i$ to minimize the maximal loss, i.e.,
\begin{equation*}
	b_i\in \arg\inf_{\tilde b_i\in \mathbb R_+} \sup_{B_{-i}\in\mathcal B_{i}}\lambda_i(\tilde b_i,B_{-i}|v_i).
\end{equation*}

Minimax loss seeks good performance for all possible bid distributions. As such, it qualifies as \emph{robust}. If the loss of a bid is small, then the bid is close to optimal for the given bid distribution. If maximal loss is small, then one loses little payoff due to not knowing the bid distribution. According to \cite{LINHART1989152}, minimax loss can be used to find compromises in group decisions with diverse subjective probability judgments. Minimax loss bids are easy to justify as one can lay down the objective facts that characterize the belief set and then choose the bid that performs well for all possible bid distributions. Note that we adopt an interim perspective and compute loss conditional on the opponent bid distribution, as in \cite{kasberger-schlag}.\footnote{An alternative would be to compute loss ex-post conditional on the opponents' realized bids. To distinguish these two perspectives, \cite{kasberger-schlag} refer to the interim concept as loss and to the ex-post concept as regret.}

An alternative to minimax loss is maximin expected utility \citep{GILBOA1989141}. In contrast to minimax loss, it does not lead to sharp predictions as all bids are optimal for low values. To see this, consider a bidder with a willingness-to-pay below the expected winning bid. The worst case occurs when one loses the auction. A corresponding worst-case bid distribution puts all the mass on the expected winning bid. Worst-case expected utility is then maximized by any bid below the willingness-to-pay. Another alternative is to assume that bidders have a probabilistic belief about which bid distribution materializes, in which case the beliefs are no longer characterized by statistics.

\subsection{Aggregate Moment Belief}

The belief (set) $\mathcal B_{i}$ of bidder $i$ about the other bidders' bidding behavior is characterized by two types of belief. First, bidder $i$ believes that other bidders only choose bids in $[l_i,u_i]$, where $0\le l_i<u_i$. Note that bidder $i$ forms identical beliefs about the other bidders, i.e., $(l_i,u_i)$ is independent of $j$. Second, bidder $i$ believes that in auctions like the current one the average winning bid is $m_i$. Thus, the other bidders bid in such a way that the expected value of the maximum of $n$ independent draws from their bid distributions is $m_i$. We refer to the first type of belief as \emph{range belief} and to the second type as (aggregate) \emph{moment belief}. 

The range and moment beliefs parameterize the belief set, which is from now on denoted by $\mathcal B_i(l_i,m_i,u_i)$. Let $G$ be the distribution function of a real-valued random variable and let $G^n$ denote the distribution of the maximum of $n$ independent draws from $G$. The set of possible bid distributions is 
\begin{align*}
	\mathcal B_{i}(l_i,m_i,u_i) = \{B_{-i}\in\Delta \mathbb R_+^{n-1}:\exists B\in\mathcal P\text{ with }B_{-i}(b_{-i})=\Pi_{j\neq i}B_j(b_j), \\\int_{l_i}^{u_i}dB=1\text{ and } \int xdB^n(x)=m_i\}.
\end{align*}
Only bid distributions that put mass one on $[l_i,u_i]$ are possible. Moreover, the highest bid of $n$ independent bids must be $m_i$ in expectation.

In first-price auctions, the expected utility depends on the highest competing bid, that is, the highest bid of $n-1$ bids. In the current model, bidders use the belief about the maximum of $n$ bids to form beliefs about the distribution of individual bids and then use the distribution of individual bids to form beliefs about the distribution of the highest bid of $n-1$ bids. The key assumptions are that the number of bidders $n$ is known (that ``generated'' $m_i$ and that participate in the current auction), and that the competing bids are i.i.d. There are clearly interesting alternative models.\footnote{In our model, bidders are sophisticated as they translate their belief about the expected winning to a (worst-case) belief about the distribution of the highest bid of $n-1$ bidders. Alternatively, bidders could be naive in that they ignore the necessary inference and act as if the highest opponent bid is $m$ on average. As the number of bidders grows, the difference between naive and sophisticated bidders disappears.} The current paper focuses on the case of known $n$ and i.i.d. bids to keep the model simple, close to the standard textbook case, and applicable for the structural estimation.




\subsection{Aggregate Moment Equilibrium}

The model is closed by assuming that the beliefs are not systematically wrong. The beliefs are consistent or in equilibrium if the beliefs match the statistics of the bid distribution generated by the latent value distribution and the optimal strategies. 

Let $\beta=(\beta_1,\beta_2,\dots,\beta_n)$, $l=(l_i)_{i\in N}$, $m=(m_i)_{i\in N}$, and $u=(u_i)_{i\in N}$.

\begin{definition}
	\label{def:aggregate_moment_equilibrium}
	The tuple $(\beta,l,m,u)$ is an \emph{aggregate moment equilibrium} if
	\begin{enumerate}
	 	\item $\beta_i$ minimizes the maximal loss for player $i\in N$ and value $v_i\in\text{supp}(F)$, i.e.,
	 	\begin{equation}
	 		\beta_i(v_i|l_i,m_i,u_i) \in\arg\inf_{b_i\in\mathbb R_+}\sup_{B_{-i}\in\mathcal B_i(l_i,m_i,u_i)} \lambda_i(b_i,B_{-i}|v_i);
	 	\end{equation}
	 	\item The aggregate moment belief is consistent across the players and with $\beta$ and $F$, that is,
	 	\begin{equation}
	 		\int\int\dots \int \max_{i\in N}\beta_i(v_i|l_i,m_i,u_i) d F(v_1) dF(v_2) \dots dF(v_n) = m_i
	 		\label{eq:aggregate-moment-equilibrium}
	 	\end{equation}
	 	for all $i\in N$;
	 	\item The range beliefs are consistent, i.e., for all $i\in N$ and $j\neq i$
	 	\begin{align*}
	 		\inf_{v\in\text{supp}(F)}\beta_i(v|l_i,m_i,u_i)&=l_j,\text{ and}\\
	 		\sup_{v\in\text{supp}(F)}\beta_i(v|l_i,m_i,u_i)&=u_j.\\
	 	\end{align*}
	 \end{enumerate} 
\end{definition}

With symmetric strategies and beliefs (dropping the indices), Equation~\eqref{eq:aggregate-moment-equilibrium} becomes
\begin{equation*}
	\int_{\underline v}^{\overline v} \beta(v|l,m,u) dF^n(v)=m,
\end{equation*}
that is, we can compute the expected winning minimax bid with respect to the distribution of the highest value. Moreover, the requirement of consistent range beliefs simplifies to $\inf_{v\in\text{supp}(F)}\beta(v|l,m,u)=l$ and $\sup_{v\in\text{supp}(F)}\beta(v|l,m,u)=u.$


The definition requires that (i) behavior is optimal given the beliefs, and that (ii) the beliefs are consistent with behavior and the latent value distribution. The same two elements are present in a Bayes-Nash equilibrium. The difference is that in a BNE the belief is about the entire bid distribution and not just some (aggregate) statistics. Thus, one can view moment equilibrium as an analogue to Bayes-Nash equilibrium in which bidders know less than the full distribution in equilibrium. 


\section{Existence of an Aggregate Moment Equilibrium}

The section contains the main theoretical result of the paper, namely that an aggregate moment equilibrium exists for all value distributions. The proof of the theorem relies on the characterization of the minimax bidding strategy, with which we begin the analysis.

\subsection{Optimal Bidding Function}

Bidder $i$ faces the problem of not being able to know the bid that maximizes the expected utility (the `best response' bid) as she does not know the competing bid distribution. Thus, any bid is likely to be suboptimal in the sense of not maximizing the expected utility. 
The bid can be suboptimal for two reasons. On the one hand, the bid can be too high relative to the best response or it can be too low. Both types of mistake lead to a loss in payoff relative to the highest possible. 

The bidder's solution to the problem is to choose the bid that minimizes the maximal loss in payoff due to not knowing the bid distribution. Such a bid performs uniformly relatively well as it leads to a payoff that is close to the highest possible for any feasible bid distribution.

Intuitively, loss associated with bidding too high increases in the bid: the higher the bid, the higher the loss in utility if the best response is a very low bid. If the bid is as low as the lowest possible best response, i.e., $b=l$, then the bid cannot be too high. So the loss of bidding too high is 0 at $b=l$. 

Conversely, the loss of bidding too low decreases in the bid: the lower the bid $b$, the higher the loss in utility if the best response is above $b$. If the bid is the highest undominated bid, i.e., $b=v$, then the bid cannot be too low. The loss of bidding too low is 0 for $b=v.$ 


Maximal loss is the maximum of the maximal loss of bidding too high and bidding too low. The maximal loss is minimized by equalizing the two maximal losses. This explains the form of the minimax bidding function in the following proposition, which characterizes the minimax bidding function.

\begin{proposition}
	Suppose bidder $i$ believes that the other bidders' bids are i.i.d$.\,$draws from some distribution with the lowest bid $l$ and the highest bid $u$, $l\le \underline v$. Moreover, the distribution is such that the expected winning bid is $m$ in auctions with $n$ bidders. If $v_i\le m + (m-l)\left(\frac{u-m}{u-l} \right)^{\frac{n-1}n}$, let $b_i$ be the unique solution of  
	\begin{equation}
		\left(\frac{u-m}{u-l}\right)^{\frac{n-1}{n}}(b_i-l) = \max\{\left(\frac{u-m}{u-b_i}\right)^{\frac{n-1}{n}}(v_i-b_i),v_i-m\}.
		\label{eq:agg_minimax_bid_low_value}
	\end{equation}
	For higher values, let $b_i$ be defined as the unique solution of 
	\begin{equation}
		\left(\frac{u-m}{u-l}\right)^{\frac{n-1}{n}}(b_i-l) =(v_i - b_i)\left(1 - \left(\frac{b_i-m}{b_i-l} \right)^{\frac{n-1}{n}}\right)
		\label{eq:agg_minimax_bid_high_value}
	\end{equation}
	The optimal bid of bidder $i$ is $\beta^\AGG(v_i|l,m,u):=\min\{b_i,u\}$. The optimal bidding function $\beta^{\AGG}$ is continuous in $v$, $l$, $m$, and $u$.
	\label{prop:aggregate-minimax-bdding-function}
\end{proposition}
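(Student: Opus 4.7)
I would split the problem into an inner maximization---computing $\Lambda(b):=\sup_{B_{-i}\in\mathcal B_i(l,m,u)}\lambda_i(b,B_{-i}\mid v)$ for a fixed bid $b$---and an outer minimization of $\Lambda$ over $b$. I would first argue that the worst-case $B$ is a two-point distribution, enumerate the relevant candidates, equate the competing losses to pin down the minimax bid, and finally verify existence, uniqueness, and continuity.

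\textbf{Inner problem.} The loss $\lambda_i(b,B_{-i}\mid v)=\sup_{\tilde b}B^{n-1}(\tilde b)(v-\tilde b)-B^{n-1}(b)(v-b)$ is a supremum of affine functionals of the individual distribution $B$, subject to the single moment constraint $\int x\, dB^n(x)=m$ and to $\text{supp}(B)\subseteq[l,u]$. A Choquet-style extremal argument---or equivalently a pairwise mass-transfer exchange that preserves $\int x\, dB^n$---reduces the search to two-point distributions. The relevant candidates are: (a) mass $p=((u-m)/(u-l))^{1/n}$ at $l$ and $1-p$ at $u$, against which the best response bids just above $l$, yielding the ``bid-too-high'' loss $((u-m)/(u-l))^{(n-1)/n}(b-l)$---the LHS of both (\ref{eq:agg_minimax_bid_low_value}) and (\ref{eq:agg_minimax_bid_high_value}); (b) mass $q=((u-m)/(u-b))^{1/n}$ at $b$ and $1-q$ at $u$, where the opponent atom at $b$ forces the bidder to lose all ties so that the loss relative to bidding just above $b$ is $((u-m)/(u-b))^{(n-1)/n}(v-b)$; (c) for $b\ge m$, mass $r=((b-m)/(b-l))^{1/n}$ at $l$ and $1-r$ at $b$, giving loss $(v-b)(1-((b-m)/(b-l))^{(n-1)/n})$; and (d) the degenerate point mass at $m$, relevant when $b<m$, giving loss $v-m$.

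\textbf{Outer problem.} Candidate (a) is strictly increasing in $b$ and vanishes at $b=l$; (b)--(d) are weakly decreasing in $b$ and strictly exceed (a) near $l$. The minimum of $\Lambda(b)$ is therefore attained at the unique $b$ where (a) equals the largest of the remaining candidates. For $v\le m+(m-l)((u-m)/(u-l))^{(n-1)/n}$, the solution lies at or below $m$ and the binding too-low term is the larger of the losses from (b) and (d), giving equation (\ref{eq:agg_minimax_bid_low_value}); for larger $v$ the solution exceeds $m$, (d) drops out, (c) becomes binding, and (\ref{eq:agg_minimax_bid_high_value}) applies. If the implicit solution would exceed $u$, the bid is capped at $u$ since any bid above $u$ is weakly dominated by $u$. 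Uniqueness of the implicit solutions follows from strict monotonicity of LHS minus RHS in $b$; continuity of $\beta^{\AGG}$ in $(v,l,m,u)$ then follows from the implicit function theorem.

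\textbf{Main obstacle.} The technical core is the two-point reduction. The moment constraint involves $B^n$ rather than $B$, so the feasible set is not a standard moment cone. I would either reparameterize via $B^{n-1}$ (which is what enters the utility) and invoke the monotone correspondence with $B^n$, or work directly with $B$ via a pairwise mass-transfer argument: given any three support points, mass can be redistributed between two of them so as to preserve $\int x\, dB^n$ and weakly increase the loss, iterating until only two support points remain. A secondary subtlety is identifying the relevant best responses---not just at $l$ and $u$ but also at $b$ itself (against distributions with an atom just below or at $b$)---which is what produces candidates (b) and (c) alongside (a) and (d).
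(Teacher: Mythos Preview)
Your proposal is correct and follows the paper's approach closely: reduce the worst case to two-point distributions, identify the bid-too-high loss (your (a)) and the bid-too-low losses (your (b)--(d)), and equate the increasing and decreasing envelopes; your candidates and the cutoff $m+(m-l)\bigl((u-m)/(u-l)\bigr)^{(n-1)/n}$ match the paper exactly.

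One correction on the two-point reduction, which you rightly flag as the crux: the loss is \emph{not} a supremum of affine functionals of the individual distribution $B$, since both the payoff (through $B^{n-1}$) and the moment constraint (through $B^n$) are nonlinear in the individual CDF, so a direct Choquet/extreme-point argument does not go through as stated. The paper parameterizes instead by the distribution of the maximum of $n$ draws---so that the moment constraint $\int x\,dB^n=m$ becomes linear---and then runs precisely the direct mass-transfer argument you offer as your alternative: for each fixed pair $(b,\tilde b)$ it shows, case by case for $\tilde b<b$ and $\tilde b>b$, that any interior mass can be pushed to the two relevant atoms while preserving the moment and weakly raising the loss. The concavity of $q\mapsto q^{(n-1)/n}$ is what replaces linearity in this argument (and, incidentally, is why the worst case for bidding too high is always $\{l,u\}$ here, unlike in your individual-moment analogue).
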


The implicit equations in \eqref{eq:agg_minimax_bid_low_value} and \eqref{eq:agg_minimax_bid_high_value} equate the maximal loss from bidding too high and too low. The respective left-hand side of the equations give the worst-case loss of bidding too high (which is increasing in $b_i$), and the right-hand side state the maximal loss when the best response is higher than $b_i$ (which decreases in $b_i$).

The minimax bidding function takes a different form below $m$ and above $m$. This kink is due the constraints of the moment belief on the worst-case distributions. In particular, the implicit equations \eqref{eq:agg_minimax_bid_low_value} and \eqref{eq:agg_minimax_bid_high_value} show that the worst case when bidding too low depends on the bid being below or above the moment belief $m$.

We now explain the reason for the kink in the bidding function and the shape of worst-case loss. The proof of the proposition shows that the worst-case bid distribution has at most two elements in the support. Two elements are enough to satisfy the moment constraint and accentuate what bidding too high and bidding too low means. The moment constraint requires one mass point below $m$ and one mass point above $m$. The best response to such a discrete bid distribution is to bid marginally above one of these mass points. The worst case never places all the mass above the value so that any bid is optimal.

In the case of bidding too high, one leaves most money on the table if lowering the bid does not change the probability of winning (while raising the surplus conditional on winning). This is pushed to the extreme if one mass point of the bid distribution is at the lowest possible bid, i.e., $x_1=l$. The worst case then maximizes the probability with which others bid $l$. The second mass point that maximizes this probability is $x_2=u$, implying the probability that satisfies the moment constraint. This maximizes the loss of bidding too high.

In the case of bidding too low, the worst that can happen is if raising the bid marginally leads to a discrete increase in the winning probability. Note that this decreases the surplus conditional on winning only marginally. If one bids below $m$, then one mass point of the bid distribution is just marginally above $b$. To maximize the probability with which one could win by raising the bid marginally, the second mass point is at $u$. However, it can be the case for high values bidding below $m$ that loss is higher if all the mass is placed on $m$. This follows from the trade-off between the probability of winning and the surplus conditional on winning. For bids above $m$, the mass point is again just above the bid. But now some mass below $m$ is needed and this mass is to be minimized as it positively impacts the payoff of the bid $b$. The worst that can happen is a mass point at $l$.

Note that the equalization of the two formulas for loss may lead to a bid above $u$. It is never optimal to bid above $u$ as such a bid is believed to win with certainty; a marginally lower bid would increase the payoff. Thus, the bidding function is capped at $u$.

Probabilities take the form of $q^{\frac{n-1}n}$ in the minimax bidding function. This form results from the inference from the aggregate moment belief. If the bid distribution puts mass on two bids $x_1$ and $x_2$, then the moment constraint $q\cdot x_1+(1-q)\cdot x_2=m$ pins down $q$. The probability that a single bidder bids $x_1$ is $q^{\frac{1}n}$, and the probability that the maximum of $n-1$ independent draws is $x_1$ is $q^{\frac{n-1}n}$.

The following proposition discusses how the minimax bid reacts to changes in the parameters.

\begin{proposition}
	\label{prop:comparative_statics_aggregate_moment_belief}
	Let $v\in[\underline v,\overline v]$ be such that $\beta^\AGG(v|l,m,u)\in(l,u)$. The minimax bidding function increases in $v$, $m$, and $n$. The minimax bidding function decreases in $u$. For $v<m + (m-l)\left(\frac{u-m}{u-l} \right)^{\frac{n-1}n}$, the minimax bid increases in $l$. The minimax bid reacts ambiguously to changes in $l$ for higher types.
\end{proposition}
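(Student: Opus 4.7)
My plan is to prove each claim by implicit differentiation of the defining equations \eqref{eq:agg_minimax_bid_low_value} and \eqref{eq:agg_minimax_bid_high_value} from Proposition~\ref{prop:aggregate-minimax-bdding-function}, splitting \eqref{eq:agg_minimax_bid_low_value} into its two subcases according to which term attains the $\max$. Writing each equation as $H(b; v, l, m, u, n) := \text{LHS} - \text{RHS} = 0$, the uniqueness argument already supplied in the proof of Proposition~\ref{prop:aggregate-minimax-bdding-function} shows that, whenever $b \in (l, u)$, the LHS is strictly increasing in $b$ while the RHS is strictly decreasing in $b$; hence $\partial H/\partial b > 0$ on the interior, so
\[
\text{sign}\!\left(\frac{\partial b}{\partial x}\right) \;=\; -\text{sign}\!\left(\frac{\partial H}{\partial x}\right)
\]
for each parameter $x \in \{v, l, m, u, n\}$. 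The proof then reduces to checking the sign of $\partial H/\partial x$ case by case.

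The cases for $v$, $u$, and $n$ I expect to be routine. The value $v$ enters only on the RHS, linearly and positively through $v-b$ and $v-m$, so $\partial H/\partial v < 0$ and $\beta^{\AGG}$ is strictly increasing in $v$. For $u$, the ratio $(u-m)/(u-l)$ is strictly increasing in $u$ (its derivative equals $(m-l)/(u-l)^2 > 0$), so the LHS strictly increases in $u$; the $u$-dependence on the RHS of \eqref{eq:agg_minimax_bid_low_value} enters only through $(u-m)/(u-b)$ (and the branch $v-m$ is independent of $u$), producing a strictly smaller effect, so $\partial H/\partial u > 0$ and $\beta^{\AGG}$ is decreasing in $u$; an analogous calculation applies to \eqref{eq:agg_minimax_bid_high_value}. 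For $n$, I would treat $\kappa := (n-1)/n$ as a continuous parameter in $[1/2,1)$; since the bases $(u-m)/(u-l)$, $(u-m)/(u-b)$ (for $b\le m$) and $(b-m)/(b-l)$ (for $b\ge m$) all lie in $(0,1)$, raising them to a larger power shrinks each term, and a direct comparison shows the LHS shrinks strictly faster, so $\partial b/\partial n > 0$, from which the discrete-$n$ claim follows.

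The delicate cases are $m$ and $l$. For $m$ in \eqref{eq:agg_minimax_bid_high_value} the argument is immediate: the LHS decreases in $m$ while the RHS increases in $m$ (because $(b-m)/(b-l)$ decreases in $m$, so $1 - ((b-m)/(b-l))^{(n-1)/n}$ increases), giving $\partial b/\partial m > 0$. In \eqref{eq:agg_minimax_bid_low_value} both sides decrease in $m$, and the main technical obstacle is to show that the LHS decreases strictly faster. In the subcase where $v-m$ attains the $\max$, the equation is explicit, $b = l + (v-m)\bigl((u-l)/(u-m)\bigr)^{(n-1)/n}$, and term-by-term differentiation confirms $\partial b/\partial m > 0$; in the other subcase I would compare $\partial\text{LHS}/\partial m$ and $\partial\text{RHS}/\partial m$ directly, using $b<m$ at interior solutions, to verify that the net sign is again positive. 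For $l$, the same explicit subcase yields
\[
\frac{\partial b}{\partial l} = 1 + \tfrac{n-1}{n}\cdot\tfrac{v-m}{u-l}\cdot\!\left(\frac{u-l}{u-m}\right)^{\!(n-1)/n} > 0,
\]
which gives monotonicity for $v$ below the stated threshold. For higher values (equation \eqref{eq:agg_minimax_bid_high_value}), $l$ enters the LHS through the factor $(b-l)/(u-l)^{(n-1)/n}$ and the RHS through the ratio $(b-m)/(b-l)$; the two channels push $b$ in opposite directions, and I would close the proof by exhibiting a small numerical example (fixing $u$, $m$, and $v$ close to $u$ and varying $l$) in which $\partial b/\partial l$ takes both signs, thereby establishing the ambiguity claim.
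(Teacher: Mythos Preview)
Your overall strategy matches the paper's --- split into the three regimes (the two branches of the $\max$ in \eqref{eq:agg_minimax_bid_low_value} and equation \eqref{eq:agg_minimax_bid_high_value}), write each as $F(b;\cdot)=0$, and read off signs via the implicit function theorem --- but several of the sign claims do not close as stated. For $u$ (and likewise for $n$) in the first branch of \eqref{eq:agg_minimax_bid_low_value}, both sides move with $u$, and the assertion that the RHS produces a ``strictly smaller effect'' is not a consequence of the form of the ratios alone. The paper gets the sign by \emph{substituting the defining equation itself}: using $p(l,u)(b-l)=p(b,u)(v-b)$ to eliminate $(v-b)$ reduces the $u$-comparison to the monotonicity of $x\mapsto (m-x)/(u-x)$, and the $n$-comparison to $\log q(l,u)-\log q(b,u)<0$. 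Without this substitution your ``direct comparison'' does not go through.

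Your treatment of $m$ and $l$ in the low-value region also has gaps. In the first branch of \eqref{eq:agg_minimax_bid_low_value}, both sides carry the factor $(u-m)^{(n-1)/n}$ and depend on $m$ in no other way, so dividing through shows the bid is \emph{independent} of $m$ there --- you will not find the LHS ``decreases strictly faster.'' In the explicit ($v-m$) branch, term-by-term differentiation gives $\partial b/\partial m$ proportional to $-1+\tfrac{n-1}{n}\tfrac{v-m}{u-m}$, whose nonnegativity is \emph{not} automatic; it holds only because that branch is active precisely when $v\ge (nu-m)/(n-1)$, and the paper invokes this region constraint explicitly. Finally, your $l$-monotonicity argument for low types treats only the explicit branch, whereas the threshold in the proposition covers both branches of \eqref{eq:agg_minimax_bid_low_value}; for the first branch the paper appeals to the fact (established in the proof of Proposition~\ref{prop:aggregate-minimax-bdding-function}) that the worst-case ``too high'' loss decreases in its lower mass point, yielding $\partial F/\partial l<0$ directly. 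The ambiguity for high types is indeed settled by a numerical example, as you propose.
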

Minimax loss bids increase in value. The other comparative statics are also as expected: More competing bidders and a higher expected winning bid increase the bids. An increase in $u$ increases the probability with which other bidders bid $l$ in the worst case when bidding too high; the optimal reaction is to lower the bid. Similarly, low types can bid higher as $l$ increases. For types bidding above $m$ a higher $l$ decreases the maximal loss of bidding too high (which has a positive effect on the minimax bid) and decreases the maximal loss of bidding too low (which has a negative effect on the minimax bid). Depending on the type, both effects can dominate.

\subsection{Equilibrium Existence}

The minimax bidding function of Proposition~\ref{prop:aggregate-minimax-bdding-function} takes arbitrary, possibly misspecified, beliefs as input. We now show that there are consistent (equilibrium) beliefs.

The following theorem establishes the existence of an aggregate moment equilibrium. The beliefs, the bidding functions, and the value distribution induce a bid distribution such that the range and the expected winning bid match the players' beliefs. Moreover, it shows that there is a \emph{separating} moment equilibrium, i.e., the beliefs $(l,m,u)$ are such that the bidding functions are strictly increasing in $v$. 

\begin{theorem}
	\label{theorem:aggregate_moment_equilibrium}
	 There is an aggregate moment equilibrium $(\beta^{\AGG},l^{\star},m^{\star},u^{\star})$ for any value distribution $F$ such that the bidding function $\beta^\AGG(v|l^{\star},m^{\star},u^{\star})$ is strictly increasing in $v$.
\end{theorem}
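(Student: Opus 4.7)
The strategy is to set up a fixed-point problem in the belief parameters and apply Brouwer's theorem, with the crucial design choice that the fixed-point map must be constructed so that its fixed points automatically yield a \emph{strictly} increasing bidding function rather than one that is merely weakly increasing with a flat segment at the top.

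First, in any symmetric candidate equilibrium, the range condition forces $l^{\star}=\underline v$. At $b=l$ and $v=l$ in \eqref{eq:agg_minimax_bid_low_value}, both sides equal zero, so $\beta^{\AGG}(l\mid l,m,u)=l$; for any $v>l$, the term $((u-m)/(u-l))^{(n-1)/n}(v-l)$ on the right-hand side is strictly positive at $b=l$, so any solution must have $\beta^{\AGG}(v\mid l,m,u)>l$. Since $\beta^{\AGG}$ is weakly increasing in $v$ (Proposition~\ref{prop:comparative_statics_aggregate_moment_belief}), its infimum over $[\underline v,\overline v]$ equals $\beta^{\AGG}(\underline v\mid l,m,u)$, and matching this to $l$ forces $l=\underline v$.

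Next, let $\bar b(v;l,m,u)$ denote the \emph{unconstrained} solution of \eqref{eq:agg_minimax_bid_low_value}--\eqref{eq:agg_minimax_bid_high_value}, so $\beta^{\AGG}(v\mid l,m,u)=\min\{\bar b(v;l,m,u),u\}$. Implicit differentiation of the two equations shows that $\bar b$ is continuous in all arguments and strictly increasing in $v$ (the LHS is strictly increasing in $b$, while the RHS is strictly increasing in $v$ and weakly decreasing in $b$); plugging $b=v$ into either equation shows $\bar b(v;l,m,u)<v$ for $v>l$. Define $T:D\to\mathbb R^{2}$ on the compact convex set $D=\{(m,u):\underline v\le m\le u\le \overline v\}$ by
\[
T_m(m,u)=\int_{\underline v}^{\overline v}\beta^{\AGG}(v\mid \underline v,m,u)\,dF^{n}(v),\qquad T_u(m,u)=\bar b(\overline v;\underline v,m,u).
\]
Continuity of $T$ follows from continuity of $\beta^{\AGG}$ in its arguments (Proposition~\ref{prop:aggregate-minimax-bdding-function}) plus dominated convergence for $T_m$. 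For invariance $T(D)\subseteq D$: we have $\underline v=\bar b(\underline v;\underline v,m,u)\le T_u<\overline v$, and $T_m\le \min\{u,\,\bar b(\overline v;\underline v,m,u)\}\le T_u$, since the expected winning bid cannot exceed the supremum of the capped bidding function. Brouwer then yields a fixed point $(m^{\star},u^{\star})\in D$.

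At this fixed point, $\bar b(\overline v;\underline v,m^{\star},u^{\star})=u^{\star}$; combined with strict monotonicity of $\bar b$ in $v$, this gives $\bar b(v;\underline v,m^{\star},u^{\star})<u^{\star}$ for every $v<\overline v$, so the cap in $\min\{\bar b,u^{\star}\}$ is not binding on $[\underline v,\overline v)$, and $\beta^{\AGG}(\cdot\mid \underline v,m^{\star},u^{\star})=\bar b(\cdot;\underline v,m^{\star},u^{\star})$ is strictly increasing on $[\underline v,\overline v]$. Strict monotonicity delivers range consistency ($\inf\beta^{\AGG}=\underline v=l^{\star}$ and $\sup\beta^{\AGG}=u^{\star}$), and $T_m(m^{\star},u^{\star})=m^{\star}$ is the aggregate moment condition \eqref{eq:aggregate-moment-equilibrium}. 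The triple $(l^{\star},m^{\star},u^{\star})=(\underline v,m^{\star},u^{\star})$ together with $\beta^{\AGG}$ is the desired equilibrium.

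The main obstacle is the knife-edge requirement $\bar b(\overline v;\underline v,m^{\star},u^{\star})=u^{\star}$. Naively defining $T_u$ as $\beta^{\AGG}(\overline v\mid\cdot)=\min\{\bar b,u\}$ would make $T_u=u$ tautologically whenever $\bar b\ge u$, admitting Brouwer fixed points with a flat segment at the top of $\beta^{\AGG}$ and thereby failing strict monotonicity; using the \emph{unconstrained} $\bar b$ forces the exact equality at any fixed point. A secondary technicality is the continuity of $T$ at the degenerate boundaries $m=l$ and $m=u$ (where $((u-m)/(u-l))^{(n-1)/n}$ equals $1$ or $0$ and the implicit equations trivialize); this is handled by continuously extending the formulas to the boundary or by applying Brouwer to the interior and verifying that the fixed point cannot sit on these degenerate faces.
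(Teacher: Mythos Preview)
Your approach is genuinely different from the paper's and, in outline, plausible: the paper reduces to a \emph{one-dimensional} fixed-point problem by first solving for $u(m)$ as the unique $u$ satisfying $\bar b(\overline v;\underline v,m,u)=u$, and then applies the intermediate-value theorem to $\varphi^{\AGG}(m)=\int\beta^{\AGG}(v\mid\underline v,m,u(m))\,dF^{n}(v)$, showing $\varphi^{\AGG}(m)>m$ near $\underline v$ and $\varphi^{\AGG}(m)<m$ near $\overline v$. You instead keep $(m,u)$ as a free pair and apply Brouwer to a two-dimensional map. The idea of using the \emph{unconstrained} bid $\bar b(\overline v;\cdot)$ for $T_{u}$ so that the fixed-point condition $T_{u}=u$ enforces strict monotonicity is a nice device.

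However, there is a real gap in the last paragraph. Brouwer on $D=\{(m,u):\underline v\le m\le u\le\overline v\}$ does not guarantee an \emph{interior} fixed point, and your map has a bona fide fixed point on the degenerate face: at $(m,u)=(\underline v,\underline v)$ the belief collapses to a point mass at $l=\underline v$, every type bids $\underline v$, and one gets $T(\underline v,\underline v)=(\underline v,\underline v)$. This is exactly the pooling equilibrium the paper flags as ``unstable'' and of no economic interest. So your assertion that ``the fixed point cannot sit on these degenerate faces'' is false, and Brouwer alone could hand you only this degenerate point, at which $\beta^{\AGG}$ is constant rather than strictly increasing. ``Applying Brouwer to the interior'' does not work either, since the interior is not compact and you have not shown $T$ maps any compact convex subset avoiding the corner into itself.

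The paper circumvents this precisely by its one-dimensional reduction: because $u(m)$ is already chosen to make the bidding function separating for each $m$, the only remaining issue is an interior $m^{\star}$, and the boundary behaviour of $\varphi^{\AGG}$ (above the diagonal near $\underline v$, below it near $\overline v$) forces an interior crossing by the IVT. To repair your argument you would need an analogous boundary analysis in the two-dimensional setting---for instance, showing that $T$ maps a suitably truncated triangle $D_{\varepsilon}$ into itself---which amounts to reproducing the paper's estimate that the expected winning bid exceeds $m$ when $m$ is close to $\underline v$.
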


We first discuss the result and then the main points of the proof. The theorem proves that a pure-strategy separating moment equilibrium exists for any value distribution. This stands in contrast to the Bayes-Nash equilibrium, where pure equilibria tend to exist in first-price auctions only for continuous distributions. The difference is due to the nature of beliefs. In a BNE, bidders have a (correct) belief about the other bidders' strategies. If a bidder knows that the other bidders make a certain bid with positive probability, then it is optimal for some types to bid marginally higher. This leads to a mixed strategy in equilibrium. No such detailed information is available to the bidders in a moment equilibrium.

The moment equilibrium is similar to the Bayes-Nash equilibrium as the beliefs are correct in equilibrium. The difference is the level of detail. In an aggregate moment equilibrium the beliefs are statistics: the lowest and highest bids and the expected winning bid. In a BNE, the beliefs are about the entire bid distribution and not just statistics. A monotone (separating) equilibrium can exist in both types of equilibrium.

Coming to the main points of the proof, the second point of the definition of an aggregate moment equilibrium (Def.~\ref{def:aggregate_moment_equilibrium}) implies that the moment beliefs of the players must be the same in equilibrium, i.e., $m_1=m_2=\dots=m_n$. The third point of the definition implies that the bidders must have identical range beliefs. 

The first step in the proof shows that $l^\star=\underline v$ is the unique consistent lower bound belief. This can be inferred from the minimax bidding function. As the minimax bidding function increases in $v$, the minimum is attained at $\underline v$. Suppose $l<\underline v$ and $\beta^\AGG(\underline v|l,m,u)=l$. The equality
\begin{equation*}
	\left(\frac{u-m}{u-l}\right)^{\frac{n-1}{n}}(b_i-l) = \max\{\left(\frac{u-m}{u-b_i}\right)^{\frac{n-1}{n}}(v_i-b_i),v_i-m\}
\end{equation*}
does not hold for $b_i=l$ as the left-hand side is 0 and the right-hand side is strictly positive.

Requiring separation pins down a unique consistent upper bound belief $u$ for each moment belief $m$. The requirement is that the highest type $\overline v$ is the unique type that bids $u$. This is the case if $u$ solves
\begin{equation*}
	\left(\frac{u-m}{u-l}\right)^{\frac{n-1}{n}}(u-l) =(\overline v - u)\left(1 - \left(\frac{u-m}{u-l} \right)^{\frac{n-1}{n}}\right).
\end{equation*}
There is a unique $u$ that solves the equation for each $m$. Let $u(m)$ denote the solution as a function of $m$. This function increases in $m$ (as shown in the appendix).

The final step of the proof studies the function
\begin{equation*}
	\varphi^\AGG(m) = \int_{\underline v}^{\overline v}\beta^\AGG(v|l=\underline v, m ,u=u(m))dF^n(v)
\end{equation*}
for $m\in[\underline v,\overline v]$. The function $\varphi^\AGG$ returns for each aggregate moment belief $m$ the corresponding expected winning bid. A fixed point of this function is thus an aggregate moment equilibrium belief. The domain of the function is the set of feasible moment beliefs. This set is $[\underline v,\overline v]$ as all types bid above $l$, which is equal to $\underline v$, and no type bids above value, so all bidders bid below $\overline v$. The same arguments establish that the interval $[\underline v,\overline v]$ is also the codomain. The existence of a fixed point is shown by arguing that $\varphi^\AGG(m)>m$ for $m$ in a neighborhood of $\underline v$ and $\varphi^\AGG(m)<m$ for $m$ in a neighborhood of $\overline v$. For $m$ close to $\underline v$, the vast majority of types bid above $m$, so the expected winning bid is higher than the moment belief. Similarly, for very high moment beliefs the vast majority bid below $m$ so that the expected winning bid is below $m$. As the function is continuous, it must intersect the 45-degree line at least once. The aggregate moment belief at the intersection is consistent by construction and it induces separating range beliefs.

Note that if there is only one possible value ($\underline v = \overline v)$, then a separating equilibrium exists trivially ($l^\star=m^\star=u^\star=\underline v$). If there are at least two types, then the previous limit arguments apply. 

The following example illustrates the equilibrium construction and behavior.
\begin{example}
	We consider two bidders, $n=2$, and the uniform value distribution, i.e., $F(v)=v$ for $v\in[0,1]$. Figure~\ref{fig:var_phi} shows the function $\varphi^\AGG(m)$ that returns for the aggregate moment belief $m$ the corresponding expected winning bid if the range beliefs induce separating bidding functions. The aggregate moment equilibrium is at the intersection of $\varphi^\AGG$ and the identity function $\text{id}(m)=m$. The equilibrium beliefs are $l^{\star}=0$, $m^{\star}=0.37$, and $u^\star = 0.5$. The average bid is 0.28, which is higher than the corresponding average bid of 0.25 in the Bayes-Nash equilibrium. Figure~\ref{fig:beta} displays the moment equilibrium bidding function. The kink at $m^\star$ is clearly visible. The figure also shows the BNE bidding function for comparison. BNE bids are lower than moment equilibrium bids.
\end{example}


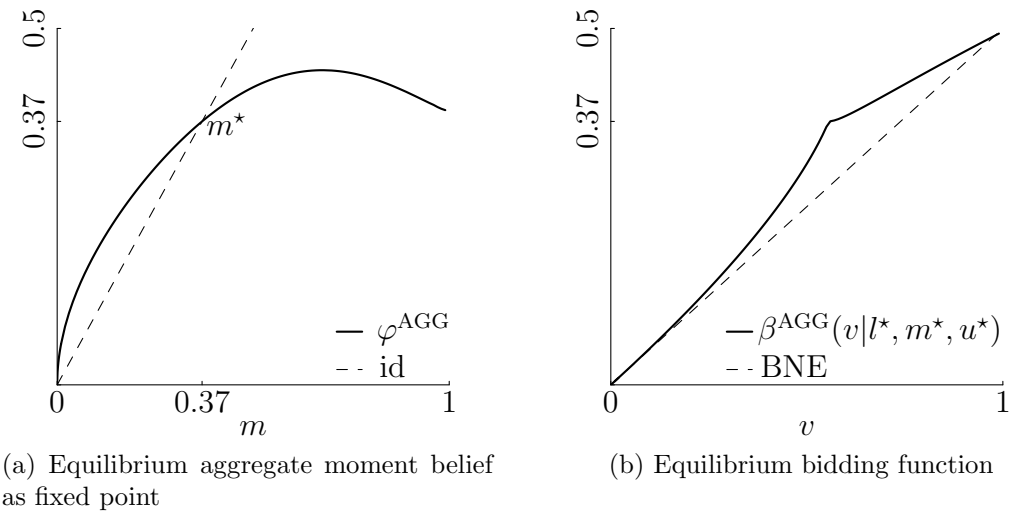
\begin{figure}
     \centering
     \subfloat[Equilibrium aggregate moment belief as fixed point]{\label{fig:var_phi}
\begin{tikzpicture}[x=1pt,y=1pt, scale=0.34]
\definecolor{fillColor}{RGB}{255,255,255}
\path[use as bounding box,fill=fillColor,fill opacity=0.00] (0,0) rectangle (505.89,505.89);
\begin{scope}
\path[clip] ( 49.20, 61.20) rectangle (480.69,456.69);
\definecolor{drawColor}{RGB}{0,0,0}

\path[draw=drawColor,line width= 0.8pt,line join=round,line cap=round] ( 53.51,106.57) --
	( 57.83,125.77) --
	( 62.14,140.67) --
	( 66.46,153.36) --
	( 70.77,164.63) --
	( 75.09,174.91) --
	( 79.40,184.43) --
	( 83.72,193.34) --
	( 88.03,201.77) --
	( 92.35,209.78) --
	( 96.66,217.44) --
	(100.98,224.79) --
	(105.29,231.87) --
	(109.61,238.70) --
	(113.92,245.31) --
	(118.24,251.71) --
	(122.55,257.93) --
	(126.87,263.97) --
	(131.18,269.85) --
	(135.50,275.57) --
	(139.81,281.15) --
	(144.13,286.58) --
	(148.44,291.88) --
	(152.76,297.06) --
	(157.07,302.10) --
	(161.39,307.03) --
	(165.70,311.83) --
	(170.02,316.52) --
	(174.33,321.09) --
	(178.65,325.55) --
	(182.96,329.90) --
	(187.28,334.14) --
	(191.59,338.26) --
	(195.91,342.28) --
	(200.22,346.19) --
	(204.54,349.99) --
	(208.85,353.68) --
	(213.17,357.26) --
	(217.48,360.73) --
	(221.80,364.05) --
	(226.11,367.34) --
	(230.43,370.46) --
	(234.74,373.51) --
	(239.06,376.42) --
	(243.37,379.22) --
	(247.69,381.90) --
	(252.00,384.44) --
	(256.32,386.92) --
	(260.63,389.25) --
	(264.94,391.46) --
	(269.26,393.55) --
	(273.57,395.52) --
	(277.89,397.36) --
	(282.20,399.09) --
	(286.52,400.69) --
	(290.83,402.16) --
	(295.15,403.52) --
	(299.46,404.75) --
	(303.78,405.86) --
	(308.09,406.85) --
	(312.41,407.72) --
	(316.72,408.46) --
	(321.04,409.07) --
	(325.35,409.57) --
	(329.67,409.94) --
	(333.98,410.19) --
	(338.30,410.34) --
	(342.61,410.34) --
	(346.93,410.18) --
	(351.24,410.02) --
	(355.56,409.67) --
	(359.87,409.23) --
	(364.19,408.66) --
	(368.50,407.98) --
	(372.82,407.20) --
	(377.13,406.31) --
	(381.45,405.30) --
	(385.76,404.19) --
	(390.08,402.99) --
	(394.39,401.69) --
	(398.71,400.30) --
	(403.02,398.83) --
	(407.34,397.26) --
	(411.65,395.62) --
	(415.97,393.90) --
	(420.28,392.10) --
	(424.60,390.24) --
	(428.91,388.32) --
	(433.23,386.40) --
	(437.54,384.34) --
	(441.86,382.28) --
	(446.17,380.20) --
	(450.49,378.09) --
	(454.80,375.98) --
	(459.12,373.87) --
	(463.43,371.89) --
	(467.75,369.50) --
	(472.06,367.53) --
	(476.38,366.02);
\end{scope}
\begin{scope}
\path[clip] (  0.00,  0.00) rectangle (505.89,505.89);
\definecolor{drawColor}{RGB}{0,0,0}

\node[text=drawColor,anchor=base,inner sep=0pt, outer sep=0pt, scale=  1.00] at (264.94, 5.60) {$m$};
\end{scope}
\begin{scope}
\path[clip] ( 49.20, 61.20) rectangle (480.69,456.69);
\definecolor{drawColor}{RGB}{0,0,0}

\path[draw=drawColor,line width= 0.4pt,dash pattern=on 4pt off 4pt ,line join=round,line cap=round] ( 49.20, 61.20) --
	(291.78,505.89);

\path[draw=drawColor,line width= 0.8pt,line join=round,line cap=round] ( 49.20, 62.62) --
	( 49.64, 75.48) --
	( 50.07, 81.38) --
	( 50.50, 85.93) --
	( 50.93, 89.78) --
	( 51.36, 93.18) --
	( 51.79, 96.26) --
	( 52.22, 99.10) --
	( 52.66,101.75) --
	( 53.09,104.24) --
	( 53.52,106.60);
\end{scope}
\begin{scope}
\path[clip] (  0.00,  0.00) rectangle (505.89,505.89);
\definecolor{drawColor}{RGB}{0,0,0}

\path[draw=drawColor,line width= 0.4pt,line join=round,line cap=round] ( 49.20, 61.20) -- (480.69, 61.20);

\path[draw=drawColor,line width= 0.4pt,line join=round,line cap=round] ( 49.20, 61.20) -- ( 49.20, 65.15);

\path[draw=drawColor,line width= 0.4pt,line join=round,line cap=round] (208.67, 61.20) -- (208.67, 65.15);

\path[draw=drawColor,line width= 0.4pt,line join=round,line cap=round] (480.69, 61.20) -- (480.69, 65.15);

\node[text=drawColor,anchor=base,inner sep=0pt, outer sep=0pt, scale=  1.00] at ( 49.20, 30.60) {0};

\node[text=drawColor,anchor=base,inner sep=0pt, outer sep=0pt, scale=  1.00] at (208.67, 30.60) {0.37};

\node[text=drawColor,anchor=base,inner sep=0pt, outer sep=0pt, scale=  1.00] at (480.69, 30.60) {1};

\path[draw=drawColor,line width= 0.4pt,line join=round,line cap=round] ( 49.20, 61.20) -- ( 49.20,456.69);

\path[draw=drawColor,line width= 0.4pt,line join=round,line cap=round] ( 49.20, 61.20) -- ( 53.15, 61.20);

\path[draw=drawColor,line width= 0.4pt,line join=round,line cap=round] ( 49.20,353.52) -- ( 53.15,353.52);

\path[draw=drawColor,line width= 0.4pt,line join=round,line cap=round] ( 49.20,456.69) -- ( 53.15,456.69);


\node[text=drawColor,rotate= 90.00,anchor=base,inner sep=0pt, outer sep=0pt, scale=  1.00] at ( 34.80,353.52) {0.37};

\node[text=drawColor,rotate= 90.00,anchor=base,inner sep=0pt, outer sep=0pt, scale=  1.00] at ( 34.80,456.69) {0.5};
\end{scope}
\begin{scope}
\path[clip] ( 49.20, 61.20) rectangle (480.69,456.69);
\definecolor{drawColor}{RGB}{0,0,0}




\node[right, text=drawColor,anchor=base west,inner sep=0pt, outer sep=0pt, scale=  1.00] at (213.67, 338.52) {$m^\star$};

\path[draw=drawColor,line width= 0.8pt,line join=round,line cap=round] (356.88, 120.20) -- (384.88, 120.20);
\node[text=drawColor,anchor=base west,inner sep=0pt, outer sep=0pt, scale=  1.00] at (400.88, 111.76) {$\varphi^\AGG$};

\path[draw=drawColor,line width= 0.4pt,dash pattern=on 4pt off 4pt ,line join=round,line cap=round] (356.88, 78.20) -- (384.88, 78.20);
\node[text=drawColor,anchor=base west,inner sep=0pt, outer sep=0pt, scale=  1] at (403.88, 69.76) {id};

\end{scope}
\end{tikzpicture}
     }\qquad
     \subfloat[Equilibrium bidding function]{\label{fig:beta}
\begin{tikzpicture}[x=1pt,y=1pt, scale = 0.34]
\definecolor{fillColor}{RGB}{255,255,255}
\path[use as bounding box,fill=fillColor,fill opacity=0.00] (0,0) rectangle (505.89,505.89);
\begin{scope}
\path[clip] ( 49.20, 61.20) rectangle (480.69,456.69);
\definecolor{drawColor}{RGB}{0,0,0}

\path[draw=drawColor,line width= 0.8pt,line join=round,line cap=round] ( 53.51, 65.16) --
	( 57.83, 69.15) --
	( 62.14, 73.15) --
	( 66.46, 77.17) --
	( 70.77, 81.22) --
	( 75.09, 85.28) --
	( 79.40, 89.42) --
	( 83.72, 93.53) --
	( 88.03, 97.66) --
	( 92.35,101.83) --
	( 96.66,106.02) --
	(100.98,110.24) --
	(105.29,114.49) --
	(109.61,118.77) --
	(113.92,123.07) --
	(118.24,127.40) --
	(122.55,131.77) --
	(126.87,136.16) --
	(131.18,140.59) --
	(135.50,145.05) --
	(139.81,149.55) --
	(144.13,154.08) --
	(148.44,158.65) --
	(152.76,163.26) --
	(157.07,167.91) --
	(161.39,172.60) --
	(165.70,177.33) --
	(170.02,182.12) --
	(174.33,186.95) --
	(178.65,191.82) --
	(182.96,196.76) --
	(187.28,201.74) --
	(191.59,206.79) --
	(195.91,211.89) --
	(200.22,217.06) --
	(204.54,222.30) --
	(208.85,227.60) --
	(213.17,232.99) --
	(217.48,238.46) --
	(221.80,244.02) --
	(226.11,249.68) --
	(230.43,255.43) --
	(234.74,261.30) --
	(239.06,267.29) --
	(243.37,273.41) --
	(247.69,279.68) --
	(252.00,286.11) --
	(256.32,292.72) --
	(260.63,299.54) --
	(264.94,306.60) --
	(269.26,313.94) --
	(273.57,321.61) --
	(277.89,329.67) --
	(282.20,338.28) --
	(286.52,347.56) --
	(290.83,353.68) --
	(295.15,354.61) --
	(299.46,356.04) --
	(303.78,357.80) --
	(308.09,359.75) --
	(312.41,361.83) --
	(316.72,364.00) --
	(321.04,366.24) --
	(325.35,368.52) --
	(329.67,370.83) --
	(333.98,373.18) --
	(338.30,375.55) --
	(342.61,377.94) --
	(346.93,380.34) --
	(351.24,382.74) --
	(355.56,385.16) --
	(359.87,387.57) --
	(364.19,389.99) --
	(368.50,392.41) --
	(372.82,394.81) --
	(377.13,397.24) --
	(381.45,399.65) --
	(385.76,402.06) --
	(390.08,404.46) --
	(394.39,406.86) --
	(398.71,409.25) --
	(403.02,411.64) --
	(407.34,414.02) --
	(411.65,416.39) --
	(415.97,418.75) --
	(420.28,421.11) --
	(424.60,423.46) --
	(428.91,425.80) --
	(433.23,428.13) --
	(437.54,430.46) --
	(441.86,432.77) --
	(446.17,435.08) --
	(450.49,437.38) --
	(454.80,439.67) --
	(459.12,441.96) --
	(463.43,444.23) --
	(467.75,446.50) --
	(472.06,448.76) --
	(476.38,451.01);
\end{scope}
\begin{scope}
\path[clip] (  0.00,  0.00) rectangle (505.89,505.89);
\definecolor{drawColor}{RGB}{0,0,0}

\node[text=drawColor,anchor=base,inner sep=0pt, outer sep=0pt, scale=  1.00] at (264.94, 5.60) {$v$};
\end{scope}
\begin{scope}
\path[clip] ( 49.20, 61.20) rectangle (480.69,456.69);
\definecolor{drawColor}{RGB}{0,0,0}

\path[draw=drawColor,line width= 0.4pt,dash pattern=on 4pt off 4pt ,line join=round,line cap=round] ( 49.20, 61.20) --
	(480.69,456.69);
\end{scope}
\begin{scope}
\path[clip] (  0.00,  0.00) rectangle (505.89,505.89);
\definecolor{drawColor}{RGB}{0,0,0}

\path[draw=drawColor,line width= 0.4pt,line join=round,line cap=round] ( 49.20, 61.20) -- (480.69, 61.20);

\path[draw=drawColor,line width= 0.4pt,line join=round,line cap=round] ( 49.20, 61.20) -- ( 49.20, 65.15);


\path[draw=drawColor,line width= 0.4pt,line join=round,line cap=round] (480.69, 61.20) -- (480.69, 65.15);

\node[text=drawColor,anchor=base,inner sep=0pt, outer sep=0pt, scale=  1.00] at ( 49.20, 30.60) {0};


\node[text=drawColor,anchor=base,inner sep=0pt, outer sep=0pt, scale=  1.00] at (480.69, 30.60) {1};

\path[draw=drawColor,line width= 0.4pt,line join=round,line cap=round] ( 49.20, 61.20) -- ( 49.20,456.69);

\path[draw=drawColor,line width= 0.4pt,line join=round,line cap=round] ( 49.20, 61.20) -- ( 53.15, 61.20);

\path[draw=drawColor,line width= 0.4pt,line join=round,line cap=round] ( 49.20,353.52) -- ( 53.15,353.52);

\path[draw=drawColor,line width= 0.4pt,line join=round,line cap=round] ( 49.20,456.69) -- ( 53.15,456.69);


\node[text=drawColor,rotate= 90.00,anchor=base,inner sep=0pt, outer sep=0pt, scale=  1.00] at ( 34.80,353.52) {0.37};

\node[text=drawColor,rotate= 90.00,anchor=base,inner sep=0pt, outer sep=0pt, scale=  1.00] at ( 34.80,456.69) {0.5};
\end{scope}
\begin{scope}
\path[clip] ( 49.20, 61.20) rectangle (480.69,456.69);
\definecolor{drawColor}{RGB}{0,0,0}

\path[draw=drawColor,line width= 0.8pt,line join=round,line cap=round] ( 49.20, 61.20) --
	( 53.51, 65.15) --
	( 57.83, 69.11);

\path[draw=drawColor,line width= 0.8pt,line join=round,line cap=round] (176.88, 120.20) -- (204.88, 120.20);
\node[text=drawColor,anchor=base west,inner sep=0pt, outer sep=0pt, scale=  1.00] at (210.88, 111.76) {$\beta^\AGG(v|l^\star,m^\star,u^\star)$};

\path[draw=drawColor,line width= 0.4pt,dash pattern=on 4pt off 4pt ,line join=round,line cap=round] (176.88, 78.20) -- (204.88, 78.20);
\node[text=drawColor,anchor=base west,inner sep=0pt, outer sep=0pt, scale=  1] at (213.88, 69.76) {BNE};

\end{scope}
\end{tikzpicture}
     }
     \caption{Aggregate moment equilibrium with uniform values and two bidders}
     \label{fig:consistent-moment-beliefs}
     \begin{minipage}{.85\textwidth}
    \footnotesize
    \emph{Notes: The equilibrium aggregate moment belief $m^{\star}$ is at the intersection of $\varphi(m)$, which gives the expected winning bid when players have the belief $m$, and the identity function $\text{id}(m)$.}
    \end{minipage}
\end{figure}

Figure~\ref{fig:var_phi} reveals that there is also a pooling equilibrium at $l=m=u=\underline v$. This pooling equilibrium exists for all value distributions. It exists for technical reasons. Suppose a bidder thinks that all other bidders bid $l$ with certainty. There is no uncertainty and the (supremum) best response is $l$, confirming the belief that everyone bids $l$. The equilibrium is unstable in the sense that marginally different beliefs ($l<m$) lead to bids above $l$. We do not consider the pooling equilibrium to be of economic interest as, typically, there is variation in the bids.

\section{Individual Moment Beliefs}
\label{sec:individual_theory}

The model of the previous sections was slightly unconventional from a modeling perspective as players formed beliefs about an aggregate statistic: the expected winning bid. This section presents an alternative model that follows the game-theoretic convention of players forming beliefs about the other players' individual behavior. The beliefs take the form of statistics again. Note that both types of belief may come from what is observable, or, as mentioned in the introduction, what can be learned from a small sample of past (winning) bids.\footnote{For simplicity, we confine ourselves to bidders learning either the average winning bid or the average opponent bid. In a sample of losing and winning bids, one can estimate both statistics. We leave this case and the one in which bidders also learn higher moments for future research.} 

\subsection{Individual Moment Beliefs and Equilibrium}

Bidder $i$ now forms beliefs directly about the behavior of bidder $j$; there is no detour via the distribution of the winning bid. Bidder $i$'s belief about bidder $j$'s behavior is that the bid distribution of bidder $j$ belongs to the set $\mathcal B_{ij}$, where $\mathcal B_{ij}\subseteq \mathcal P$. The set $\mathcal B_{i}=\times_{j\neq i}\mathcal B_{ij}$ is the set of possible competing bid distributions or the belief set.

Bidder $i$ believes that bidder $j$'s bid is $\mu_i$ on average. We refer to $\mu_i$ as an (individual) moment belief. As in the previous section, bidder $i$ believes that bidder $j$ does not bid below $l_i$ and does not bid above $u_i$. Thus, bidder $i$ believes that the bid of bidder $j$ is drawn from a distribution in the set $\mathcal B_{ij}(l_i,\mu_i,u_i)$, where this set is
\begin{equation*}
	\mathcal B_{ij}(l_i,\mu_i,u_i) = \{B\in\mathcal P:\int_{l_i}^{u_i}dB=1 \text{ and } \int xdB(x)=\mu_i\}.
\end{equation*}
Note that the expected bids of the other bidders are identical. Their bids do not necessarily come from the same distribution.

The new model requires adapting the definition of a moment equilibrium.

\begin{definition}
	\label{def:individual_moment_equilibrium}
	The tuple $(\beta,l,\mu,u)$ is a \emph{moment equilibrium} if
	\begin{enumerate}
	 	\item $\beta_i$ minimizes the maximal loss for player $i\in N$ and value $v_i\in\text{supp}(F)$, i.e.,
	 	\begin{equation}
	 		\beta_i(v_i|l_i,\mu_i,u_i) \in\arg\inf_{b_i\in\mathbb R_+}\sup_{B_{-i}\in\mathcal B_i(l_i,\mu_i,u_i)} \lambda_i(b_i,B_{-i}|v_i);
	 	\end{equation}
	 	\item The moment beliefs are consistent with $\beta$ and $F$, that is,
	 	\begin{equation}
	 		\int \beta_i(v_i|l_i,\mu_i,u_i) d F(v_i) = \mu_j
	 		\label{eq:individual_moment-equilibrium}
	 	\end{equation}
	 	for all $i\in N$ and $j\neq i$;
	 	\item The range beliefs are consistent, i.e., for all $i\in N$ and $j\neq i$
	 	\begin{align*}
	 		\inf_{v\in\text{supp}(F)}\beta_i(v|l_i,\mu_i,u_i)&=l_j,\text{ and}\\
	 		\sup_{v\in\text{supp}(F)}\beta_i(v|l_i,\mu_i,u_i)&=u_j.\\
	 	\end{align*}
	 \end{enumerate} 
\end{definition}
In an (individual) moment equilibrium, the same kind of consistency requirements as in an aggregate moment equilibrium are met. The range beliefs are consistent with optimal behavior and the value distribution. Moreover, the expected optimal bid of a player equals the other bidders' beliefs about the expected bid. 

Bayesian priors can be thought of as an infinite sequence of moment beliefs that pin down a unique distribution. A Bayes-Nash equilibrium then requires an infinite number of moment beliefs to be consistent. In a moment equilibrium we only require a finite number of moment beliefs to be consistent.

\subsection{Optimal Bidding Function}

We again start the analysis by characterizing the minimax bidding function. The bidders face the same type of decision problem as above. They engage in worst-case reasoning to find the bid that performs uniformly well in the sense of worst-case loss for all possible bid distributions. Loss again comes from bidding too high or low relative to the best response bid. The bid that minimizes maximal loss equates the maximal loss when bidding too high with the maximal loss when bidding too low.

The difference between individual and aggregate moment beliefs is that bidders now do not need to infer something from the distribution of the highest bid of $n$ bidders about the highest bid of $n-1$ bidders. Bidders now form beliefs directly about the distribution of each other bidder's bid distribution and can use these beliefs to form a belief about the distribution of the highest bid of $n-1$ opponent bidders.

It is possible to state the minimax bidding function in closed form when $n=2$. We define the following two cutoff values
\begin{equation*}
	\hat v_1 = \frac{u(\mu-l)+\mu(u-\mu)}{u-l}\, \text{ and } \hat v_2 = \frac{u(u-l) - l(u-\mu)}{\mu-l} 
\end{equation*}
for the two-bidder case. 

\begin{proposition}
	Suppose bidder $i$ believes that the other bidders' bids are independent draws from bid distributions with lowest bid $l$ and highest bid $u$, $l\le \underline v$. Moreover, each bidder $j$'s bid distribution is such that the expected bid is $\mu$ in auctions with $n$ bidders. 

	Let $n= 2$. The minimax bidding function is
	\begin{equation}
		\beta^\IND(v_i|l,\mu,u):=
		\begin{cases}
			u-\sqrt{(u-l)(u-v_i)}&\text{if } v_i < \hat v_1\\
			l+\frac{-(\mu-l) (u-l)+\sqrt{(\mu-l) (u-l) ((4 v-3 l) (u-\mu)+u (\mu-l)-l (u-l))}}{2 (u-\mu)} & \text{if }\hat v_1\le v_i< \hat v_2\\
			u&\text{if } v_i\ge \hat v_2.
		\end{cases}
		\label{eq:minimax-bid-mean-and-range-n-2}
	\end{equation}

	Let $n\ge 3$. Let $\tilde x_1(b)=\max\{l,\min\{\mu,\frac{(n-1)b-u}{n-2}\}\}$. If $v_i\le \mu + (\mu-\tilde x_1(\mu))\left(\frac{u-\mu}{u-\tilde x_1(\mu)} \right)^{n-1}$, let $b_i$ be the unique solution of  
	\begin{equation}
		\left(\frac{u-\mu}{u-\tilde x_1(b_i)} \right)^{n-1}(b_i-\tilde x_1(b_i)) = \left(\frac{u-\mu}{u-b_i} \right)^{n-1}(v_i - b_i).
		\label{eq:ind_minimax_bid_low_value}
	\end{equation}
	For higher values, let $b_i$ be defined as the unique solution of 
	\begin{equation}
		\left(\frac{u-\mu}{u-\tilde x_1(b_i)} \right)^{n-1}(b_i-\tilde x_1(b_i)) =(v_i - b_i)\left(1 - \left(\frac{b_i-\mu}{b_i-l} \right)^{n-1}\right)
		\label{eq:ind_minimax_bid_high_value}
	\end{equation}
	The optimal bid of bidder $i$ is $\beta^\IND(v_i|l,\mu,u):=\min\{b_i,u\}$. The optimal bidding function $\beta^\IND$ is continuous in $v$, $l$, $\mu$, and $u$.
	\label{prop:individual-minimax-bdding-function}
\end{proposition}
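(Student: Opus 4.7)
The plan follows the blueprint of Proposition~\ref{prop:aggregate-minimax-bdding-function}: identify the worst-case opponent belief at each candidate bid, equate the resulting ``too high'' and ``too low'' worst-case losses, and cap the solution at $u$. The novelty relative to the aggregate case is that beliefs are now about each opponent's individual distribution, so the $n-1$ i.i.d.\ draws combine multiplicatively into the distribution of the highest opponent bid, which determines the form of the probabilities appearing in the loss expressions.

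The first step is a two-point/symmetrization reduction: at any candidate bid $b$, I would argue that the worst case in $\mathcal{B}_i(l,\mu,u)$ can be taken with each opponent's marginal supported on two points $\{x_1,x_2\}\subseteq[l,u]$ satisfying $x_1\le\mu\le x_2$, and all $n-1$ marginals identical. This follows from a Tchebycheff-type extreme-point argument on the moment-constrained set (the loss at $b$ depends on each marginal only through its values at a handful of thresholds) together with an averaging argument exploiting the multiplicative structure of $\prod_{j\neq i}B_j(\cdot)$.

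For the worst-case ``too high'' loss, a two-point distribution $\{x_1,u\}$ with $p=(u-\mu)/(u-x_1)$ on $x_1$ produces loss $p^{n-1}(b-x_1)$; maximizing first over $x_2$ pushes $x_2$ to $u$ because $\partial p/\partial x_2>0$ when $x_1<\mu$, then maximizing over $x_1\in[l,\mu]$ yields, for $n\ge 3$, the interior first-order condition $(n-1)(b-x_1)=u-x_1$ whose solution truncated to $[l,\mu]$ is exactly $\tilde x_1(b)$; for $n=2$ the derivative is monotone and the optimum is $x_1=l$. This produces the LHS of both \eqref{eq:ind_minimax_bid_low_value} and \eqref{eq:ind_minimax_bid_high_value}. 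For the worst-case ``too low'' loss I split into two regimes. When $b<\mu$, feasible configurations with $x_1>b$ and $x_2=u$ give loss $(\tfrac{u-\mu}{u-x_1})^{n-1}(v-x_1)$; its maximum over $x_1\in[b,\mu]$ is attained at $x_1=b$ whenever the unconstrained maximizer $\tfrac{(n-1)v-u}{n-2}$ lies below the candidate bid, a condition I would verify at the bid produced by \eqref{eq:ind_minimax_bid_low_value} through the algebraic reduction $(u-l)^{n-1}\ge(n-1)(u-b)^{n-2}(b-l)$ (from maximizing $(n-1)s^{n-2}(1-s)$ on $[0,1]$). This yields the RHS of \eqref{eq:ind_minimax_bid_low_value}. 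When $b>\mu$, feasibility forces $x_1\le\mu<b<x_2$; taking $x_1=l$ and $x_2\downarrow b$ gives the loss $(v-b)(1-(\tfrac{b-\mu}{b-l})^{n-1})$, the RHS of \eqref{eq:ind_minimax_bid_high_value}.

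Since the ``too high'' loss is continuous and strictly increasing in $b$ while the ``too low'' loss is continuous and non-increasing down to $0$ at $b=v$, each branch has a unique minimax bid; evaluating both sides at $b=\mu$ produces the cutoff value that separates the two regimes. For $n=2$, both implicit equations become quadratic in $b$ after clearing denominators, and the positive root gives the closed-form \eqref{eq:minimax-bid-mean-and-range-n-2}, with $\hat v_1$ and $\hat v_2$ arising from evaluating at $b=\mu$ and $b=u$ respectively. Capping at $u$ is justified because any $b>u$ wins with certainty against any belief in $\mathcal{B}_i$, so a marginally lower bid strictly dominates. Continuity of $\beta^{\IND}$ in $(v,l,\mu,u)$ then follows from the implicit function theorem on each branch together with continuity of $\tilde x_1$ in its arguments. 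The hardest step is the two-point reduction combined with symmetrization: unlike the aggregate case, different opponents could a priori use different two-point distributions to coordinate against $b$, so ruling out asymmetric worst cases requires the multiplicative independence structure together with the extreme-point characterization of the moment-constrained set.
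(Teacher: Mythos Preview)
Your outline follows the same architecture as the paper's proof: two-point reduction (the paper cites Winkler's extreme-point theorem), symmetrization across opponents, separate maximization of the ``too high'' and ``too low'' losses, equating them, and capping at $u$. The derivation of $\tilde x_1(b)$ as the truncated first-order condition for $\pi^{n-1}(b-x_1)$ is exactly the paper's computation, and your $n=2$ closed form is obtained the same way.

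Two points where your write-up diverges from the paper and where you should be careful. First, the symmetrization step: you describe it as ``an averaging argument exploiting the multiplicative structure,'' but the multiplicative form $\prod_j B_j(b)$ is \emph{not} linear or concave in the profile of marginals, so a naive averaging/Jensen argument does not obviously go through. The paper instead fixes the pair $(\tilde b,b)$ and observes that the loss-maximizing two-point support for opponent $j$ depends only on how $\tilde b$ and $b$ sit relative to $\mu$ (e.g., $x_{j1}=\tilde b$, $x_{j2}=u$ when $\tilde b<\mu$), hence is identical across $j$. This is a direct case-by-case optimization, not an averaging argument.

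Second, your handling of the ``too low'' loss for $b<\mu$ is a post-hoc algebraic check that the unconstrained maximizer $\hat x_1(v,u)$ lies below the minimax bid. Your inequality $(u-l)^{n-1}\ge (n-1)(u-b)^{n-2}(b-l)$ indeed verifies this in the sub-case $\tilde x_1(b)=l$, and a parallel computation works when $\tilde x_1(b)=\hat x_1(b,u)$, but you do not say so, and you do not explicitly dispose of the boundary possibility $x_1=\mu$ (loss $v-\mu$). The paper handles all of this at once by the cleaner comparison
\[
\bar\lambda^L(b)\;\ge\;\pi(x_1',u)^{n-1}(v-x_1')\;>\;\pi(x_1',u)^{n-1}(b-x_1')\;=\;\bar\lambda^H(b)
\]
with $x_1'=\max\{l,\hat x_1(b,u)\}$, which shows $\bar\lambda^H<\bar\lambda^L$ throughout the region where the constraint $x_1\ge b$ is slack, so the minimax bid cannot lie there. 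This single inequality replaces your case-by-case algebra and also automatically rules out the $v-\mu$ case.
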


The first step of the proof shows that the worst-case bid distributions are identical across the competing bidders. This was imposed in the previous section to infer something from the expected winning bid about the individual bid distributions. Now the bid distributions could be different (up to the range and the first moment), but they are identical in the worst case. Loss is maximized by certain extreme cases, making possible asymmetries irrelevant.

The second step shows that the worst-case bid distributions have at most two elements in the support. The result holds more generally. A moment belief is the expectation of a measurable function. In the current model, the measurable function is $g(b_j)=b_j$ and the moment belief is $\mu_i$. Let $M_{ij}$ be the number of moment beliefs that bidder $i$ has about bidder $j$'s bid distribution. The worst-case bid distribution has at most $M_{ij}+1$ elements in the support \citep{Winkler_1988}. 

Let $x_1$ and $x_2$ denote the two bids in the support, where $l\le x_1\le\mu\le x_1\le u$. The moment constraint then implies that $\pi\cdot x_1+(1-\pi)x_2=\mu$, so the moment constraint pins down the mass $\pi$ on $x_1$. Bidder $i$'s payoff depends on the distribution of the highest bid of $n-1$ independent bids. The probability that the highest bid of the competing bidders is $x_1$ is then $\pi^{n-1}$. Worst-case loss is now convex in the probability, whereas it was concave in the previous section.

The minimax bid again equalizes the maximal loss from bidding too high with the maximal loss from bidding too low. The implicit equations \eqref{eq:ind_minimax_bid_low_value} and \eqref{eq:ind_minimax_bid_high_value} state the maximal loss of bidding too high on the left-hand side and the maximal loss of bidding too low on the right-hand side. 

The maximal loss of bidding too high is more elaborate than in the previous section. Depending on $b$, it can be maximized by any $x_1\in[l,\mu]$. Why is the worst case of bidding too high more involved with individual moment beliefs than with aggregate moment beliefs? In both cases, maximal loss takes the form $\lambda^H=\pi^{\alpha}(b-x_1)$, where $\pi=(x_2-m)/(x_2-x_1)$ or $\pi=(x_2-\mu)/(x_2-x_1)$ and $\alpha\in\{\frac{n-1}{n},n-1 \}$. Loss trades off the probability with which $b$ becomes winning and the ex-post loss $b-x_1$ conditional on winning. A higher $x_1$ raises the probability while it decreases the difference. If $\alpha=(n-1)/n$, then loss is concave in the probability. This means that raising $\pi$ to the power of $\alpha$ ``boosts'' the probability as $\pi^\alpha>\pi$ for $\alpha<1$. This is not the case when $\alpha = n-1 \ge 1$. Then raising $\pi$ to the power of $\alpha$ depresses the probability, which can then be increased by raising $x_1$. In the latter case, there is a non-trivial interaction between $\pi$ and $b-x_1$. In the former case, loss is always maximized by putting as much mass as possible on $l$, as it anyway leads to a high probability.

The maximal loss of bidding too low is simpler as the worst case always puts a mass point slightly above $b$. In contrast to the case of aggregate moment beliefs, the worst case when bidding the minimax bid never puts all the mass on $\mu$. The proof shows that this can be the worst case for bids that are not the minimax bids.

\begin{proposition}
	\label{prop:comparative_statics_individual_moment_belief}
	Let $v\in[\underline v,\overline v]$ such that the minimax bid $\beta^\IND(v|l,\mu,u)\in(l,u)$. The minimax bidding function increases in $v$, $\mu$, and $n$. 
\end{proposition}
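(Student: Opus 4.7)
The plan is to do comparative statics on the defining equations \eqref{eq:ind_minimax_bid_low_value} and \eqref{eq:ind_minimax_bid_high_value} from Proposition~\ref{prop:individual-minimax-bdding-function}, with the $n=2$ case handled from the closed form \eqref{eq:minimax-bid-mean-and-range-n-2}. Write the two sides as $\Lambda^H(b;l,\mu,u,n)$ and $\Lambda^L(b,v;l,\mu,u,n)$, the worst-case losses from bidding too high and too low, respectively. From Proposition~\ref{prop:individual-minimax-bdding-function}, $\Lambda^H$ is continuous and weakly increasing in $b$, $\Lambda^L$ is continuous and strictly decreasing in $b$, and they meet at the unique $b^\star$; thus any parameter shift that raises $\Lambda^L$ or lowers $\Lambda^H$ at the incumbent $b^\star$ pushes $b^\star$ up, and the converse pushes it down.

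Monotonicity in $v$ is immediate: only $\Lambda^L$ depends on $v$, and it does so through the factor $v-b$ in both \eqref{eq:ind_minimax_bid_low_value} and \eqref{eq:ind_minimax_bid_high_value}, which is strictly increasing in $v$. Hence $b^\star$ strictly increases in $v$; differentiation of each branch of the closed form \eqref{eq:minimax-bid-mean-and-range-n-2} confirms the same for $n=2$.

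For monotonicity in $\mu$, first note that in \eqref{eq:ind_minimax_bid_low_value} the factor $(u-\mu)^{n-1}$ cancels across both sides, so the equation depends on $\mu$ only through $\tilde x_1(b)$. The branch $\tilde x_1(b^\star)=\mu$ never arises inside the low-value regime: combining $b^\star\le v$ with the cutoff $v\le\mu+(\mu-\tilde x_1(\mu))\bigl(\tfrac{u-\mu}{u-\tilde x_1(\mu)}\bigr)^{n-1}$ and using that the maximum of $(1-\alpha)\alpha^{n-2}$ on $[0,1]$ is $\bigl(\tfrac{n-2}{n-1}\bigr)^{n-2}\cdot\tfrac{1}{n-1}<\tfrac{1}{n-1}$ shows these inequalities are incompatible with $b^\star\ge\tfrac{(n-2)\mu+u}{n-1}$. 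So $b^\star$ is locally independent of $\mu$ on \eqref{eq:ind_minimax_bid_low_value}. On \eqref{eq:ind_minimax_bid_high_value}, the factor $1-\bigl(\tfrac{b-\mu}{b-l}\bigr)^{n-1}$ on the right-hand side strictly increases in $\mu$, while $\Lambda^H$ weakly decreases in $\mu$ on every branch of $\tilde x_1(b)$, so $b^\star$ strictly increases in $\mu$. Overall, $b^\star$ is weakly increasing in $\mu$.

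The main obstacle is monotonicity in $n$, because $n$ is discrete, appears in the exponents on both sides, and moves the kinks of $\tilde x_1(b)$ simultaneously. The plan is to treat $n$ as a continuous parameter and apply the implicit function theorem branch by branch, then transfer the conclusion to integer $n$ by continuity of $\beta^\IND$ established in Proposition~\ref{prop:individual-minimax-bdding-function}. On the branch $\tilde x_1=l$, for instance, setting $r=(u-b)/(u-l)\in(0,1)$ and $c=(u-v)/(u-l)\in(0,1)$ reduces the cancelled form of \eqref{eq:ind_minimax_bid_low_value} to $(1-r)r^{n-1}=r-c$; the $n$-partial of the left-hand side is $(1-r)r^{n-1}\ln r<0$, and the $r$-partial of $(1-r)r^{n-1}-r+c$ is $r^{n-2}(n-1-nr)-1$, which a short computation shows is negative at its maximizer $r=(n-2)/n$ and hence throughout, so $dr/dn<0$ and $db/dn>0$. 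Analogous but more cumbersome calculations handle the interior and $\mu$-branches of $\tilde x_1$ and the high-value equation \eqref{eq:ind_minimax_bid_high_value}, yielding the required strict monotonicity of $b^\star$ in $n$.
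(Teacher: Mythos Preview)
Your approach is the paper's: sign the partials of $\bar\lambda^H-\bar\lambda^L$ branch by branch and apply the implicit function theorem. The $v$ and $\mu$ arguments match (the paper also notes that the low-value equation is independent of $\mu$ after cancelling $(u-\mu)^{n-1}$). Your estimate ruling out $\tilde x_1(b^\star)=\mu$ in the low regime is correct but heavier than needed: there $b^\star\le\mu$, and for $b\le\mu<u$ one has $\tfrac{(n-1)b-u}{n-2}<\mu$ directly, so the $\mu$-branch of $\tilde x_1$ cannot bind.

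For $n$, the paper also treats $n$ as a continuous parameter and differentiates; your substitution $r=(u-b)/(u-l)$ on the $\tilde x_1=l$ branch is a tidy alternative to its direct computation. Two points to fill in. First, the remaining branches are not literally ``analogous'': on the interior branch the paper solves \eqref{eq:ind_minimax_bid_low_value} for the inverse bid, obtaining $v=b+\tfrac{(n-2)^{n-2}}{(n-1)^{n-1}}(u-b)$, and signs $\partial\beta^{-1}/\partial n$ directly, while for \eqref{eq:ind_minimax_bid_high_value} it signs $\partial\bar\lambda^H/\partial n$ (three sub-branches of $\tilde x_1$) and $\partial\bar\lambda^L/\partial n$ separately. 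These are short but should be written out. Second, Proposition~\ref{prop:individual-minimax-bdding-function} asserts continuity of $\beta^{\mathrm{IND}}$ only in $(v,l,\mu,u)$, not in $n$, so the ``transfer to integers by continuity'' needs a one-line observation that the implicit equations depend smoothly on real $n>2$ (and that as $n\searrow 2$ the interior branch collapses onto the $l$-branch).
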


The comparative statics are again as one would expect. The minimax bid is strictly increasing in $v$ and $n.$

\subsection{Equilibrium Existence}

A separating moment equilibrium exists also in the case of individual moment beliefs for any value distribution.

\begin{theorem}
	\label{theorem:individual_moment_equilibrium}
	 There is a moment equilibrium $(\beta^{\IND},l^{\star},\mu^{\star},u^{\star})$ for any value distribution $F$ such that the bidding function $\beta^\IND(v|l^{\star},\mu^{\star},u^{\star})$ is strictly increasing in $v$.
\end{theorem}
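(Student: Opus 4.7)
The plan is to adapt the template of the existence proof of Theorem~\ref{theorem:aggregate_moment_equilibrium}, replacing the aggregate-winning-bid consistency condition by the individual-bid consistency condition. I would reduce the three-dimensional consistency problem for $(l,\mu,u)$ to finding a fixed point of a continuous scalar map in $\mu$, relying on Proposition~\ref{prop:individual-minimax-bdding-function} for the form and the parameter-continuity of $\beta^\IND$, and on Proposition~\ref{prop:comparative_statics_individual_moment_belief} for strict monotonicity in $v$.

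First, I would observe that by symmetry of $F$ and of the minimax rule, any candidate equilibrium must be symmetric: the consistency conditions in Definition~\ref{def:individual_moment_equilibrium} force $l_i=l^\star$, $\mu_i=\mu^\star$, $u_i=u^\star$ uniformly in $i$. Strict monotonicity of $\beta^\IND$ in $v$ then identifies the infimum of bids with $\beta^\IND(\underline v|l,\mu,u)$. Plugging $v_i=\underline v$ and $b_i=l$ into~\eqref{eq:ind_minimax_bid_low_value} (with $\tilde x_1(l)=l$, since $\tfrac{(n-1)l-u}{n-2}<l$), or using the explicit $n=2$ formula, the identity $\beta^\IND(\underline v|l,\mu,u)=l$ collapses to $\underline v=l$. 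Hence $l^\star=\underline v$.

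Second, fixing $l=\underline v$, for each $\mu\in(\underline v,\overline v)$ I would define $u(\mu)$ as the unique $u\in(\mu,\overline v)$ for which $\overline v$ is the largest type bidding at the top, i.e.\ $\beta^\IND(\overline v|\underline v,\mu,u)=u$. For $n=2$ the condition $\hat v_2=\overline v$ admits the closed form $u(\mu)=\underline v+\sqrt{(\overline v-\underline v)(\mu-\underline v)}$. For $n\ge3$, substituting $b_i=u,\,v_i=\overline v,\,\tilde x_1(u)=\mu$ into~\eqref{eq:ind_minimax_bid_high_value} yields
\begin{equation*}
u-\mu=(\overline v-u)\left(1-\left(\frac{u-\mu}{u-\underline v}\right)^{n-1}\right).
\end{equation*}
On $u\in(\mu,\overline v)$ the LHS is strictly increasing from $0$ to $\overline v-\mu$ while the RHS is strictly decreasing from $\overline v-\mu$ to $0$, so single crossing produces a unique $u(\mu)$, and the implicit function theorem gives continuity in $\mu$.

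Third, I would define the self-map
\begin{equation*}
\varphi^\IND(\mu)=\int_{\underline v}^{\overline v}\beta^\IND(v|\underline v,\mu,u(\mu))\,dF(v),
\end{equation*}
continuous by Proposition~\ref{prop:individual-minimax-bdding-function} combined with continuity of $u(\mu)$ and dominated convergence (envelope $u(\mu)\le\overline v$). The endpoint inequalities are: for $\mu$ just above $\underline v$, $u(\mu)$ exceeds $\mu$ substantially (for $n=2$, $u(\mu)-\underline v$ is of order $\sqrt{\mu-\underline v}$ while $\mu-\underline v$ is first order), so a set of positive $F$-measure of types bids strictly above $\mu$ and $\varphi^\IND(\mu)>\mu$; for $\mu>\mathbb E_F[V]$, individual rationality $\beta^\IND(v|l,\mu,u)\le v$ (which follows from~\eqref{eq:ind_minimax_bid_low_value}--\eqref{eq:ind_minimax_bid_high_value} once one checks the LHS is nonnegative, since $\tilde x_1(b)\le b$, while the bracketed RHS factor lies in $(0,1]$) yields $\varphi^\IND(\mu)\le\mathbb E_F[V]<\mu$. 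The intermediate value theorem then delivers a fixed point $\mu^\star\in(\underline v,\overline v)$, and $(\beta^\IND,\underline v,\mu^\star,u(\mu^\star))$ is a separating moment equilibrium by construction. The degenerate case $\underline v=\overline v$ is handled trivially by $l^\star=\mu^\star=u^\star=\underline v$.

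The main obstacle I anticipate is the lower-endpoint analysis in Step 3: quantifying $\varphi^\IND(\mu)-\mu>0$ for $\mu$ just above $\underline v$ requires a lower bound on the mass of types whose bids strictly exceed $\mu$, which in turn rests on the asymptotics $u(\mu)/\mu\to\infty$ together with the strict monotonicity of $\beta^\IND$. A secondary obstacle is, for $n\ge3$, the fact that $\beta^\IND$ has a regime switch at the kink $b=\mu$ where $\tilde x_1$ changes form; one needs to verify that the substitution $\tilde x_1(u)=\mu$ used in Step 2 is indeed the appropriate branch, and to justify continuity of $\beta^\IND$ across this kink (which is part of Proposition~\ref{prop:individual-minimax-bdding-function}).
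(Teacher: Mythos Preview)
Your proposal is correct and follows essentially the same three-step template as the paper: pin down $l^\star=\underline v$, solve for the unique separating $u(\mu)$, and find an interior fixed point of the scalar map $\mu\mapsto\int\beta^\IND(v|\underline v,\mu,u(\mu))\,dF(v)$ via continuity and endpoint crossing. Your upper-endpoint argument via $\beta^\IND(v)\le v\Rightarrow\varphi^\IND(\mu)\le\mathbb E_F[V]$ is a slightly cleaner variant than the paper's limit argument at $\mu\to\overline v$, and the lower-endpoint obstacle you flag is exactly where the paper's own argument (for both theorems) is most informal.
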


The unique consistent lower bound belief is again $l^\star = \underline v$. In terms of the upper bound belief, separation and consistency requires $\bar v = \hat v_2$ if $n=2$ and 
\begin{equation*}
	\bar v = u+ \frac{(u-\mu)(u-l)^{n-1}}{(u-l)^{n-1}-(u-\mu)^{n-1}}
\end{equation*}
if $n\ge 3$. In either case, there is a unique $u^{\star}(\mu)$ for each $l$ and $\mu$ that solves the respective equation. Uniqueness follows from the right-hand side being strictly increasing in $u$. 

The function $\psi^\IND$ returns for each moment belief $\mu$ the expected bid if the range beliefs are consistent and separating, i.e., 
\begin{equation}
	\psi^\IND(\mu)=\int_{\underline v}^{\bar v} \beta^\IND(v|l^{\star},\mu,u^{\star}(\mu))dF(v).
\end{equation}
The consistent moment belief $\mu^{\star}$ is a fixed point of $\psi^\IND$. 

The domain of $\psi^\IND$ is $[\underline v,\bar v]$. If $\mu=\underline v$, then all the mass is on $l^{\star}$. Likewise, if $\mu=\bar v$, then $u^{\star}=\mu=\bar v$ and all the mass is on $u^{\star}$. The codomain of $\psi^\IND$ is also $[\underline v,\bar v]$ as no type bids below $l^{\star}$ and no type bids above value. The function $\psi^\IND$ is continuous by Lebesgue's dominated convergence theorem \citep{Elstrodt}. A fixed point exists according to Brouwer's fixed point theorem. We argue that the fixed point is in the interior so that the moment equilibrium is separating. If $\mu=\underline v$, then the minimax bid of all types is $\underline v$. If $\mu$ is marginally higher, then almost all values bid strictly higher than $\underline v$. Similarly, if $\mu=\bar v$, then almost all values $v$ bid less than $\mu$. The function $\psi^\IND$ must intersect the identity function at least once in the interior of $[\underline v,\bar v]$.


\begin{example}
	The value distribution is uniform with support $[0,1]$ and there are two players, i.e., $n=2$. The upper belief is $u(\mu)=\sqrt{\mu}$ in a separating equilibrium. Figure~\ref{fig:var_phi_ind} shows the function $\psi(\mu)=\int_{0}^{1} \beta^\IND(v|l=0,\mu,u(\mu))dv$. The concave shape of $\psi$ implies that there is a unique separating moment equilibrium with consistent range beliefs. The equilibrium beliefs are $l^{\star}=0$, $m^{\star} = 0.3$, and $u^{\star} = 0.55$. For comparison, the expected bid in the unique BNE is $0.25$ and the highest bid is $0.5$. Figure~\ref{fig:consistent-moment-beliefs_individual} illustrates the respective equilibrium bidding functions.

\end{example}

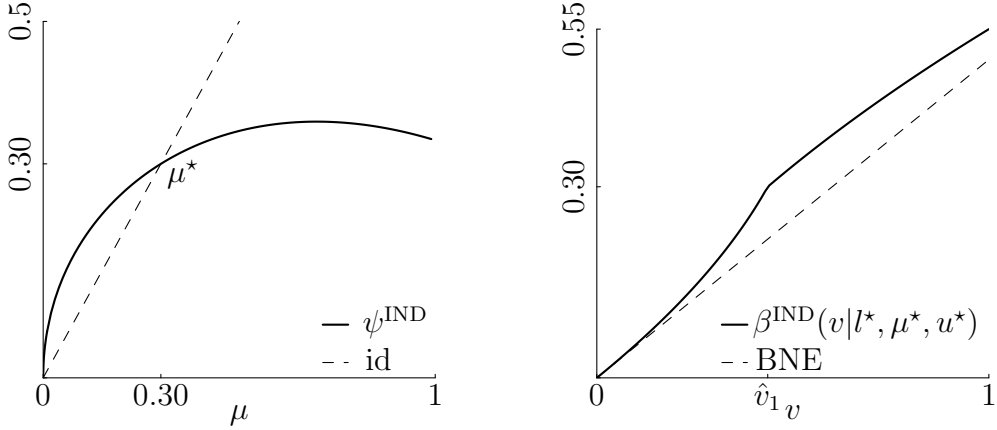
\begin{figure}
     \centering
     \subfloat[Equilibrium moment belief as fixed point]{\label{fig:var_phi_ind}
\begin{tikzpicture}[x=1pt,y=1pt, scale=.34]
\definecolor{fillColor}{RGB}{255,255,255}
\path[use as bounding box,fill=fillColor,fill opacity=0.00] (0,0) rectangle (505.89,505.89);
\begin{scope}
\path[clip] ( 49.20, 61.20) rectangle (480.69,456.69);
\definecolor{drawColor}{RGB}{0,0,0}

\path[draw=drawColor,line width= 0.8pt,line join=round,line cap=round] ( 53.51,112.60) --
	( 57.83,133.08) --
	( 62.14,148.46) --
	( 66.46,161.18) --
	( 70.77,172.19) --
	( 75.09,181.98) --
	( 79.40,190.84) --
	( 83.72,198.96) --
	( 88.03,206.46) --
	( 92.35,213.45) --
	( 96.66,219.99) --
	(100.98,226.15) --
	(105.29,231.96) --
	(109.61,237.46) --
	(113.92,242.69) --
	(118.24,247.66) --
	(122.55,252.40) --
	(126.87,256.92) --
	(131.18,261.25) --
	(135.50,265.32) --
	(139.81,269.35) --
	(144.13,273.15) --
	(148.44,276.80) --
	(152.76,280.30) --
	(157.07,283.65) --
	(161.39,286.89) --
	(165.70,290.00) --
	(170.02,292.98) --
	(174.33,295.85) --
	(178.65,298.61) --
	(182.96,301.26) --
	(187.28,303.81) --
	(191.59,306.26) --
	(195.91,308.62) --
	(200.22,310.88) --
	(204.54,313.05) --
	(208.85,315.13) --
	(213.17,317.13) --
	(217.48,319.05) --
	(221.80,320.88) --
	(226.11,322.64) --
	(230.43,324.32) --
	(234.74,325.92) --
	(239.06,327.46) --
	(243.37,328.92) --
	(247.69,330.31) --
	(252.00,331.63) --
	(256.32,332.89) --
	(260.63,334.07) --
	(264.94,335.20) --
	(269.26,336.26) --
	(273.57,337.26) --
	(277.89,338.19) --
	(282.20,339.07) --
	(286.52,339.88) --
	(290.83,340.64) --
	(295.15,341.34) --
	(299.46,341.99) --
	(303.78,342.57) --
	(308.09,343.10) --
	(312.41,343.58) --
	(316.72,344.00) --
	(321.04,344.37) --
	(325.35,344.69) --
	(329.67,344.95) --
	(333.98,345.16) --
	(338.30,345.33) --
	(342.61,345.44) --
	(346.93,345.50) --
	(351.24,345.51) --
	(355.56,345.47) --
	(359.87,345.39) --
	(364.19,345.25) --
	(368.50,345.07) --
	(372.82,344.84) --
	(377.13,344.57) --
	(381.45,344.25) --
	(385.76,343.88) --
	(390.08,343.47) --
	(394.39,343.01) --
	(398.71,342.51) --
	(403.02,341.97) --
	(407.34,341.37) --
	(411.65,340.74) --
	(415.97,340.06) --
	(420.28,339.34) --
	(424.60,338.58) --
	(428.91,337.77) --
	(433.23,336.92) --
	(437.54,336.03) --
	(441.86,335.10) --
	(446.17,334.12) --
	(450.49,333.11) --
	(454.80,332.05) --
	(459.12,330.95) --
	(463.43,329.80) --
	(467.75,328.63) --
	(472.06,327.42) --
	(476.38,326.16);
\end{scope}
\begin{scope}
\path[clip] (  0.00,  0.00) rectangle (505.89,505.89);
\definecolor{drawColor}{RGB}{0,0,0}

\node[text=drawColor,anchor=base,inner sep=0pt, outer sep=0pt, scale=  1.00] at (264.94, 15.60) {$\mu$};
\end{scope}
\begin{scope}
\path[clip] ( 49.20, 61.20) rectangle (480.69,456.69);
\definecolor{drawColor}{RGB}{0,0,0}

\path[draw=drawColor,line width= 0.4pt,dash pattern=on 4pt off 4pt ,line join=round,line cap=round] ( 49.20, 61.20) --
	(480.69,852.18);

\path[draw=drawColor,line width= 0.8pt,line join=round,line cap=round] ( 49.20, 62.87) --
	( 49.64, 77.83) --
	( 50.07, 84.58) --
	( 50.50, 89.73) --
	( 50.93, 94.06) --
	( 51.36, 97.86) --
	( 51.79,101.28) --
	( 52.22,104.42) --
	( 52.66,107.33) --
	( 53.09,110.05) --
	( 53.52,112.63);
\end{scope}
\begin{scope}
\path[clip] (  0.00,  0.00) rectangle (505.89,505.89);
\definecolor{drawColor}{RGB}{0,0,0}

\path[draw=drawColor,line width= 0.4pt,line join=round,line cap=round] ( 49.20, 61.20) -- (480.69, 61.20);

\path[draw=drawColor,line width= 0.4pt,line join=round,line cap=round] ( 49.20, 61.20) -- ( 49.20, 65.15);

\path[draw=drawColor,line width= 0.4pt,line join=round,line cap=round] (178.74, 61.20) -- (178.74, 65.15);

\path[draw=drawColor,line width= 0.4pt,line join=round,line cap=round] (480.69, 61.20) -- (480.69, 65.15);

\node[text=drawColor,anchor=base,inner sep=0pt, outer sep=0pt, scale=  1.00] at ( 49.20, 30.60) {0};

\node[text=drawColor,anchor=base,inner sep=0pt, outer sep=0pt, scale=  1.00] at (178.74, 30.60) {0.30};

\node[text=drawColor,anchor=base,inner sep=0pt, outer sep=0pt, scale=  1.00] at (480.69, 30.60) {1};

\path[draw=drawColor,line width= 0.4pt,line join=round,line cap=round] ( 49.20, 61.20) -- ( 49.20,456.69);

\path[draw=drawColor,line width= 0.4pt,line join=round,line cap=round] ( 49.20, 61.20) -- ( 53.15, 61.20);

\path[draw=drawColor,line width= 0.4pt,line join=round,line cap=round] ( 49.20,298.67) -- ( 53.15,298.67);

\path[draw=drawColor,line width= 0.4pt,line join=round,line cap=round] ( 49.20,456.69) -- ( 53.15,456.69);


\node[text=drawColor,rotate= 90.00,anchor=base,inner sep=0pt, outer sep=0pt, scale=  1.00] at ( 34.80,298.67) {0.30};

\node[text=drawColor,rotate= 90.00,anchor=base,inner sep=0pt, outer sep=0pt, scale=  1.00] at ( 34.80,456.69) {0.5};
\end{scope}
\begin{scope}
\path[clip] ( 49.20, 61.20) rectangle (480.69,456.69);
\definecolor{drawColor}{RGB}{0,0,0}
\node[right, text=drawColor,anchor=base west,inner sep=0pt, outer sep=0pt, scale=  1.00] at (185.74, 278.67) {$\mu^\star$};

\path[draw=drawColor,line width= 0.8pt,line join=round,line cap=round] (356.88, 120.20) -- (384.88, 120.20);
\node[text=drawColor,anchor=base west,inner sep=0pt, outer sep=0pt, scale=  1.00] at (400.88, 111.76) {$\psi^\IND$};

\path[draw=drawColor,line width= 0.4pt,dash pattern=on 4pt off 4pt ,line join=round,line cap=round] (356.88, 78.20) -- (384.88, 78.20);
\node[text=drawColor,anchor=base west,inner sep=0pt, outer sep=0pt, scale=  1] at (403.88, 69.76) {id};




\end{scope}
\end{tikzpicture}
     }\qquad
     \subfloat[Equilibrium bidding function]{\label{fig:beta_ind}
\begin{tikzpicture}[x=1pt,y=1pt, scale=.34]
\definecolor{fillColor}{RGB}{255,255,255}
\path[use as bounding box,fill=fillColor,fill opacity=0.00] (0,0) rectangle (505.89,505.89);
\begin{scope}
\path[clip] ( 49.20, 61.20) rectangle (480.69,456.69);
\definecolor{drawColor}{RGB}{0,0,0}

\path[draw=drawColor,line width= 0.8pt,line join=round,line cap=round] ( 49.20, 61.20) --
	( 53.51, 64.75) --
	( 57.83, 68.33) --
	( 62.14, 71.94) --
	( 66.46, 75.59) --
	( 70.77, 79.28) --
	( 75.09, 83.00) --
	( 79.40, 86.76) --
	( 83.72, 90.56) --
	( 88.03, 94.41) --
	( 92.35, 98.29) --
	( 96.66,102.22) --
	(100.98,106.19) --
	(105.29,110.21) --
	(109.61,114.28) --
	(113.92,118.39) --
	(118.24,122.56) --
	(122.55,126.79) --
	(126.87,131.07) --
	(131.18,135.41) --
	(135.50,139.81) --
	(139.81,144.27) --
	(144.13,148.80) --
	(148.44,153.40) --
	(152.76,158.07) --
	(157.07,162.82) --
	(161.39,167.65) --
	(165.70,172.57) --
	(170.02,177.57) --
	(174.33,182.67) --
	(178.65,187.87) --
	(182.96,193.17) --
	(187.28,198.59) --
	(191.59,204.12) --
	(195.91,209.79) --
	(200.22,215.59) --
	(204.54,221.54) --
	(208.85,227.65) --
	(213.17,233.94) --
	(217.48,240.42) --
	(221.80,247.10) --
	(226.11,254.02) --
	(230.43,261.19) --
	(234.74,268.64) --
	(239.06,274.73) --
	(243.37,278.41) --
	(247.69,282.06) --
	(252.00,285.68) --
	(256.32,289.28) --
	(260.63,292.85) --
	(264.94,296.38) --
	(269.26,299.90) --
	(273.57,303.38) --
	(277.89,306.85) --
	(282.20,310.28) --
	(286.52,313.70) --
	(290.83,317.08) --
	(295.15,320.45) --
	(299.46,323.79) --
	(303.78,327.11) --
	(308.09,330.41) --
	(312.41,333.69) --
	(316.72,336.95) --
	(321.04,340.18) --
	(325.35,343.40) --
	(329.67,346.59) --
	(333.98,349.77) --
	(338.30,352.92) --
	(342.61,356.06) --
	(346.93,359.18) --
	(351.24,362.28) --
	(355.56,365.37) --
	(359.87,368.43) --
	(364.19,371.48) --
	(368.50,374.51) --
	(372.82,377.53) --
	(377.13,380.53) --
	(381.45,383.51) --
	(385.76,386.47) --
	(390.08,389.43) --
	(394.39,392.36) --
	(398.71,395.28) --
	(403.02,398.19) --
	(407.34,401.08) --
	(411.65,403.95) --
	(415.97,406.82) --
	(420.28,409.66) --
	(424.60,412.50) --
	(428.91,415.32) --
	(433.23,418.12) --
	(437.54,420.92) --
	(441.86,423.70) --
	(446.17,426.47) --
	(450.49,429.22) --
	(454.80,431.96) --
	(459.12,434.69) --
	(463.43,437.41) --
	(467.75,440.12) --
	(472.06,442.81) --
	(476.38,445.49) --
	(480.69,448.16);
\end{scope}
\begin{scope}
\path[clip] (  0.00,  0.00) rectangle (505.89,505.89);
\definecolor{drawColor}{RGB}{0,0,0}

\node[text=drawColor,anchor=base,inner sep=0pt, outer sep=0pt, scale=  1.00] at (264.94, 15.60) {$v$};
\end{scope}
\begin{scope}
\path[clip] (  0.00,  0.00) rectangle (505.89,505.89);
\definecolor{drawColor}{RGB}{0,0,0}

\path[draw=drawColor,line width= 0.4pt,line join=round,line cap=round] ( 49.20, 61.20) -- (480.69, 61.20);

\path[draw=drawColor,line width= 0.4pt,line join=round,line cap=round] ( 49.20, 61.20) -- ( 49.20, 65.15);

\path[draw=drawColor,line width= 0.4pt,line join=round,line cap=round] (237.31, 61.20) -- (237.31, 65.15);

\path[draw=drawColor,line width= 0.4pt,line join=round,line cap=round] (480.69, 61.20) -- (480.69, 65.15);

\node[text=drawColor,anchor=base,inner sep=0pt, outer sep=0pt, scale=  1.00] at ( 49.20, 30.60) {0};

\node[text=drawColor,anchor=base,inner sep=0pt, outer sep=0pt, scale=  1.00] at (237.31, 30.60) {$\hat v_1$};

\node[text=drawColor,anchor=base,inner sep=0pt, outer sep=0pt, scale=  1.00] at (480.69, 30.60) {1};

\path[draw=drawColor,line width= 0.4pt,line join=round,line cap=round] ( 49.20, 61.20) -- ( 49.20,448.16);

\path[draw=drawColor,line width= 0.4pt,line join=round,line cap=round] ( 49.20, 61.20) -- ( 53.15, 61.20);

\path[draw=drawColor,line width= 0.4pt,line join=round,line cap=round] ( 49.20,273.23) -- ( 53.15,273.23);

\path[draw=drawColor,line width= 0.4pt,line join=round,line cap=round] ( 49.20,448.16) -- ( 53.15,448.16);


\node[text=drawColor,rotate= 90.00,anchor=base,inner sep=0pt, outer sep=0pt, scale=  1.00] at ( 34.80,273.23) {0.30};

\node[text=drawColor,rotate= 90.00,anchor=base,inner sep=0pt, outer sep=0pt, scale=  1.00] at ( 34.80,448.16) {0.55};
\end{scope}
\begin{scope}
\path[clip] ( 49.20, 61.20) rectangle (480.69,456.69);
\definecolor{drawColor}{RGB}{0,0,0}

\path[draw=drawColor,line width= 0.4pt,dash pattern=on 4pt off 4pt ,line join=round,line cap=round] ( 49.20, 61.20) --
	(480.69,414.32);
\path[draw=drawColor,line width= 0.8pt,line join=round,line cap=round] (186.88, 120.20) -- (214.88, 120.20);
\node[text=drawColor,anchor=base west,inner sep=0pt, outer sep=0pt, scale=  1.00] at (220.88, 111.76) {$\beta^\IND(v|l^\star,\mu^\star,u^\star)$};

\path[draw=drawColor,line width= 0.4pt,dash pattern=on 4pt off 4pt ,line join=round,line cap=round] (186.88, 78.20) -- (214.88, 78.20);
\node[text=drawColor,anchor=base west,inner sep=0pt, outer sep=0pt, scale=  1] at (223.88, 69.76) {BNE};

\end{scope}
\end{tikzpicture}
     }
     \caption{Moment equilibrium with uniform values and two bidders}
     \label{fig:consistent-moment-beliefs_individual}
     \begin{minipage}{.85\textwidth}
    \footnotesize
    \emph{Notes: The equilibrium moment belief $\mu^{\star}$ is at the intersection of $\varphi(\mu)$, which gives the expected bid when players have the belief $\mu$, and the identity function $\text{id}(\mu)$.}
    \end{minipage}
\end{figure}

A final note is that there is a multiplicity of equilibria with inconsistent range beliefs. For any $l\in [0,\underline v]$ there is a consistent moment belief $\mu^\star(l)$ and an upper bound belief such that the bidding function is strictly increasing.

\section{Identification and Estimation}

This second part of the paper is concerned with the estimation of the latent value distribution $F$ from observed bids. We first provide non-parametric identification results and then non-parametric estimators of the value distribution. The next section applies these to data from highway procurement auctions and assesses the fit in out-of-sample predictions.

\subsection{Non-parametric Identification}

Suppose we observe a distribution $G$ of bids and know that these bids come from auctions with $n$ bidders. Our objective is to infer the value distribution $F$ from the bid distribution $G$. To do so, we need a (game-theoretic) model that informs us about the relation between $F$ and $G$. \cite{GPV} [GPV], the landmark paper on the structural analysis of first-price auctions, assumes that the data is generated by bidders playing a Bayes-Nash equilibrium. We assume in contrast that the bidders play an aggregate moment equilibrium or a moment equilibrium. In any case, the value distribution $F$ is identified if it is uniquely determined by the bid distribution and the assumption that the data is generated by a certain model.

Theorems~\ref{theorem:aggregate_moment_equilibrium} and \ref{theorem:individual_moment_equilibrium} show that there are equilibrium beliefs that induce strictly increasing bidding functions. The bidding functions depend on the bidder's value and on statistics that can be inferred from the bid distribution. The inverse minimax bidding functions then translate the bid distribution into a unique value distribution. Let $v^\AGG$ denote the inverse minimax bidding function when bidders have aggregate moment beliefs, i.e., $v^\AGG(\beta^\AGG(v|l^\star, m^\star, u^\star))=v$. Let $v^\IND$ denote the inverse minimax bidding function when bidders have individual moment beliefs.

The next theorem shows that the value distribution $F$ is non-parametrically identified by the bid distribution if the bid distribution is generated by the bidders playing an aggregate moment equilibrium. \cite{GPV} prove an analogous theorem when BNE is the underlying behavioral model.

\begin{theorem}
	Let $G$ be a distribution with support $[l,u]$ and expected winning bid $m$, i.e., $m=\int b d G^n(b)$. There exists a distribution of bidders' private values $F$ such that $G$ is the bid distribution of a separating aggregate moment equilibrium if and only if $l<m<u$. Moreover, when $F$ exists, it is unique with support $[\underline v,\overline v]$ and satisfies $F(v) = G(\beta^\AGG(v|l,m,u))$ for all $v\in[\underline v,\overline v]$.
	\label{thm:identification_aggregate_meq}
\end{theorem}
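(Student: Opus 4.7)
The plan is to construct $F$ explicitly by inverting the minimax bidding function $\beta^\AGG(\cdot|l,m,u)$ from Proposition~\ref{prop:aggregate-minimax-bdding-function}. Strict monotonicity and continuity (Proposition~\ref{prop:comparative_statics_aggregate_moment_belief}) will guarantee that $\beta^\AGG$ is a continuous bijection from some $[\underline v,\overline v]$ onto $[l,u]$, so the natural candidate value distribution is the pushforward $F(v):=G(\beta^\AGG(v|l,m,u))$.

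The necessity direction is quick: a separating equilibrium forbids $G$ from being a Dirac mass, so $l<u$, and since $m$ is the expectation of the maximum of $n$ i.i.d.\ draws from a non-degenerate distribution supported on $[l,u]$, we have $l<m<u$ strictly.

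For sufficiency, assume $l<m<u$. I would first pin down the support of $F$. Substituting $b=v=l$ into~\eqref{eq:agg_minimax_bid_low_value} makes both sides vanish (using $l<m$), giving $\underline v=l$. Setting $b=u$ in~\eqref{eq:agg_minimax_bid_high_value} and writing $\rho=\bigl((u-m)/(u-l)\bigr)^{(n-1)/n}\in(0,1)$ yields a unique solution $\overline v = u + \rho(u-l)/(1-\rho) > u$. With $F$ defined as above, strict monotonicity of $\beta^\AGG$ makes $F$ a valid CDF on $[\underline v,\overline v]$. To verify equilibrium, optimality of $\beta^\AGG$ is immediate from Proposition~\ref{prop:aggregate-minimax-bdding-function}, range consistency holds by construction of $\underline v$ and $\overline v$, and moment consistency follows by change of variables: if $V\sim F$, then $P(\beta^\AGG(V)\le b)=F((\beta^\AGG)^{-1}(b))=G(b)$ for $b\in[l,u]$, so the induced individual bid distribution equals $G$ and the highest of $n$ independent bids has distribution $G^n$ and mean $m$ by hypothesis.

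Uniqueness is then essentially automatic from Definition~\ref{def:aggregate_moment_equilibrium}: any value distribution that generates $G$ via a separating aggregate moment equilibrium must induce the beliefs $(l,m,u)$, hence employ the bidding function $\beta^\AGG(\cdot|l,m,u)$, and the relation $G=F\circ(\beta^\AGG)^{-1}$ forces $F(v)=G(\beta^\AGG(v|l,m,u))$. The main obstacle I anticipate is showing that $\beta^\AGG(\cdot|l,m,u)$ remains a homeomorphism across the kink at the cutoff $v=m+(m-l)\rho$ separating the two cases of Proposition~\ref{prop:aggregate-minimax-bdding-function}; this reduces to checking that the branches defined by~\eqref{eq:agg_minimax_bid_low_value} and~\eqref{eq:agg_minimax_bid_high_value} agree at the cutoff and are each strictly increasing, both of which are already guaranteed by Propositions~\ref{prop:aggregate-minimax-bdding-function} and~\ref{prop:comparative_statics_aggregate_moment_belief}.
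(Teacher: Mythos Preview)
Your proposal is correct and follows essentially the same route as the paper: construct $F$ as the pushforward of $G$ under the inverse of $\beta^\AGG(\cdot\mid l,m,u)$, then verify the three equilibrium conditions via the change of variables $b=\beta^\AGG(v)$, with necessity following from the non-degeneracy of $G$ in a separating equilibrium. Your treatment is in fact slightly more detailed than the paper's---you compute $\overline v$ explicitly, spell out the uniqueness argument, and flag the kink at the cutoff---but these are elaborations of the same argument rather than a different approach.
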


The theorem relies on the minimax bidding function $\beta^\AGG$ being strictly increasing in $v$ in a separating moment equilibrium (Proposition~\ref{prop:comparative_statics_aggregate_moment_belief}) because this allows transforming the value distribution to a bid distribution and vice versa. In a separating equilibrium, the unique type bidding $u$ is $\overline v$. The necessary condition for the bidding function to be strictly increasing is then $l<m<u$. 

The value distribution is non-parametrically identified from the observed bid distribution whenever $l<m<u$. This is a minimal prerequisite as it only requires that there is some dispersion in the bids. If all the mass were on one point, then one could not tell whether the data came from a point mass value distribution or a pooling equilibrium.


We highlight a difference to the Bayes-Nash equilibrium-based approach by GPV. In contrast to GPV, we do not require $G$ to be absolutely continuous. This is because a pure moment equilibrium exists for all value distributions whereas there is no guarantee that a pure Bayes-Nash equilibrium exists if the value distribution has mass points. We also do not need $G$ to admit a density as we do not consider first-order conditions.

The identification is non-parametric, which implies that the estimates are robust to possible misspecification of the underlying parametric class of distributions. Clearly, the identification also holds if one makes parametric assumptions.

The following theorem shows that the value distribution $F$ is non-parametrically identified by the bid distribution if one assumes that the bidders play a separating moment equilibrium as in Section~\ref{sec:individual_theory}.

\begin{theorem}
	Let $G$ be a distribution with support $[l,u]$ and expected value $\mu$, i.e., $\mu=\int b d G(b)$. There exists a distribution of bidders' private values $F$ such that $G$ is the bid distribution of a separating moment equilibrium if and only if $l<\mu<u$. Moreover, when $F$ exists, it is unique with support $[\underline v,\bar v]$ and satisfies $F(v) = G(\beta^\IND(v|l,m,u))$ for all $v\in[\underline v,\bar v]$.
	\label{thm:identification_individual_meq}
\end{theorem}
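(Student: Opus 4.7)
The plan is to parallel the proof of Theorem \ref{thm:identification_aggregate_meq}, exploiting that Proposition \ref{prop:individual-minimax-bdding-function} delivers a continuous minimax bidding function and Proposition \ref{prop:comparative_statics_individual_moment_belief} gives strict monotonicity in $v$. Necessity of $l<\mu<u$ is immediate: in a separating moment equilibrium, consistency of the range beliefs pins down $l=\inf\text{supp}(G)$ and $u=\sup\text{supp}(G)$, while separation forces strict inequality $l<u$ (the pooling equilibrium at $l=\mu=u$ is excluded by strict monotonicity of the bid). Any probability distribution whose support is the non-degenerate interval $[l,u]$ must have mean strictly between its endpoints, so $l<\mu<u$.

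For sufficiency, given $(l,\mu,u)$ with $l<\mu<u$, I set $\underline v:=l$ and take $\bar v$ to be the value determined by the separation condition from the proof of Theorem \ref{theorem:individual_moment_equilibrium} (that is, $\bar v=\hat v_2$ when $n=2$, and $\bar v=u+(u-\mu)(u-l)^{n-1}/((u-l)^{n-1}-(u-\mu)^{n-1})$ when $n\ge 3$). I then define $F(v):=G(\beta^\IND(v|l,\mu,u))$ on $[\underline v,\bar v]$. Because $\beta^\IND(\cdot|l,\mu,u)$ is continuous and strictly increasing and maps $[\underline v,\bar v]$ onto $[l,u]$, this $F$ is a bona fide distribution function with the stated support.

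It remains to verify the three conditions of Definition \ref{def:individual_moment_equilibrium} for $F$ with beliefs $(l,\mu,u)$ and strategy $\beta^\IND$. Optimality of $\beta^\IND$ is Proposition \ref{prop:individual-minimax-bdding-function}. Consistency of the range beliefs reduces to the two endpoint identities $\beta^\IND(\underline v|l,\mu,u)=l$ and $\beta^\IND(\bar v|l,\mu,u)=u$: the former follows from \eqref{eq:ind_minimax_bid_low_value} because $\tilde x_1(l)=l$ forces both sides to vanish at $b=l$, and the latter is the defining property of $\bar v$. Consistency of the moment belief is the key change-of-variables: since $F=G\circ\beta^\IND(\cdot|l,\mu,u)$, the pushforward of $F$ under $\beta^\IND$ is exactly $G$, hence
\begin{equation*}
\int_{\underline v}^{\bar v}\beta^\IND(v|l,\mu,u)\,dF(v)=\int_{l}^{u}b\,dG(b)=\mu.
\end{equation*}
Uniqueness then follows because strict monotonicity of $\beta^\IND(\cdot|l,\mu,u)$ forces $F(v)=G(\beta^\IND(v|l,\mu,u))$ pointwise.

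The main obstacle is the lower endpoint identity $\beta^\IND(l|l,\mu,u)=l$: one has to inspect the piecewise formula from Proposition \ref{prop:individual-minimax-bdding-function} and confirm that at $b=l$ the expression $\tilde x_1(l)=\max\{l,\min\{\mu,((n-1)l-u)/(n-2)\}\}$ collapses to $l$ (using $l\le u$), so that \eqref{eq:ind_minimax_bid_low_value} is solved by $b=l$. Once this endpoint fact is in hand, together with the analogous upper endpoint by construction of $\bar v$, the rest is routine bookkeeping about pushforward measures.
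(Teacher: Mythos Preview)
Your proposal is correct and follows essentially the same route as the paper, which in fact omits the proof entirely and merely states that it is analogous to the proof of Theorem~\ref{thm:identification_aggregate_meq}. You supply more detail than the paper does---in particular the explicit verification of the endpoint identities $\beta^\IND(l|l,\mu,u)=l$ and $\beta^\IND(\bar v|l,\mu,u)=u$ and the change-of-variables argument for the moment condition---all of which mirror the structure of the paper's proof of Theorem~\ref{thm:identification_aggregate_meq}.
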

We omit the statement of the proof as it is analogous to the proof of Theorem~\ref{thm:identification_aggregate_meq}.

\subsection{Non-parametric Estimation}
\label{subsec:estimation}

In practice, we observe a sample of the underlying bid distribution $G$. The sample can be structured in the sense of identifying the bids that come from the same auction. A structured sample $H$ takes the form of a $T\times n$ matrix, where $T$ is the number of observed auctions and $n$ is the number of bidders. A typical element (bid) of $H$ is denoted by $h_{tj}$, $1\le t\le T$ and $1\le j\le n$. An unstructured sample $h$ takes the form of a vector of bids. Abusing notation, the sample has length $T$ and we use $h_t$ to denote a typical element, $1\le t\le T$. We do not know which bids were winning or losing, we only assume that the bids in $h$ are i.i.d. draws from $G$. We assume that we see the number of bidders $n$ that generated sample $h$. One can immediately go from a structured sample $H$ to an unstructured sample $h$ by listing all rows of $H$ in one vector. One cannot translate an unstructured sample into a structured sample. 

We use the inverse minimax bidding function to estimate $F$ from the sample. All we need to do is estimate the players' beliefs. An immediate estimator for the lower bound belief $l$ is the minimum bid in the sample, i.e., $\hat l = \min_t h_t$. Similarly, an estimator for $u$ is the highest bid in the sample, that is, $\hat u = \max_t h_t$. \cite{donald-paarsch-2002} show that the sample extreme values are super-consistent estimators. In a structured sample, the estimator for the expected winning bid is the average winning bid, i.e., $\hat m = \frac{1}{T} \sum_{t=1}^T \max_j h_{tj}$. If the data does not identify the winning bids, then one can estimate $m$ by computing the expected value of the highest of $n$ independent draws from $h$. To do so, we compute the probability mass function implied by the sample $h$ and then compute the expectation of the distribution of the highest bid that only takes values in $h$. In the moment equilibrium with individual moment beliefs, one estimates $\mu$ by taking the sample mean, i.e., $\hat \mu = \frac{1}{T}\sum_t h_t$. For the bidders' belief about the number of player $n$, we use the observed number of bidders.

The two estimators are consistent given the data is generated by the respective equilibrium notion as the beliefs $(l,m,u)$ and $(l,\mu,u)$ are estimated consistently. 

The estimation is fully non-parametric. At a high level, there are two differences to BNE-based approaches. The first difference is that the beliefs take the form of statistics in the moment equilibrium approach, whereas they are about entire continuous distributions in the BNE approach. No kernel estimation is necessary in contrast to the BNE-based approach of \cite{GPV}. A second difference is that the beliefs are about the bid distribution in the moment equilibrium approach, so about something that is observed. BNE bases primitive beliefs on the unobserved value distribution.

In the empirical application, we use a total of four estimators (in addition to GPV). The first two estimators are those already discussed. They use the sample minimum and maximum as estimates for $l$ and $u$, respectively, but differ in the underlying game-theoretic model. While the first set of estimators is super-consistent, it may be sensitive to outliers.

The two other estimators follow a different approach to estimating $l$.\footnote{As the application considers procurement auctions, it is actually the highest bid $u$ that receives different treatment. Note that in procurement auctions the highest bid $u$ takes the role of the lowest bid $l$ in standard bidders-as-buyers auctions.} In a parametric setting, \cite{donald-paarsch-2002} use an estimator for the lower bound of the support that is higher than the sample minimum. As we estimate the value distribution non-parametrically, we cannot use their estimator. Instead, we estimate the minimum bid after removing outliers. Specifically, we use the interquartile range so that outliers are bids below $Q_1 - k (Q_3-Q_1)$, where $Q_1$ and $Q_3$ are the lower and upper quartiles, respectively, and $k=1.5$ \citep{tukey}. The estimator for $l$ is then the minimum sample bid that is higher than $Q_1 - k (Q_3-Q_1)$. To minimize the effect of removing outliers, we estimate the belief $\mu$ before removing the outliers. Note that we only change the estimator for $l$. Our rationale is that (potentially) winning bids are probably more meaningful than losing bids. Sample extrema are not sensitive to outliers, which may be obtained by mistakes in the data collection, the treatment of covariates, or just noisy bidding. By focusing on low bids we implicitly assume that these issues matter more at the bottom of the distribution. In particular, noisy bidding arguably matters more for losing than for winning bids.


\section{Structural Analysis of Highway Procurement Auctions}

We now use the moment equilibrium approach to estimate the distribution of private cost in highway procurement auctions. The section first introduces the data and then discusses how we deal with the observed and unobserved auction heterogeneity. We then estimate the cost distribution and assess the estimates in out-of-sample predictions. Our benchmark is the estimator based on the BNE-based approach of \cite{GPV}. Appendix~\ref{app:empirical} shows how the theory has to be adapted for procurement auctions.

\subsection{Data}

We analyze procurement auctions of California's Department of Transportation for highway paving contracts from 1999 to 2005. The data set was originally prepared and analyzed by \citet*{BHT-AER-2014} (BHT).\footnote{I thank \citeauthor*{BHT-AER-2014} for making their data publicly available.}$^{,}$\footnote{The only minor difference is that we correct the number of bidders when the variable giving the number of bidders (NBIDDERS) does not match the number of recorded bids. We adjust the number of bidders to match the number of recorded bids. There are two such auctions.}

The timing of these auctions is that government engineers first prepare information on the project and a cost estimate. The projects typically involve different work and cost items. The government engineers specify the expected quantities for each item, e.g., the tons of asphalt needed. Interested contractors can submit bids for these individual items until a certain set date. The bidder with the lowest total bid wins the contract. We assume that bidders directly submit the total bid. The total cost of undertaking the project is assumed to be privately known to the bidder at the time of bidding. 

In the data, 347 firms submitted a total of 3,661 bids in 819 auctions. There is substantial heterogeneity in these firms. BHT report that the 20 largest firms won 73.4\% of the total contract dollars awarded. There is a sizable difference even within the top 20 contractors, as only two companies participated in more than 10\% of the auctions. More than half of the firms never won a paving contract in the data.

BHT deal with the bidder heterogeneity by classifying bidders into two categories: fringe and non-fringe bidders. A fringe firm is a contractor that won less than 1\% of the value of contracts awarded. Using this definition, 20 firms are non-fringe, while 327 firms are classified as fringe. Similar to BHT, we assume that fringe and non-fringe firms have different cost distributions.

The data contains the bids and many covariates, of which we use the following. The engineer's estimate (ENG) is known to the bidders before bidding and reflects an estimate of a ``fair and reasonable price.'' The variable DIST is the firm's distance to the project's location. Distances are measured in 100 miles. As firms probably have capacity constraints, BHT provide with UTIL a measure of the estimated capacity utilization rate. The capacity utilization rate is the ratio of backlog and capacity, where capacity is the maximum backlog in the sample. Assuming a linear depreciation, backlog is the remaining dollar value of previously won projects in the data. BHT also use RDIST and RUTIL to provide the minimum of the rivals' distance and capacity utilization rate, respectively. 

The level of the bids is affected by factors that are unique to the auction, such as the engineer's estimate. We control for the observed and unobserved auction heterogeneity by homogenizing the bids with a random effects regression. A similar homogenization procedure was used, for example, by \cite{hong-haile-shum}, \cite{Krasnokutskaya-20110-restud}, and \cite{BHT-AER-2014}. 

The theoretical underpinning of the homogenization regression is the following proposition that shows the moment equilibrium scales analogously with affine transformations of the values. \cite{athey-haile-2007-handbook} prove an analogous result for BNE. More formally, and in the framing of bidders-as-buyers auctions, we assume that the bidders' values are an affine function of their private value (signal), i.e., their willingness-to-pay is $\alpha(v;\mathbf y)=p(\mathbf y)v+q(\mathbf y)$, where $\mathbf y$ is the vector of observables and $p$ and $q$ are functions.

\begin{proposition}
	If $\beta^\IND$ is the minimax bidding function for $l_0,\mu_0,u_0$, and values $v$, then $\beta^\IND_\alpha$,
	\begin{equation*}
		\beta^\IND_\alpha(\alpha(v;\mathbf y)|\alpha(l_0;\mathbf y),\alpha(\mu_0;\mathbf y), \alpha(u_0;\mathbf y)) = p(\mathbf y)\beta^\IND(v|l_0,\mu_0,u_0) + q(\mathbf y),
	\end{equation*}
	is the minimax bidding function for $\alpha(l_0;\mathbf y),\alpha(\mu_0;\mathbf y),\alpha(u_0;\mathbf y),$ and  values $\alpha(v;\mathbf y)$.

	Let values and covariates $\mathbf y$ be independent. If $(\beta^\IND,l_0,\mu_0,u_0)$ is a moment equilibrium for valuations $v$ distributed according to distribution $F$, then $(\beta^\IND_\alpha, \alpha(l_0;\mathbf y),$ $\alpha(\mu_0;\mathbf y), \alpha(u_0;\mathbf y))$ is a moment equilibrium for valuations $\alpha(v;\mathbf y)$ distributed according to distribution $F_\alpha$, $F_\alpha(\alpha(v;\mathbf y))=F(v)$ for all $v\in[\underline v,\bar v]$.

	Analogous statements hold for minimax bidding functions with aggregate moment beliefs and aggregate moment equilibria.
	\label{prop:affine}
\end{proposition}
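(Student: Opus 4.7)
The plan is to show that the entire minimax problem is covariant under affine transformations of bids and values, and then to verify each equilibrium condition separately.

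For fixed covariates $\mathbf{y}$, write $p=p(\mathbf{y})>0$ and $q=q(\mathbf{y})$. The first step is to introduce the pushforward $\Phi_\alpha$ sending a distribution $B$ on bids to the distribution of $pb+q$ when $b\sim B$. I would verify that $\Phi_\alpha$ restricts to a bijection $\mathcal{B}_{ij}(l_0,\mu_0,u_0)\to \mathcal{B}_{ij}(\alpha(l_0;\mathbf{y}),\alpha(\mu_0;\mathbf{y}),\alpha(u_0;\mathbf{y}))$: the support condition transfers because $\Phi_\alpha B$ is supported on $[pl_0+q,pu_0+q]$, and linearity of expectation gives $\int x\,d(\Phi_\alpha B)(x)=p\mu_0+q$. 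Independence of the marginals is preserved, so the same holds at the level of joint distributions $B_{-i}$.

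The second step is to show that payoffs and losses scale by $p$. With $\tilde v=\alpha(v;\mathbf{y})$ and $\tilde b=pb+q$, we have $\tilde v-\tilde b=p(v-b)$, and the winning probability is preserved since $(\Phi_\alpha B_j)(\tilde b)=B_j(b)$. Hence $U_i(\tilde b,\Phi_\alpha B_{-i}\mid \tilde v)=p\,U_i(b,B_{-i}\mid v)$, and taking a sup over $\tilde b$ on both sides (valid because $\tilde b\mapsto b$ is a bijection and $p>0$), $\lambda_i(\tilde b,\Phi_\alpha B_{-i}\mid \tilde v)=p\,\lambda_i(b,B_{-i}\mid v)$. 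Because $\Phi_\alpha$ bijects the belief sets and $p>0$, the inner supremum over $B_{-i}$ and the outer infimum over bids both transform covariantly, and so the arg-inf transforms by $b\mapsto pb+q$. This gives the functional equation for $\beta^{\IND}_\alpha$.

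The third step checks the three conditions of Definition~\ref{def:individual_moment_equilibrium} for the transformed model, using that values and $\mathbf{y}$ are independent so that, conditional on $\mathbf{y}$, the distribution $F_\alpha$ of $\alpha(v;\mathbf{y})$ is simply $F$ composed with the inverse affine map. Condition (i) is Step 2. For condition (ii), linearity of expectation yields $\int \beta^{\IND}_\alpha(\alpha(v;\mathbf{y}))\,dF_\alpha=p\int\beta^{\IND}(v)\,dF(v)+q=p\mu_0+q=\alpha(\mu_0;\mathbf{y})$. Condition (iii) follows from $\inf_v[p\,\beta^{\IND}(v)+q]=p\,l_0+q$ and the analogous identity for the sup.

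For the aggregate analogue, every step goes through verbatim once one observes that $\max_j(pb_j+q)=p\max_j b_j+q$, so the expected winning bid also scales as $m_0\mapsto pm_0+q$ and $\mathcal{B}_i(l_0,m_0,u_0)$ maps bijectively to $\mathcal{B}_i(\alpha(l_0;\mathbf{y}),\alpha(m_0;\mathbf{y}),\alpha(u_0;\mathbf{y}))$ under $\Phi_\alpha$. The main obstacle is the bookkeeping in Step 1: one must confirm that $\Phi_\alpha$ really is a bijection of the belief sets and that independence across bidders is preserved, because this is what lets the sup over $B_{-i}$ commute with the affine change of variables. Once that is in place, the rest is linearity of expectation and the positivity of $p(\mathbf{y})$.
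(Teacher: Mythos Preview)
Your proof is correct and more conceptual than the paper's. The paper proceeds by direct substitution: it plugs the affine transformations into the characterizing equations \eqref{eq:ind_minimax_bid_low_value} and \eqref{eq:ind_minimax_bid_high_value} for the minimax bid and observes that $p$ and $q$ cancel, then checks the moment condition by a one-line integral computation. Your argument instead works at the level of the underlying minimax problem, showing that the belief sets, payoffs, and losses are all affine-covariant, so the argmin must transform accordingly. This is a genuinely different route: the paper's proof depends on having the explicit implicit equations from Proposition~\ref{prop:individual-minimax-bdding-function} already in hand, whereas yours does not use that characterization at all and would apply equally well to other loss-based decision rules with the same structure. The paper's approach is shorter given the prior work; yours explains \emph{why} the result holds and would survive changes to the bidding-function formula.
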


Relying on Proposition~\ref{prop:affine}, we homogenize the data by first running the random effects regression
\begin{align}
	\frac{b_{ij}}{\text{ENG}_{ij}} = \alpha_1 + \beta_i + \sum_{k=2}^9 \alpha_k \chi_{\{n_i=k\}} &+ \alpha_{10} \text{FRINGE}_{ij}+ \alpha_{11} \text{DIST}_{ij} + \alpha_{12}\text{UTIL}_{ij}\nonumber\\& + \alpha_{13}\text{RUTIL}_{ij}+ \alpha_{14}\text{RDIST}_{ij}  + \varepsilon_{ij},\label{eq:homogenization_reg}
\end{align}
where $i$ identifies the auction and $j=1,2,\dots,n_i$ the identity of the bidder in auction $i$. The variable $\chi_{\{n_i=k\}}$ is a dummy variable for the number of bidders. The variable is 0 if $n_i\ge 10$. Note that there are no auctions with one bidder. Intuitively, the coefficient estimates should be positive and decreasing as one would expect that more competition in a procurement auction in the form of more participating firms brings bids down. The variable $\text{FRINGE}$ is a dummy variable that is 1 if firm $j$ is classified as a fringe firm and 0 otherwise. Following \cite{athey-haile-2007-handbook}, the homogenized bid $\hat b_{ij}$ is then $\hat \varepsilon_{ij}+\hat \alpha_1+\hat \alpha_{n_i}$, where $\hat\alpha_{n_i}=0$ for $n_i\ge 10$. 

Table~\ref{tab:re} presents the estimated coefficients. One can see that the effect of the number of bidders on the mean bid is not always as theory would predict. Bids increase when moving from four to five bidders, which is counterintuitive. The structural estimation only uses data with at most seven bidders per auction. 

\begin{table}[ht]
\centering
\caption{Homogenization random effects regression \eqref{eq:homogenization_reg}: coefficient estimates}
\begin{tabular}{rrrrrrrr}
  \toprule
  \toprule
  Intercept & FRINGE & DIST & UTIL & RDIST & RUTIL  & & \\ 
  0.967 	& 0.040  & 0.000& 0.007& 0.000 & -0.042 & &  \\ 
  \midrule
  $n=2$ 	&$n=3$ 	& $n=4$ 	& $n=5$ 	& $n=6$ 	& $n=7$ 	& $n=8$ 	& $n=9$\\
  0.117		&0.071  & 0.058 	& 0.081 	& 0.009 	& 0.012 	& -0.066 	& 0.019 \\ 
  \midrule
  \# of bids&3661&\\
   \bottomrule
   \bottomrule
\end{tabular}
\label{tab:re}
\end{table}

We assume that the costs of fringe and non-fringe bidders are drawn from different distributions. We focus on symmetric auctions for simplicity, and thus divide the data into two data sets. In one data set all bidders in all auctions are fringe firms, while in the other set in all auctions all bidders are non-fringe firms. Table~\ref{tab:summary_homogenized_bids} summarizes the homogenized data. The number of observed bids lies between 10 and 108 for the various categories. For the non-fringe data, there are auctions with two, three, four, and five bidders. For the fringe data, we only use the auctions with at most seven bidders. With the exception of five bidders, the mean homogenized bid decreases in the number of bidders. Next, consider the lowest bids $\hat l$ in the sample. Theory predicts that the lowest bid $l$ decreases as the auction becomes more competitive (higher $n$). The sample does not reflect this property; we take this into account in the estimation. Theory also predicts that the highest bid $u$ equals the highest cost $\overline c$ for any number of bidders ($n \ge 2)$. The sample also exhibits variation in the highest observed bids. Moreover, in auctions with fringe bidders there are, for $n=3$ and $n=5$, bids that are significantly higher than the other maximal bids. As discussed in Section~\ref{subsec:estimation}, we take these outliers into account in the estimation. 


\begin{table}[ht]
\centering
\caption{Summary of the homogenized bids}
\begin{tabular}{llrrrrr}
  \toprule
  \toprule
  && $\hat l$ & $\hat m$ & $\hat \mu$ & $\hat u$ & \# of bids \\
  \cline{3-7}
  \multicolumn{7}{l}{\bfseries{Auctions with exclusively non-fringe bidders}}\vspace{2pt}\\
  &$n = 2$ & 0.93 & 1.02 & 1.07 & 1.29 & 108 \\ 
  &$n = 3$ & 0.85 & 0.96 & 1.05 & 1.38 & 90  \\ 
  &$n = 4$ & 0.90 & 0.93 & 1.02 & 1.26 & 20  \\ 
  &$n = 5$ & 0.94 & 0.94 & 1.04 & 1.19 & 10 \\ 
  \multicolumn{7}{l}{\bfseries{Auctions with exclusively fringe bidders}}\vspace{2pt}\\
  &$n = 2$ & 0.91 & 1.03 & 1.10 & 1.36 & 48 \\ 
  &$n = 3$ & 0.64 & 0.91 & 1.05 & 2.01 & 54 \\ 
  &$n = 4$ & 0.77 & 0.90 & 1.03 & 1.53 & 92 \\ 
  &$n = 5$ & 0.80 & 0.90 & 1.05 & 1.97 & 70 \\ 
  &$n = 6$ & 0.62 & 0.81 & 0.97 & 1.28 & 48 \\ 
  &$n = 7$ & 0.75 & 0.82 & 0.97 & 1.46 & 77 \\ 
  \bottomrule
  \bottomrule
\end{tabular}     
  \begin{minipage}{.85\textwidth}
    \footnotesize
    \emph{Notes: The minimum (maximum) bid is denoted by $\hat l$ ($\hat u$). The average winning bid is $\hat m$. The average bid is $\hat \mu$.}
  \end{minipage}
\label{tab:summary_homogenized_bids}
\end{table}

\subsection{Estimation and Out-of-sample Predictions}



We now come to the estimation of the cost distribution and the out-of-sample assessment of the estimates. We take the following approach. Suppose we want to predict the bid distribution of auctions with $n$ bidders. We take the data from auctions with $n'$ bidders, $n'\neq n$. We estimate the cost distribution for each $n'$. We pool the cost estimates across $n'$ to estimate the cost distribution and compute the counterfactual equilibrium had there been $n$ bidders. We compare the counterfactual equilibrium distribution to the sample with $n$ bidders (without removing outliers). 

We first estimate the cost distribution with each of the four non-parametric estimators based on the moment equilibrium approach. As discussed in Section~\ref{subsec:estimation}, these estimators differ in the underlying game-theoretic model (aggregate or individual moment equilibrium) and the treatment of the range beliefs (with or without removing outliers). Let $(b_t)_{t=1}^{T_n}$ denote the sample of bids from auctions with $n$ bidders, where $T_n$ denotes the sample size. To be clear, the $n$ bidders are all either fringe or non-fringe firms and the sample size differs when removing and not removing outliers. Let $\hat l(n) = \min_{t:1\le t \le T_n} b_t$ denote the sample minimum and $\hat u(n)$ the sample maximum. 

We estimate the moment and range beliefs when we want to predict the bid distribution with $n$ bidders as follows. In the treatment of the auctions with $n'$ bidders, $n'\neq n$, we take the estimates for $m$ and $\mu$, denoted by $\hat m_{n'}$ and $\hat \mu_{n'}$, from Table~\ref{tab:summary_homogenized_bids}. The estimate for $l$ is the lowest $\hat l$ for at most $n'$ bidders. Formally, we estimate $l$ by $\hat l_{n'} = \min_{k\in \{2,3,\dots,n'\}\setminus\{n\}}\hat l(k)$. The idea behind this approach is that this estimate for $l$ is consistent with theory as it decreases in $n'$. The estimate for $u$ is the pooled maximum of the auctions without $n$ bidders, i.e., $\hat u_{n'} = \min_{k\neq n} \hat u(k)$. Here, we do not restrict attention to auctions with at most $n'$ bidders as the theoretical upper bound is the same for any $n$. By pooling across the different auctions we can estimate the support more reliably. In the case of estimating $u$, we distinguish the maximum before and after removing outliers as based on the interquartile range. We compute the interquartile range with the pooled bids, where we discard bids above the cutoff $Q_3 + 1.5 (Q_3 - Q_1)$.

We plug the estimated beliefs into the inverse minimax bidding functions $c^\AGG$ and $c^\IND$ as described in Appendix~\ref{app:empirical}. The appropriate inverse minimax bidding function maps the sample $(b_t)_{t=1}^{T_{n'}}$ into a sample of pseudo costs $(c_t)_{t=1}^{T_{n'}}$, where $c_t = c^\AGG(b_t|\hat l_{n'}, \hat m_{n'}, \hat u_{n'})$ or $c_t = c^\IND(b_t|\hat l_{n'}, \hat m_{n'}, \hat u_{n'})$. 

We use auctions with at most four bidders for auctions in which all bidders are non-fringe firms and auctions with at most seven bidders in auctions in which all bidders are fringe firms. We limit the number of bidders per auction due to the sample size. For example, we only observe two auctions in which there are five non-fringe bidders and no fringe bidders. Let $\bar n^{NF}=4$ and $\bar n^F=7$ be the respective maxima of the number of bidders for auctions with non-fringe and fringe bidders. We estimate the pseudo costs for each $n \in \{2,3,\dots, \bar n^\theta\}$ and $\theta\in \{NF,F\}$. 

We assess the estimates out-of-sample. Let $\theta\in \{NF,F\}$ and $n\in \{2,3,\dots,\bar n^\theta\}$. Suppose we want to predict the bid distribution for auctions with $n$ bidders. We pool the pseudo costs for all $n'\in \{2,3,\dots,\bar n^\theta\}\setminus\{n\}$ to get an estimate of the cost distribution. We then compute the bid distributions predicted by the respective moment equilibrium had there been $n$ bidders. Let $T$ denote the number of pseudo costs in the estimated cost distribution, i.e., $T=\sum_{n'\neq n, \,2 \le n'\le \bar n^\theta} T_{n'}$. Let $(c_t)_{t=1}^T$ denote the pooled pseudo costs. The theory implies that the unique consistent upper bound belief $u^\star$ is the highest possible cost, i.e., $u^\star_{n\theta} = \max c_t$. For the moment equilibrium with beliefs about individual behavior, we find a fixed point of the function
\begin{equation*}
    \psi^\IND(l,\mu)=\left(\beta^\IND(\min c_t|l,\mu,u^\star_{n\theta}), \frac{1}{T} \sum_{t=1}^T\beta^\IND(c_t|l,\mu,u^\star_{n\theta})\right).
\end{equation*}
The function returns for each $(l,\mu)$ the corresponding minimum bid and expected bid given the sample of pseudo costs. Note that the computation of the counterfactual is fully non-parametric; no kernel estimation is used. The fixed point of the function $(l^\star_{n\theta},\mu^\star_{n\theta})$ forms the basis of the out-of-sample prediction because we compute $\hat b_t^\IND = \beta^\IND(c_t|l^\star_{n\theta}, \mu^\star_{n\theta}, u^\star_{n\theta})$ for all $c_t$, $1\le t\le T$. 

In the case of aggregate moment equilibrium, we also use $u^\star_{n\theta}=\max c_t$. To compute the fixed point of the sample analogue of $\varphi$, we need to estimate the distribution of the lowest bid among $n$ bidders. We could use the pseudo cost sample to estimate the density and use it, together with the empirical c.d.f. of the cost distribution, to compute the estimate of the density of the distribution of the lowest of $n$ independent cost draws. Instead, we implemented a fully non-parametric and computationally less demanding approach. We compute the probability mass function of the distribution of the minimum of $n$ independent draws from the discrete distribution of $(c_t)_{t=1}^{T_n}$. Let $p_1 = \mathbb P(C < x)$, $p_2= \mathbb P(C=x)$, and $p_3=\mathbb P(C > x)$. Then, the probability that the minimum of $n$ independent draws is $x$ is
\begin{equation*}
    \mathbb P(\min \{C_1,C_2,\dots,C_n\}=x) = \sum_{j=0}^{n-1} \binom{n}{j} (p_3^j(p_1 + p_2)^{n-j} - (p_2 + p_3)^jp_1^{n-j}).
\end{equation*}
Let $p^\text{min}(x)=\mathbb P(\min \{C_1,C_2,\dots,C_n\}=x)$.
We then find a fixed point $(l^\star_{n\theta},m^\star_{n\theta})$ of the function
\begin{equation*}
		\varphi^\AGG(l,m) = \left(\beta^\AGG(\min c_t|l,m,u^\star_{n\theta}), \frac{1}{T}\sum_{t=1}^T p^\text{min}(c_t)\beta^\AGG(c_t|l,m,u^\star_{n\theta}) \right).
\end{equation*}
The predicted bid distribution is then $(\hat b_t^\AGG)_{t=1}^T$, where $\hat b_t^\AGG = \beta^\AGG(c_t|l^\star_{n\theta}, m^\star_{n\theta}, u^\star_{n\theta})$ for all pseudo costs $c_t$.

The benchmark for the estimation of the cost distribution is the estimator proposed by \cite{GPV}. They assume that bidders play the BNE. Each bid is understood to be the best response to the empirical bid distribution. The estimator depends on the number of bidders, as well as the density and cumulative distribution function of the bid distribution. The estimator for the c.d.f. is the empirical c.d.f. GPV propose a density estimation for the density of the bid distribution. We use the triweight kernel and parameters that GPV use in their Section 2.4. Bidders best respond to the empirical bid distribution $G$, with density $g$, which clearly depends on whether the auction consists of fringe or non-fringe bidders. The first-order condition of the maximization of expected utility leads to
\begin{equation}
	c_{ij} = b_{ij} - \frac{1}{n-1} \frac{1-G(b_{ij})}{g(b_{ij})}.
	\label{eq:GPV}
\end{equation}
One can use the first-order condition to transform a sample of bids into a sample of pseudo costs. For a set of pseudo costs, we estimate the density of the cost distribution with a kernel density estimation. We only apply the BNE estimator to the entire sample. Note that the kernel density estimation removes extreme observations to correct for the bias on the boundaries of the support.

For the BNE, we plug in the empirical c.d.f. $\hat F_{(n)}$ of all estimated pseudo costs from all bids from auctions with $n'\neq n$ bidders. The closed-form solution for a BNE of symmetric procurement auctions is
\begin{equation}
	\beta^{BNE}(c_{ij}) = c_{ij} + \frac{\int_c^{\bar c}(1-F(x))^{n-1}dx}{(1-F(c_{ij}))^{n-1}}.
\end{equation}

Table~\ref{tab:summary-out-of-sample-statistics} reports the mean bid and the standard deviation (SD) of the sample bid distribution and the predictions of the various theories and estimators. The table suggests that aggregate moment equilibrium predicts the bid distributions rather accurately in auctions in which the bidders are large firms. In auctions in which the bidders are small firms, it performs less well. 

\begin{table}
  \begin{center}
    \caption{Summary statistics in the out-of-sample predictions}\label{tab:summary-out-of-sample-statistics}

    \begin{tabular}{lrrrrrrrr}
      \toprule
      \toprule
      & Mean & SD && Mean & SD && Mean & SD \\ 
      \multicolumn{9}{l}{\bfseries{Auctions with non-fringe bidders}}\\
      & \multicolumn{2}{c}{$n=2$}&& \multicolumn{2}{c}{$n=3$}&&\multicolumn{2}{c}{$n=4$}\\
      \cline{2-3}\cline{5-6}\cline{8-9}
        ~~Sample      & 1.071 & 0.079 && 1.046 & 0.090 && 1.018 & 0.096 \\ 
        ~~AGG         & 1.062 & 0.073 && 1.059 & 0.088 && 1.046 & 0.097 \\ 
        ~~IND         & 1.106 & 0.074 && 1.027 & 0.096 && 1.012 & 0.087 \\ 
        ~~BNE         & 1.082 & 0.062 && 1.027 & 0.066 && 1.019 & 0.080 \\ 
        ~~AGG Out.    & 1.053 & 0.065 && 1.050 & 0.079 && 1.043 & 0.087 \\ 
        ~~IND Out.    & 1.078 & 0.065 && 1.021 & 0.090 && 1.016 & 0.087 \\ 
      \midrule
      \midrule
      \multicolumn{9}{l}{\bfseries{Auctions with fringe bidders}}\\
      & \multicolumn{2}{c}{$n=2$}&& \multicolumn{2}{c}{$n=3$}&&\multicolumn{2}{c}{$n=4$}\\
      \cline{2-3}\cline{5-6}\cline{8-9}
        ~~Sample      & 1.098 & 0.100 && 1.049 & 0.219 && 1.032 & 0.144 \\ 
        ~~AGG         & 1.111 & 0.121 && 1.066 & 0.128 && 1.066 & 0.145 \\ 
        ~~IND         & 1.241 & 0.143 && 1.067 & 0.156 && 1.018 & 0.158 \\ 
        ~~BNE         & 1.104 & 0.109 && 1.045 & 0.101 && 1.023 & 0.116 \\ 
        ~~AGG Out.    & 1.035 & 0.089 && 1.028 & 0.111 && 1.022 & 0.116 \\ 
        ~~IND Out.    & 1.086 & 0.089 && 1.043 & 0.123 && 1.013 & 0.129 \\ 
        & \multicolumn{2}{c}{$n=5$}&& \multicolumn{2}{c}{$n=6$}&&\multicolumn{2}{c}{$n=7$}\\
        \cline{2-3}\cline{5-6}\cline{8-9}
        ~~Sample      & 1.050 & 0.165 && 0.969 & 0.120 && 0.973 & 0.133 \\ 
        ~~AGG         & 1.050 & 0.145 && 1.058 & 0.155 && 1.054 & 0.161 \\ 
        ~~IND         & 0.993 & 0.145 && 0.998 & 0.150 && 0.993 & 0.148 \\ 
        ~~BNE         & 1.003 & 0.122 && 1.011 & 0.126 && 1.004 & 0.129 \\ 
        ~~AGG Out.    & 1.014 & 0.120 && 1.024 & 0.125 && 1.026 & 0.129 \\ 
        ~~IND Out.    & 0.991 & 0.127 && 0.994 & 0.128 && 0.990 & 0.127 \\ 
        \bottomrule
        \bottomrule
    \end{tabular}
  \end{center}
    \begin{minipage}{1\textwidth}
    \footnotesize
    \emph{Notes: AGG Out. and IND Out. refer to the predictions based on aggregate and individual moment equilibrium after removing outliers in the estimation.}
  \end{minipage}
\end{table}

To better interpret Table~\ref{tab:summary-out-of-sample-statistics}, we compute the difference between the mean and the standard deviation. Specifically, the ``moment distance'' (MD) is the distance between two vectors of the mean and the standard deviation. Let $EQ\in\{\AGG, \IND, \text{BNE}\}$ denote the type of equilibrium. Let $(\hat m^{EQ}_n,\hat{s}^{EQ}_n)$ be the estimated mean and standard deviation of the predicted equilibrium bid distribution and $(\hat m_n,\hat{s}_n)$ the sample mean and standard deviation of the bid distribution with $n$ bidders. The statistic $MD$ is formally defined as
\begin{equation*}
  MD = \sqrt{(\hat m^{EQ}_n-\hat m_n)^2 +(\hat{s}^{EQ}_n-\hat{s}_n)^2}.
\end{equation*}
We also report a second criterion that compares not only the first two moments but the entire cumulative distribution functions. Specifically, let $L^1$ denote the distance between the respective estimated equilibrium bid distribution $\hat G^{EQ}_{n}$ and the empirical bid distribution $\hat G_{n}$ as given by
\begin{equation*}
  L^1 = {\int |\hat G_{n}^{EQ}(x)-\hat G_n(x)| dx}.
\end{equation*}
We also aggregate the two distances across the different $n$. We do so by computing the weighted average, where the weights are proportional to the number of observations in Table~\ref{tab:summary_homogenized_bids}.

\begin{table}[ht]
\centering
\caption{Distances between the sample and predicted bid distribution}\label{tab:metrics}
\begin{tabular}{lrrrrrrrrrrr}
  \toprule
  \toprule
 	& MD & $L^1$ && MD & $L^1$ && MD & $L^1$ &&MD&$L^1$\\ 
    		\multicolumn{9}{l}{\bfseries{Auctions with non-fringe bidders}}\\
  & \multicolumn{2}{c}{$n=2$}&& \multicolumn{2}{c}{$n=3$}&&\multicolumn{2}{c}{$n=4$}&&\multicolumn{2}{r}{Weighted AVG}\\
 		\cline{2-3}\cline{5-6}\cline{8-9}\cline{11-12}
  ~~AGG      & 0.011 & 0.019 && 0.013 & 0.016 && 0.028 & 0.035 && 0.013 & 0.019 \\ 
  ~~IND      & 0.035 & 0.035 && 0.020 & 0.024 && 0.011 & 0.018 && 0.027 & 0.029 \\ 
  ~~BNE      & 0.020 & 0.018 && 0.030 & 0.022 && 0.017 & 0.016 && 0.024 & 0.019 \\ 
  ~~AGG Out. & 0.023 & 0.022 && 0.012 & 0.013 && 0.026 & 0.034 && 0.019 & 0.019 \\ 
  ~~IND Out. & 0.016 & 0.017 && 0.024 & 0.026 && 0.009 & 0.015 && 0.019 & 0.021 \\ 
  \midrule
  \midrule
	\multicolumn{9}{l}{\bfseries{Auctions with fringe bidders}}\\
	  & \multicolumn{2}{c}{$n=2$}&& \multicolumn{2}{c}{$n=3$}&&\multicolumn{2}{c}{$n=4$}&&&\\
	   	\cline{2-3}\cline{5-6}\cline{8-9}
	~~AGG       & 0.025 & 0.033 && 0.093 & 0.064 && 0.034 & 0.045  &&  &  \\ 
  ~~IND       & 0.149 & 0.144 && 0.066 & 0.053 && 0.020 & 0.022  &&  &  \\ 
  ~~BNE	      & 0.011 & 0.016 && 0.118 & 0.059 && 0.029 & 0.021  &&  &  \\ 
  ~~AGG Out.  & 0.064 & 0.064 && 0.110 & 0.048 && 0.029 & 0.025  &&  &  \\ 
  ~~IND Out.  & 0.017 & 0.015 && 0.096 & 0.054 && 0.024 & 0.026  &&  &  \\ 
    & \multicolumn{2}{c}{$n=5$}&& \multicolumn{2}{c}{$n=6$}&&\multicolumn{2}{c}{$n=7$}&&\multicolumn{2}{r}{Weighted AVG}\\
     	\cline{2-3}\cline{5-6}\cline{8-9}\cline{11-12}
  ~~AGG 				& 0.020 & 0.019 && 0.096 & 0.090 && 0.085 & 0.084 && 0.056 & 0.055 \\  
  ~~IND 				& 0.061 & 0.059 && 0.042 & 0.031 && 0.024 & 0.031 && 0.053 & 0.051 \\  
  ~~BNE 				& 0.064 & 0.047 && 0.042 & 0.042 && 0.030 & 0.036 && 0.047 & 0.036 \\  
  ~~AGG Out.    & 0.058 & 0.036 && 0.056 & 0.056 && 0.052 & 0.061 && 0.058 & 0.046 \\  
  ~~IND Out.    & 0.071 & 0.060 && 0.026 & 0.026 && 0.017 & 0.030 && 0.040 & 0.035 \\  
   \bottomrule
   \bottomrule
\end{tabular}

  \begin{minipage}{1\textwidth}
    \footnotesize
    \emph{Notes: AGG Out. and IND Out. refer to the predictions based on aggregate and individual moment equilibrium after removing outliers in the estimation. MD is the Euclidean distance between the sample mean bid and standard deviation and the predicted mean and standard deviation. $L^1$ is the $L^1$ norm between the sample and the predicted bid distribution. Weighted AVG is the average distance weighted by the number of observations in the sample.}
  \end{minipage}
\end{table}

Table~\ref{tab:metrics} reports the two notions of distance: the distance in the first two moments (MD) and the $L^1$ norm. In auctions with non-fringe firms, aggregate moment equilibrium without removing outliers leads to the smallest distances overall. In terms of the (weighted average of the) MD metric, aggregate moment equilibrium performs clearly better than the other theories. There is not much difference between the theories in terms of the $L^1$ norm. Individual moment equilibrium benefits from removing the outliers. In auctions with small firms, individual moment equilibrium after removing the outliers performs best overall. Note that this is not the first empirical study that finds that the behavior of small and large firms can be rationalized by different models of behavior \citep{10.1257/aer.20172015}. With small firms, both types of moment equilibrium benefit from removing outliers. This suggests that it is the small and not the large firms that submit more noisy bids at the top of the distribution. Overall, moment equilibrium performs well in the data and is not outperformed by Bayes-Nash equilibrium.

\section{Conclusion}

The paper proposes a new way of modeling bidders' beliefs in first-price auctions. The idea is that the bidders' beliefs are characterized by statistics of the bid distribution that are typically observable, such as the expected winning bid or the expected bid. The idea that beliefs are characterized by observables can be taken to other settings. For example, firms might not know the distribution of competitors' costs and inventories but might only observe their prices. In a job search, a worker might know the average wage but not the entire wage distribution. 

In our setting, bidders have a belief about the first moment of the distribution of opponents' (aggregate) actions. One can generalize the single belief to a finite number of moments. Intuitively, the model would then nest Bayes-Nash equilibrium, which is as if bidders have an infinite sequence of moment beliefs, as a limiting case. It would be interesting to formalize the relation between moment equilibrium and Bayes-Nash equilibrium for a general class of games. Moreover, assuming that beliefs are characterized by data from past auctions, which moments of the bid distribution would an auctioneer want to reveal?

A characteristic of moment equilibrium is the presence of strategic uncertainty in equilibrium: Bidders cannot perfectly anticipate the distribution of opponents' actions. Bidders consider many bid distributions to be possible in a moment equilibrium and all these distributions reflect the truth as captured by the moment and range beliefs. However, bidders do not know the (unobservable) value distributions and strategies as in traditional Bayesian treatments of the first-price auction. Players directly forming beliefs about the distribution of other players' actions makes the model more parsimonious, based on observables, and eliminates the need to invert strategies. Inverting the bidding function is often mathematically challenging, particularly in models with multidimensional types or multiple goods. Moment equilibrium might pose a tractable alternative. 

We illustrated that (aggregate) moment equilibrium can form the basis of the structural estimation of the latent value distribution. The estimates performed well in out-of-sample predictions in comparison to Bayes-Nash equilibrium. The estimation is simple as only some statistics of observables need to be computed. Hence, moment equilibrium might perform better than Bayes-Nash equilibrium in small samples.



\appendix
\section{Omitted Proofs}

\subsection{Proof of Proposition~\ref{prop:aggregate-minimax-bdding-function}}

Equipped with a willingness-to-pay $v$, bidder $i$'s loss of not knowing the distribution $B$ of the highest bid among $n$ independent bidders while facing $n-1$ of these is
\begin{align}
	\sup_{\tilde b\in\mathbb R_+}\sup_{B\in\Delta\mathbb R_+} & B(\tilde b)^{\frac{n-1}n}(v-\tilde b) - B(b)^{\frac{n-1}n}(v-b) \label{eq:new_loss}\\\notag
	\text{s.t. }&\int x\,dB(x)=m\\
	&\int_l^u \,dB(x)=1.\notag
\end{align}
Note that we have swapped the order of taking the supremum, that is, the ``best response'' $\tilde b$ is chosen and then the bid distribution. The supremum is not affected by the order.

A first observation is that it is never optimal to bid strictly above $u$ or strictly lower than $l$. To see this, note that bidding below $l$ means that one loses with certainty. As the lowest value $\underline v$ is at least $l$, it is weakly better to bid $l$. Similarly, bidding above $u$ means that one bids strictly higher than all bids of the competitors. One can lower the bid to (marginally above) $u$ without decreasing the likelihood of winning. Formally, for all possible bid distributions $B_{-i}$ we have $\lambda(u,B_{-i}|v_i)<\lambda(b,B_{-i}|v_i)$ for $u<b$ as $v_i-u=U_i(u,B_{-i})>U_i(b,B_{-i}) = v_i-b.$

The following lemma shows that the worst-case bid distribution has at most two elements in the support.

\begin{lemma}
	For any fixed pair $(b,\tilde b)$ in equation \eqref{eq:new_loss}, the bid distribution that maximizes loss has at most two elements in the support.
\end{lemma}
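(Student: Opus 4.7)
The plan is to exploit the fact that, for fixed $b$ and $\tilde b$, the objective in~\eqref{eq:new_loss} depends on $B$ only through the two linear functionals $B \mapsto B(b)$ and $B \mapsto B(\tilde b)$. Assume WLOG $b \le \tilde b$ (the case $\tilde b < b$ is symmetric, and $b = \tilde b$ is trivial) and partition $[l,u]$ into $I_1 = [l,b]$, $I_2 = (b, \tilde b]$, $I_3 = (\tilde b, u]$, with masses $\pi_j = B(I_j)$. Then $B(b) = \pi_1$, $B(\tilde b) = \pi_1 + \pi_2$, and the objective is
\begin{equation*}
F(\pi_1, \pi_2) = (v - \tilde b)(\pi_1 + \pi_2)^{(n-1)/n} - (v - b)\pi_1^{(n-1)/n},
\end{equation*}
which is independent of the conditional distribution of $B$ on each $I_j$. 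Replacing the conditional distribution on each $I_j$ with a Dirac at its conditional mean $m_j \in I_j$ preserves the global moment constraint $\sum_j \pi_j m_j = m$ and the value of $F$; this reduces the worst case to at most three atoms.

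With the atom positions $x_j \in I_j$ now free parameters, the set of feasible mass vectors $\pi \in \mathbb{R}_+^3$ with $\sum_j \pi_j = 1$ is the convex polytope
\begin{equation*}
\Pi = \{\pi \ge 0 : l\pi_1 + b\pi_2 + \tilde b\pi_3 \le m \le b\pi_1 + \tilde b\pi_2 + u\pi_3\}.
\end{equation*}
A direct enumeration of the vertices of $\Pi$ shows that each one has at most two nonzero coordinates, corresponding to a distribution with at most two atoms located at some pair among $\{l, b, \tilde b, u\}$ (or a single atom at the interior mean $m$ of one interval). It therefore suffices to show that the maximum of $F$ over $\Pi$ is attained at a vertex.

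To establish the vertex property, I would restrict $F$ to an edge of $\Pi$ and parametrize the edge by $\pi_1$; then $\pi_1 + \pi_2 = A - C\pi_1$ for constants $A, C \ge 0$ that depend on the active facet, and
\begin{equation*}
F'(\pi_1) = -\tfrac{n-1}{n} C(v-\tilde b)(A - C\pi_1)^{-1/n} - \tfrac{n-1}{n}(v-b)\pi_1^{-1/n}
\end{equation*}
has constant sign along the edge whenever $v - b$ and $v - \tilde b$ share a sign, which forces the edge maximum to lie at a vertex. The main obstacle is the configuration $b < v < \tilde b$, where $F'$ may change sign; one handles this by noting that the outer supremum in~\eqref{eq:new_loss} can be restricted to $\tilde b \le v$, since bids above $v$ are strictly dominated, so that this configuration is irrelevant to the minimax computation. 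An abstract alternative to the edge analysis is to invoke Winkler's (1988) characterization of extreme points of moment-constrained measure sets, which are supported on at most two points, combined with the two-functional structure of $F$ to conclude.
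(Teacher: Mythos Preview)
Your reduction strategy is different from the paper's direct mass-shifting argument and is a legitimate route, but two steps are not yet justified.

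First, after reducing to the polytope $\Pi$ in the $2$-simplex you immediately restrict $F$ to edges. This presupposes that the maximum of $F$ over $\Pi$ lies on the boundary, which you never argue. Since $F(\pi_1,\pi_2)=(v-\tilde b)(\pi_1+\pi_2)^{(n-1)/n}-(v-b)\pi_1^{(n-1)/n}$ is a difference of concave functions, an interior maximum is not a priori excluded. The fix is easy---$\partial F/\partial\pi_2=\tfrac{n-1}{n}(v-\tilde b)(\pi_1+\pi_2)^{-1/n}$ never vanishes in the interior when $v\neq\tilde b$---but it must be said before the edge analysis carries any force.

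Second, your ``abstract alternative'' via Winkler (1988) does not go through here. Winkler characterizes the extreme points of the moment set, and a \emph{linear} objective is maximized at an extreme point; that is precisely how the paper handles the individual-moment case (its Lemma~\ref{lemma:individual_two_bids}). In the aggregate problem, however, the exponent $(n-1)/n$ makes $B\mapsto B(\tilde b)^{(n-1)/n}(v-\tilde b)-B(b)^{(n-1)/n}(v-b)$ nonlinear in $B$, so knowing the extreme points of the moment set does not by itself pin down where the supremum sits. The ``two-functional structure'' lets you project to a convex subset of $[0,1]^2$, but you still need a separate argument (your edge analysis, once patched) that the maximum of the nonlinear $F$ on that image is attained at an extreme point.

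Finally, appealing to $\tilde b\le v$ to dispose of the mixed-sign configuration is fine for the role the lemma plays in the paper, but it does not prove the lemma as stated ``for any fixed pair $(b,\tilde b)$.'' The paper's own proof, by contrast, handles every ordering of $b,\tilde b,m$ through explicit mass reallocation and does not need this restriction.
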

\begin{proof}
	Let $\tilde b < b$. We observe that loss is higher when there is no mass in $(\tilde b,b)$. Loss then equals $B(\tilde b)^{\frac{n-1}{n}}(b-\tilde b)$. We distinguish $m\le \tilde b$ and $\tilde b < m$.

	If $m\le \tilde b$, then the worst case puts all the mass on $m$. Loss then equals $b-\tilde b$. The support is a singleton. 

	Let $\tilde b<m$ and let $B$ denote the worst-case bid distribution. Let $\text{supp}(B)$ denote the support of $B$. We show that $\text{supp}(B)=\{\tilde b,u\}$. 

	We first consider the mass above $\tilde b$. Let $y_1$ be the mass $B$ places on $(\tilde b,u)$, and let $m_1$ be the contribution of that interval to the expected value. Formally, let
	\begin{equation*}
		y_1 = \lim_{\epsilon\searrow 0} \int_{\tilde b+\epsilon}^{u-\epsilon}dB = \lim_{\epsilon\searrow 0}B(u-\epsilon) - B(\tilde b+\epsilon)
	\end{equation*}
	and 
	\begin{equation*}
		m_1 = \lim_{\epsilon\searrow 0} \int_{\tilde b+\epsilon}^{u-\epsilon}x dB(x).
	\end{equation*}

	By way of contradiction, suppose $y_1>0$. Then, the mass $y_1$ can be shifted from $(\tilde b,u)$ to $\{\tilde b,u\}$ without changing the expected value while (strictly) increasing $B(\tilde b)$ and (weakly) decreasing $B(b)$. Such a shift would increase loss. To see that such a reallocation of mass is feasible, let $p_1$ denote the mass that is additionally put on $\tilde b$. We choose $p_1$ such that $p_1\cdot \tilde b+(y_1-p_1)\cdot u=m_1$, i.e., $p_1 = (y_1u-m_1)/(u-\tilde b)$. Note that $0<p_1$ if and only if $m_1<y_1u$, which is true whenever $y_1>0$. Moreover, $p_1\le y_1$ if and only if $y_1\tilde b\le m_1$, which is also true. Hence, $B$ cannot be the worst-case distribution. The worst-case bid distribution cannot put mass on $(\tilde b,u)$. The argument also establishes that $u$ is the only element in the support of the worst-case bid distribution $B$ above the average $m$, i.e., $\text{supp}(B)\cap[m,\infty)=\{u\}$.

	We now show that the worst-case bid distribution also cannot put positive mass on $[l,\tilde b)$. We show that if $\text{supp}(B)\cap[l,\tilde b)\neq\emptyset$, then one can decrease the mass on $u$ and increase the mass on $\tilde b$ by removing the mass on $[l,\tilde b)$. There is such a shift in mass that it does not alter the expected value but instead increases loss. Formally, let $y_2$ denote the mass below $\tilde b$, i.e., $y_2 = \lim_{\epsilon\searrow 0}\int_l^{\tilde b-\epsilon}dB$, and let $m_2$ denote the contribution of $m_2$ to the expected value, i.e., $m_2 = \lim_{\epsilon\searrow 0}\int_l^{\tilde b-\epsilon}xdB(x)$. Let $p_u$ denote the mass on $u$. In accordance with the previous paragraph we have 
	\begin{equation*}
	 	m = m_2 + p_uu+(1-y_2-p_u)\tilde b.
	\end{equation*}
	If there was only mass on $\{\tilde b,u\}$, then the distribution puts mass $p$ on $\tilde b$, where $p=(u-m)/(u-\tilde b)$. By way of contradiction, assume that $y_2>0$. The claim is that reducing the mass below $\tilde b$ decreases the mass on $u$, that is, $1-p\le p_u$. The inequality is equivalent to $-\tilde b \le -m_2-(1-y_2)\tilde b$, which is true as $m_2\le y_2\tilde b.$ Hence, $y_2>0$ cannot be the case in the worst-case bid distribution. The worst-case bid distribution has only the two elements $\{\tilde b,u\}$ in its support.

	Let $b<\tilde b$. We distinguish three sub-cases: $b<\tilde b<m$, $b<m<\tilde b$, and $m\le b$. We first consider $b<\tilde b<m$. As the proof is similar to the above arguments, we keep it short. The first insight is that if the distribution puts mass on $(\tilde b,u)$, loss can be made higher while maintaining the moment constraint, and shift the mass from $(\tilde b,u)$ to $\{\tilde b,u\}$. This raises the mass on $\tilde b$ and, therefore, loss. The second insight is that one can then shift the mass below $\tilde b$ to $\{\tilde b,u\}$. This increases the mass on $\tilde b$ and decreases the mass on $u$. Moreover, it decreases the mass below $b$. The only elements in the support of the worst-case bid distribution can be $\{\tilde b,u\}$.

	The second sub-case is characterized by $b<m<\tilde b$. All the mass can be put on $m$. This maximizes loss as it keeps $B(b)$ down to 0 and raises $B(\tilde b)$ to 1.

	The third sub-case has $m\le b<\tilde b$. Now there must be mass below $b$ as there must be mass below $m$. The mass that is required below $m$ to satisfy the moment constraint is lower the lower the elements below $m$ in the support. Hence, we can increase $B(\tilde b)$ and lower $B(b)$ by reallocating the mass from $(l,\tilde b)$ to $\{l,\tilde b\}$. This can be done without changing the expected value. What is left to show is that there is no mass above $\tilde b$ in the worst case. Such a mass could be reallocated to $\{l,\tilde b\}$. This would decrease the mass required on $l$ to satisfy the moment constraint, and therefore decrease $B(b).$ Moreover, it would increase $B(\tilde b)$. 
\end{proof}
	
	Now that we know that the worst-case bid distribution only has two bids $x_1$ and $x_2$ in its support, we can maximize loss for a given bid $b$. The first thing to note is that the ``best response'' $\tilde b$ is either $x_1$ or $x_2$ in the worst case. The nature of the first-price auction ensures that the optimal response to a discrete distribution is always to bid just slightly above a mass point. Note that, strictly speaking, there is no best response. This poses no problem as we operate in the payoff space and consider suprema. The best response $\tilde b$ is, therefore, either $x_1$ or $x_2$. If it was neither of them, then loss would be zero as one does not wish to win the auction.

	With two bids in its support, the distribution of the maximum of $n$ bids is $q\cdot [x_1] + (1-q)\cdot [x_2]$ with $l\le x_1\le m<x_2\le u$. The moment constraint $q\cdot x_1 + (1-q)\cdot x_2=m$ implies
	\begin{equation*}
		q = \frac{x_2 - m}{x_2 - x_1}. 
	\end{equation*}
	A single competing bidder bids $x_1$ with probability $q^{\frac{1}{n}}$ and $x_2$ with probability $1-q^{\frac{1}{n}}$. Bidder $i$ wins if she bids higher than the maximum of $n-1$ draws. The maximum of $n-1$ draws is $x_1$ with probability $q^{\frac{n-1}{n}}$ and $x_2$ with probability $1-q^{\frac{n-1}{n}}$.

	To simplify notation, we define the probability $p$ as
	\begin{equation*}
		p(x_1,x_2) = q^{\frac{n-1}{n}}=\left(\frac{x_2 - m}{x_2 - x_1}\right)^{\frac{n-1}{n}}.
	\end{equation*}
	We will often drop the arguments. The first derivatives are
	\begin{align*}
		\frac{\partial p(x_1,x_2)}{\partial x_1} &= \frac{n-1}{n} \frac{p}{x_2-x_1} ,\text{ and } \\
		\frac{\partial p(x_1,x_2)}{\partial x_2} &= \frac{n-1}{n} p \frac{m-x_1}{(x_2-x_1)(x_2-m)}. 
	\end{align*}
	As both derivatives are positive for the non-degenerate case $x_1<m<x_2$, the probability $p$ is increasing in $x_1$ and $x_2$, respectively.

	We distinguish two primary cases. The first case features $\tilde b<b$ and the second case is characterized by $\tilde b>b$. Note that $\tilde b=b$ does not maximize loss as it leads to a loss of 0.

	The bid $b$ can be too high, i.e., the best response $\tilde b$ is less than $b$. There are the following three exhaustive cases: (i) $x_1<b<x_2$ and $x_1$ is the best response, (ii) $x_2<b$ and $x_1$ is the best response, and (iii) $x_2<b$ and $x_2$ is the best response.

	In the first case ($x_1<b<x_2$ and $x_1$ the best response), loss equals
	\begin{equation*}
		\lambda_1^H(x_1,x_2) = p\cdot(v-x_1) - p\cdot(v-b) = p\cdot (b-x_1).
	\end{equation*}
	Loss is increasing in $x_2$ as the first derivative of loss with respect to $x_2$ is positive. The first derivative of loss with respect to $x_1$ equals
	\begin{equation*}
		\frac{\partial \lambda^H_1}{\partial x_1} = \frac{\partial p}{\partial x_1}(b-x_1)-p = \frac{n-1}{n}p\frac{b-x_1}{x_2-x_1} - p.
	\end{equation*}
	Assuming that $p>0$, this expression is negative if and only if $(n-1)(b-x_1)<n(x_2-x_1)$, which is true as $b<x_2$ by assumption. Note that $p=0$ does not maximize loss. Hence, the worst-case bid distribution has $x_1=l$ and $x_2=u$. Conditional worst-case loss equals
	\begin{equation}
		\bar \lambda^H_1(b) = \left(\frac{u-m}{u-l}\right)^{\frac{n-1}{n}}(b-l).
		\label{eq:loss-high-agg}
	\end{equation}

	In the second case ($x_2<b$ and $x_1$ the best response), loss equals
	\begin{equation}
		\lambda^H_2(x_1,x_2) = p\cdot(v-x_1) - (v-b).
	\end{equation}
	This case does not maximize loss. By placing $x_2$ above $b$ loss is made higher and changed to $\lambda_1^H$. Note that loss $\lambda_2^H$ increases in $x_2$, so $x_2<b$ poses a binding constraint. The constraint can be avoided by placing $x_2$ above $b$, which leads to the first case.

	In the third case ($x_2<b$ and $x_2$ the best response), loss equals
	\begin{equation*}
		\lambda_3^H(x_1,x_2) = b - x_2.
	\end{equation*}
	Loss is maximized by $x_2=m$. We now show that worst-case loss of the first case is higher than $b-m$, i.e., $\bar\lambda^H_1\ge b-m$. We show that
	\begin{equation*}
		b-m\le \frac{u-m}{u-l}(b-l)\le \left(\frac{u-m}{u-l} \right)^{\frac{n-1}{n}}(b-l)=\bar\lambda^H_1.
	\end{equation*}
	The second inequality is true as $q\le q^{\frac{n-1}{n}}$ for $q\in(0,1)$. The first inequality is true for $b\le u$. To summarize, the worst-case loss conditional on $b$ being too high is $\bar\lambda_1^H$.

	The bid $b$ can also be too low, i.e., the best response is higher than $b$. There are again three exhaustive cases to consider: (i) $b<x_1$ and $x_1$ the best response, (ii) $b<x_1$ and $x_2$ the best response, and (iii) $x_1<b<x_2$ and $x_2$ the best response. Note that the first two cases require $b<m$, so only the third applies to the case in which one bids above the average $m$.

	In the first case of bidding too low ($b<x_1$ and $x_1$ the best response), loss equals
	\begin{equation*}
		\lambda^L_1(x_1,x_2) = p\cdot(v-x_1).
	\end{equation*}
	Loss increases in $x_2$, so $x_2 = u$ and $p>0$ in the worst case. The first derivative with respect to $x_1$ equals
	\begin{equation*}
		\frac{\partial\lambda^L_1}{\partial x_1} = \frac{n-1}{n} \frac{p}{x_2-x_1} (v-x_1) - p.
	\end{equation*}
	The first derivative is negative if $x_1 < nx_2 - (n-1)v$ and positive if $x_1>nx_2-(n-1)v$. Hence, loss is convex in $x_1$. Recall that we have $x_2=u$ in the worst case. Loss is decreasing in $x_1$ if $x_1\le m\le n u - (n-1)v$ (low value $v$). Loss is then maximized by $x_1$ as low as possible, that is, by $x_1 = b$. For higher values, the convexity of loss matters. The maximum of the loss is attained on the boundary, so $x_1$ is either $b$ or $m$.

	In the second case ($b<x_1$ and $x_2$ the best response), loss equals $v-x_2$. It is maximized by $x_2=m$. Note that the same loss can be achieved in the first case by $x_1=m$. This is, however, not always optimal. Thus, the second case is subsumed by the first.

	In the third case ($x_1<b<x_2$ and $x_2$ the best response), loss equals
	\begin{equation*}
		\lambda^L_3(x_1,x_2) = v - x_2 - p\cdot(v-b).
	\end{equation*}
	Loss decreases in $x_1$ and $x_2$, respectively. Hence, the worst-case bid distribution has the two mass points $x_1=l$ and $x_2=\max\{b,m\}$.

	To summarize, the worst-case loss conditional on bidding too low is
	\begin{equation*}
	\bar\lambda^L(b)=
		\begin{cases}
			\max\{\left(\frac{u-m}{u-b}\right)^{\frac{n-1}{n}}(v-b),v-m\}&\text{if } b<m\\
			(v - b)\left(1 - \left(\frac{b-m}{b-l} \right)^{\frac{n-1}{n}}\right) &\text{if }m\le b.
		\end{cases}
		\label{eq:loss-low-agg}
	\end{equation*}

	The maximal loss when bidding $b$ is then $\max\{\bar \lambda^H_1(b),\bar \lambda^L(b)\}$. The maximal loss of bidding too high $\lambda^H_1$ is 0 when bidding $l$ and increases in $b$. The maximal loss of bidding too low is strictly positive at $b=l$ and is weakly decreasing in $b$. Thus, maximal loss is minimized by equalizing the two conditional maximal losses. The monotonicity implies that there is a unique solution.


\subsection{Proof of Proposition~\ref{prop:comparative_statics_aggregate_moment_belief}}

	To study the comparative statics, we introduce the three implicit equations 
	\begin{align*}
		F_1(v,b,l,m,u,n) &:= p(l,u)\cdot(b-l) - p(b,u)\cdot (v-b)=0,\\ 
		F_2(v,b,l,m,u,n) &:= p(l,u)\cdot(b-l) - (v-m)=0, \text{ and}\\
		F_3(v,b,l,m,u,n) &:= p(l,u)\cdot(b-l) - (v-b)\cdot(1-p(l,b))=0,
	\end{align*}
	all of which are $\lambda^H-\lambda^L$ for different values $v$. The first two apply if $b<m$ and the last applies if $m\le b$. The roots of these equations are the minimax bids. The implicit function theorem states that
	\begin{equation*}
		\frac{\partial \beta}{\partial x} = - \frac{\partial F_k}{\partial x}\cdot\left( \frac{\partial F_k}{\partial b} \right)^{-1},
	\end{equation*}
	where $k=1,2,3$ defines the relevant region for $v$ and $x\in\{v,l,m,u,n\}$. Note that $\frac{\partial p}{\partial m}<0.$ 

	We have
	\begin{align*}
		\frac{\partial F_1}{\partial v} &= -p(b,u)&<0\\
		\frac{\partial F_1}{\partial b} &= p(l,u) - \frac{\partial p(b,u)}{\partial b}(v-b) + p(b,u) &>0\\
		\frac{\partial F_1}{\partial l} &= \frac{\partial p(l,u)}{\partial l}(b-l) - p(l,u) &<0 \\
		\frac{\partial F_1}{\partial m} &= \frac{\partial p(l,u)}{\partial m} (b-l) - \frac{\partial p(b,u)}{\partial m} (v-b)  &=0\\
		\frac{\partial F_1}{\partial u} &= \frac{\partial p(l,u)}{\partial u}(b-l) - \frac{\partial p(b,u)}{\partial u}(v-b)  &>0.
	\end{align*}

	To see why the derivative with respect to $b$ is positive, recall that loss associated with bidding too low is decreasing in $x_1$. Hence, the partial derivative of the loss of bidding too low with respect to $b$ ($x_1$) is negative.

	We show that $\frac{\partial F_1}{\partial m}=0$. We consider the case in which the minimax bid is found by solving
	\begin{equation*}
		\left(\frac{u-m}{u-l} \right)^{\frac{n-1}{n}}(b-l) - \left(\frac{u-m}{u-b} \right)^{\frac{n-1}{n}}(v-b)  = 0.
	\end{equation*}
	We divide by $(u-m)^{\frac{n-1}{n}}$ and observe that the derivative with respect to $m$ is then 0. 

	We now show that $\frac{\partial F_1}{\partial u} > 0$. Note that for these relatively low values the inverse bid equals
	\begin{equation*}
		v = b + \left(\frac{u-b}{u-l}\right)^{\frac{n-1}{n}}(b-l)
	\end{equation*}
	We plug the inverse minimax bid into the first partial derivative and obtain
	\begin{align*}
		\frac{\partial F_1}{\partial u} &= \frac{\partial p(l,u)}{\partial u}(b-l) - \frac{\partial p(b,u)}{\partial u}(v-b) \\
		&= \frac{n-1}{n}\frac{b-l}{u-m} \cdot \left(p(l,u) \frac{m-l}{u-l} - p(b,u) \frac{m-b}{u-b}\left(\frac{u-b}{u-l}\right)^{\frac{n-1}{n}} \right)\\
		&= \frac{n-1}{n}\frac{b-l}{u-m} \cdot \left(p(l,u) \frac{m-l}{u-l} \left(\frac{u-l}{u-l}\right)^{\frac{n-1}{n}} - p(b,u) \frac{m-b}{u-b}\left(\frac{u-b}{u-l}\right)^{\frac{n-1}{n}} \right).
	\end{align*}
	The sign of the partial derivative is determined by the sign of $h(l)-h(b)$, where $h$ is defined as
	\begin{equation*}
		h(x) := p(x,u) \left(\frac{u-x}{u-l}\right)^{\frac{n-1}{n}}\frac{m-x}{u-x} = \left(\frac{u-m}{u-l} \right)^{\frac{n-1}{n} }\frac{m-x}{u-x}.
	\end{equation*}
	The first derivative of $h$ with respect to $x$ is $h'(x) = \left(\frac{u-m}{u-l} \right)^{\frac{n-1}{n} }\frac{-(u-m)}{(u-x)^2}<0$. We conclude that $h(l)-h(b)>0$, so $\frac{\partial F_1}{\partial u}>0$.

	Note that $\frac{\partial F_1}{\partial l}<0$ as $\lambda^H_1$ decreases in $x_1$. 

	We recall that $p=q^{\frac{n-1}n}$ for taking the partial derivative with respect to $n$, i.e., 
	\begin{align*}
		\frac{\partial F_1}{\partial n} &= \frac{1}{n^2}\cdot p(l,u)\cdot \log(q(l,u)) \cdot (b-l) - \frac{1}{n^2}\cdot p(b,u)\cdot \log(q(b,u))\cdot (v-b)\\
		&= \frac{1}{n^2}\cdot \left(p(l,u)\cdot \log(q(l,u)) \cdot (b-l) - p(b,u)\cdot \log(q(b,u))\cdot (v-b)\right)\\
		&= \frac{1}{n^2}\cdot \left(p(l,u)\cdot \log(q(l,u)) \cdot (b-l) - p(b,u)\cdot \log(q(b,u))\cdot (b-l)\cdot\left(\frac{u-b}{u-l} \right)^{\frac{n-1}n}\right)\\
		&= \frac{1}{n^2}\cdot (b-l)\cdot p(l,u)\cdot\left( \log(q(l,u)) - \log(q(b,u))\right)
	\end{align*}
	Note that we plug in the inverse minimax bid. As $\log(q(x,u))$ increases in $x$, the first partial derivative is negative.

	For intermediate values $v$, the partial derivatives are
	\begin{align*}
		\frac{\partial F_2}{\partial v} &:= -1 &<0\\
		\frac{\partial F_2}{\partial b} &:= p(l,u) &>0\\
		\frac{\partial F_2}{\partial l} &:= \frac{\partial p(l,u)}{\partial l}(b-l) - p(l,u)&<0\\
		\frac{\partial F_2}{\partial m} &:= \frac{\partial p(l,u)}{\partial m}(b-l) + 1 &<0 \\
		\frac{\partial F_2}{\partial u} &:= \frac{\partial p(l,u)}{\partial u}(b-l) &>0 \\
		\frac{\partial F_2}{\partial n} &:= \frac{1}{n^2} \cdot (b-l) \cdot p(l,u) \cdot \log(q(l,u))& < 0.
	\end{align*}

	The partial derivative with respect to $l$ is negative as the loss of bidding too high decreases in $x_1=l$. We show that $\frac{\partial F_2}{\partial m}<0$. Recall that this case applies for the intermediate $v$ that bid below the mean. In particular, loss of bidding too low is decreasing and increasing in $x_1$ for $x_1\in[b,m]$ only if $v\ge (n\cdot u-m)/(n-1)$. Moreover, note that choosing $b$ such that $p(l,u)(b-l)=v-m$ implies that $b-l=(v-m)/p(l,u)$. The first partial derivative with respect to $m$ simplifies to
	\begin{align*}
		\frac{\partial F_2}{\partial m} &= \frac{\partial p(l,u)}{\partial m}(b-l) + 1\\
		&= \frac{n-1}{n} \left(\frac{u-m}{u-l}\right)^{\frac{n-1}n-1}\frac{-1}{u-l}(b-l) + 1\\
		&= \frac{n-1}{n} \left(\frac{u-m}{u-l}\right)^{\frac{n-1}n-1}\frac{-1}{u-l}(v-m) \left(\frac{u-l}{u-m}\right)^{\frac{n-1}{n}}  + 1\\
		&= -\frac{n-1}{n} \frac{v-m}{u-m} + 1\\
		&\le - \frac{n-1}{n} \frac{1}{u-m} \left(\frac{n\cdot u-m}{n-1} -m\right) + 1\\
		&= - \frac{n-1}{n} \frac{1}{u-m} \frac{n(u-m)}{n-1} + 1 = 0,
	\end{align*}
	where we use the lower bound on $v$ for the inequality. 

	The partial derivatives for high types that bid above the expected winning bid $m$ are
	\begin{align*}
		\frac{\partial F_3}{\partial v} &:= - (1-p(l,b)) &< 0,\\
		\frac{\partial F_3}{\partial b} &:= p(l,u) + (1-p(l,b)) +(v-b) \frac{\partial p(l,b)}{\partial b} &> 0 \\
		\frac{\partial F_3}{\partial m} &:= \frac{\partial p(l,u)}{\partial m} (b-l)+ (v-b) \frac{\partial p(l,b)}{\partial m}  &<0\\
		\frac{\partial F_3}{\partial u} &:= \frac{\partial p(l,u)}{\partial u}(b-l) &>0.
	\end{align*}
	The partial derivative with respect to $n$ is negative, $\frac{\partial F_3}{\partial n}<0$, as $p$ decreases in $n$.

	The following numerical example demonstrates the ambiguous impact of $l$ on the minimax bid for high types. Let $n=u=2$ and $m=1$. We consider two values: $v=2$ and $v'=5$. We have $\beta(v=2|l=0)=1.05$ but $\beta(v=2|l=0.1)=1.06$. The minimax bid increases in $l$. On the other hand, we have $\beta(v=5|l=0)=1.69$ but $\beta(v=5|l=0.1)=1.68$. The minimax bid decreases in $l$.

	To summarize, the minimax bid increases in $v,$ $m$, and $n$. It decreases in $u$. For types that bid below the average winning bid $m$, the minimax bid increases in $l$. The impact of $l$ on high values' bids is ambiguous.


\subsection{Proof of Theorem~\ref{theorem:aggregate_moment_equilibrium}}
	The only consistent lower bound belief is $l^{\star}=\underline v.$ To see this, we first look for the value $v$ whose minimax bid is $l$ given arbitrary beliefs $(l,m,u)$, $l<m<u$. The worst-case loss associated with bidding too high (Eq.~\eqref{eq:loss-high-agg}) is 0 if $b=l$. The maximal loss associated with bidding too low (Eq.~\eqref{eq:loss-low-agg}) is $\max\{v-m,(v-l)\cdot p(l,u)\}$. The bid $b=l$ minimizes worst-case loss if the maximal loss associated with bidding too low is also 0. This cannot be the case for $v>m$. The only way that the worst-case loss of bidding too low is 0 is if $v=l$. The only way to satisfy this equality in a monotone equilibrium is $l^{\star}=\underline v$.

	For $l^{\star}=\underline v$ and any $m\le \bar v$, there exists a unique upper bound belief $u(m)$ such that the bidding function $\beta(v|l^{\star},m,u(m))$ is strictly increasing in $v$. The bidding function is strictly increasing if the bid $u$ equalizes the maximal loss of bidding too low and the maximal loss of bidding too high, i.e., if $\bar\lambda^L(u)=\bar\lambda^H(u)$ for the highest type $\bar v$. We solve for a root of the function
	\begin{align*}
		\bar\lambda^L(u)-\bar\lambda^H(u) &= (\bar v - u)(1-p(l,u)) - p(l,u)(u-l)=\bar v - u - p(l,u)(\bar v - l). 
	\end{align*}
	in u. The function is clearly strictly decreasing in $u$. The function equals $\bar v-m>0$ for $u$ as small as possible ($u=m$) and is negative for $u$ as large and plausible values (e.g., $u=\bar v$). Hence, for any $m<\bar v$ there is exactly one consistent upper bound belief $u(m)$. The implicit function theorem allows us to show that
	\begin{align*}
		u'(m) &= - \left(\frac{\partial \bar v-u-p(l,u)(\bar v-l)}{\partial m} \right)\cdot\left(\frac{\partial \bar v-u-p(l,u)(\bar v-l)}{\partial u} \right)^{-1} \\
		&= -\frac{\partial p(l,u)}{\partial m}(\bar v - l) \left(1 + \frac{\partial p(l,u)}{\partial u}(\bar v - l) \right)^{-1}>0.
	\end{align*}
	The consistent upper bound belief $u(m)$ is increasing in $m$.

	Suppose there is a single type, i.e., $\underline v = \overline v$. Then the equalities $\underline v = l^\star = m^\star = u^\star$ must hold in an aggregate moment equilibrium. 

	Let $\underline v < \overline v$. We now show that $\varphi^\AGG$ must have an interior fixed point, that is, $m^\star$ is such that $\underline v < m^\star < \overline v$. Recall from Proposition~\ref{prop:aggregate-minimax-bdding-function} that there is a cutoff type, now denoted by $\hat v(m)$,
	\[
	\hat v(m) = m + (m-l)\left(\frac{u-m}{u-l} \right)^{\frac{n-1}n},
	\]
	such that types below the cutoff bid below $m$ and types above the cutoff bid above $m$. Observe that as $m$ converges to $l$, which equals $\underline v$ in equilibrium, the cutoff converges to $\underline v$. Hence, as $m$ converges to $\underline v$, an increasing share of types bid above $m$. As only types close to $\underline v$ bid close to $m$, the expected bid must be higher than $m$.

\subsection{Proof of Proposition~\ref{prop:individual-minimax-bdding-function}}

The proof first shows that the worst-case bid distributions only put mass on at most two bids (Lemma~\ref{lemma:individual_two_bids}). These bids are the same for all other bidders in the worst case (Lemma~\ref{lemma:individual_identical}). We then maximize loss with respect to these two bids. Loss may again come from bidding too high and too low. The minimax bid equalizes the highest loss from bidding too high and the highest loss from bidding too low.

\begin{lemma}
	\label{lemma:individual_two_bids}
	The bid distribution of bidder $j$ that maximizes bidder $i$'s loss has at most two elements in the support.
\end{lemma}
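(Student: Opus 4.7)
My plan is to follow the template of the earlier lemma for the aggregate case: swap the order of the two suprema in the definition of loss so that the inner problem becomes linear in $B_j$, and then invoke the extreme-point structure of the moment-constrained set of probability measures.

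Concretely, I would first write the quantity to be maximized as
\[
\sup_{\tilde b \in \mathbb R_+}\sup_{B_j \in \mathcal B_{ij}(l,\mu,u)} \Big[\, \prod_{k \neq i} B_k(\tilde b)(v_i - \tilde b) - \prod_{k \neq i} B_k(b_i)(v_i - b_i)\,\Big],
\]
with bidder $i$'s actual bid $b_i$ and the other opponents' distributions $\{B_k\}_{k \neq i,j}$ held fixed. For any fixed ``best response'' $\tilde b$, the inner objective depends on $B_j$ only through the two scalars $B_j(\tilde b)$ and $B_j(b_i)$, and is in fact affine in them with coefficients $c_1, c_2 \ge 0$ determined entirely by $\tilde b$, $b_i$, $v_i$ and the other opponents' distributions.

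Second, I would observe that $\mathcal B_{ij}(l,\mu,u)$ is a convex, weak$^\star$-compact subset of probability measures on $[l,u]$ cut out by only two linear constraints, namely $\int dB = 1$ and $\int x\, dB = \mu$. By the classical characterization of extreme points for moment-constrained probability measures (\cite{Winkler_1988}, already referenced in the main text), the extreme points of this set are supported on at most two points, one more than the single non-trivial moment constraint. The Bauer maximum principle then implies that the affine inner supremum over $B_j$ is attained at some such extreme point, giving a worst-case $B_j$ with at most two atoms and establishing the lemma.

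The main obstacle is mostly bookkeeping: one must justify the interchange of suprema (identical to the step in the proof of Proposition~\ref{prop:aggregate-minimax-bdding-function}), handle the fact that for many configurations the best response $\tilde b$ is only approached as a supremum rather than attained, and verify that the objective is genuinely affine in $B_j$ once the other ingredients are fixed. If one prefers the hands-on style of the earlier lemma and wishes to avoid an appeal to Winkler's theorem, the extreme-point step can be replaced by a direct mass-shifting argument: in each relative configuration of $\tilde b$, $b_i$, $\mu$, $l$, and $u$, any mass of $B_j$ placed on an open interval between the relevant bids can be pushed onto the two endpoints most favorable for $c_1 B_j(\tilde b) - c_2 B_j(b_i)$ while preserving the mean, eventually collapsing the support to two atoms.
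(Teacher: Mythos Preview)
Your proposal is correct and follows essentially the same approach as the paper: swap the order of suprema to fix $(b_i,\tilde b)$ and the other opponents' distributions, observe that the resulting inner objective is linear in $B_j$, and then invoke \cite{Winkler_1988} to conclude that the maximum over the moment-constrained set $\mathcal B_{ij}(l,\mu,u)$ is attained at an extreme point supported on at most two bids. The paper's proof is slightly more terse---it writes the objective as an expectation and asserts linearity without spelling out the dependence through $B_j(\tilde b)$ and $B_j(b_i)$---but the argument is the same, including the citation to Winkler.
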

\begin{proof}
	Recall that we maximize loss with respect to $\tilde b_i$ and the bid distributions $(B_k)_{k\neq i}$. We first change the order of taking suprema, i.e., we fix $(b_i,\tilde b_i)$. Then we consider any bid distributions of bidders $k\neq i,j$ and obtain the maximization problem
	\begin{align}
		\sup_{B_j\in\mathcal B_{ij}(l_i,\mu_i,u_i)}&\int \mathbb E_{B_{-i,j}}u_i(\tilde b_i,b_j,b_{-i,j}) - u_i(b_i,b_j,b_{-i,j})dB_j(b_j) \label{eq:individual_loss_swapped_order}
	\end{align}
	The extreme points of the constraint set $\mathcal B_{ij}(l_i,\mu_i,u_i)$ are distributions with at most two elements in the support \citep{Winkler_1988}. The objective function is linear in $B_j$. The maximum of a linear function is attained at an extreme point.
\end{proof}

The lemma states that the worst-case distribution $B_j$ has at most two elements in its support.\footnote{One can generalize the lemma. Let $M$ be the number of moment beliefs, where a moment belief is the expectation of a measurable function. The worst-case bid distribution has at most $M+1$ elements in the support.} Thus, it suffices to consider bid distributions of the form $\pi \cdot[x_1] + (1-\pi)\cdot [x_2]$, where $\pi \in[0,1]$ and $l\le x_1\le x_2\le u.$ The first moment constraint implies $\pi \cdot x_1+(1-\pi)\cdot x_2 = \mu$, which clearly requires $x_1\le\mu\le x_2$. The moment constraint implies 
\begin{equation}
	\pi(x_1,x_2) = \frac{x_2-\mu}{x_2-x_1}. \label{eq:pi}
\end{equation}
We sometimes drop the arguments.

Let $\pi_1$ and $\pi_2$ denote the partial derivatives of $\pi$ with respect to $x_1$ and $x_2$, respectively. These derivatives are equal to
\begin{equation*}
 	\pi_1 = \frac{\partial \pi}{\partial x_1} = \frac{x_2-\mu}{(x_2-x_1)^2} \text{ and }
 	\pi_2 = \frac{\partial \pi}{\partial x_2} = \frac{\mu-x_1}{(x_2-x_1)^2} .
\end{equation*} 
Observe that $\pi_1 > 0$ and $\pi_2 > 0$ for $x_1<\mu<x_2$.

\begin{lemma}
	\label{lemma:individual_identical}
	The worst-case bid distributions of the other bidders are identical.
\end{lemma}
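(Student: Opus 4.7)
The plan is to swap the order of the suprema as in~\eqref{eq:individual_loss_swapped_order}, fix a pair $(b_i,\tilde b_i)$, and view the loss
\[
L(B_1,\dots,B_{n-1})=(v_i-\tilde b_i)\prod_{j\neq i}B_j(\tilde b_i)\;-\;(v_i-b_i)\prod_{j\neq i}B_j(b_i)
\]
as a function of the product marginals, which by Lemma~\ref{lemma:individual_two_bids} may be restricted to two-point distributions $B_j=\pi_j\delta_{x_1^j}+(1-\pi_j)\delta_{x_2^j}$ with the weights $\pi_j$ pinned down by the moment constraint~\eqref{eq:pi}. The object of the proof is then to show that among all such product profiles there is a loss-maximizer with $B_j^{\star}$ independent of $j$.

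Next I would argue by case analysis on the positions of the support points $(x_1^j,x_2^j)$ relative to $\tilde b_i$ and $b_i$ that the loss-maximizing configuration places $x_1^j\le\min\{\tilde b_i,b_i\}$ and $x_2^j\ge\max\{\tilde b_i,b_i\}$ for every $j$. Any other configuration either forces one of the CDF factors $B_j(\tilde b_i)$ or $B_j(b_i)$ to be $0$ (killing $\prod_{j\neq i}B_j(\tilde b_i)$) or to be $1$ in a way that only subtracts from the loss, or is infeasible under the moment constraint (for instance $x_2^j\le\tilde b_i$ would require $\mu\le\tilde b_i$). Under this principal configuration $B_j(\tilde b_i)=B_j(b_i)=\pi_j$, and the loss collapses to $|b_i-\tilde b_i|\prod_{j\neq i}\pi_j$ in the ``bidding too high'' case; the ``bidding too low'' case $\tilde b_i>b_i$ is handled analogously.

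Once the loss is in the factorized form $C\cdot\prod_{j\neq i}\pi_j$ with $C\ge 0$ independent of the $B_j$'s, the maximization decouples across $j$. Each factor $\pi_j=(x_2^j-\mu)/(x_2^j-x_1^j)$ is jointly increasing in $(x_1^j,x_2^j)$ on the feasible rectangle, and that rectangle is the same for every $j$ because bidder $i$'s beliefs $\mathcal B_{ij}(l_i,\mu_i,u_i)$ are exchangeable across opponents. Each per-opponent optimum is therefore attained at the common extreme corner, giving $B_j^{\star}=B^{\star}$ identical across $j$. The hard part will be the case analysis in the second step: one must rule out \emph{mixed} configurations in which some opponents place $x_2^k$ below $b_i$ (so $B_k(b_i)=1$) while others place $x_2^j$ above $b_i$, by comparing the resulting loss expressions and showing that uniformly pushing every support pair to the common extreme corner dominates. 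This comparison is essentially the same bookkeeping that will be carried out later to maximize loss for given $(b_i,\tilde b_i)$ in the proof of Proposition~\ref{prop:individual-minimax-bdding-function}, so the work here overlaps substantially with that subsequent case analysis.
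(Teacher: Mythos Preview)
Your high-level plan (fix $(b_i,\tilde b_i)$, reduce to two-point marginals, then argue each opponent's optimal support pair is the same) matches the paper's, but the execution has a real gap in the ``bidding too low'' case and in the claimed principal configuration.

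First, the configuration $x_1^j\le\min\{\tilde b_i,b_i\}$, $x_2^j\ge\max\{\tilde b_i,b_i\}$ is \emph{not} always the loss-maximizing one. When $b_i<\tilde b_i<\mu$, the paper shows the worst case puts $x_1^j=\tilde b_i>b_i=\min\{\tilde b_i,b_i\}$; the actual bid $b_i$ is then \emph{below} both support points and loses with certainty. Your principal configuration excludes this.

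Second, and more seriously, the loss does not generally collapse to $|b_i-\tilde b_i|\prod_j\pi_j$. In the case $\mu\le b_i<\tilde b_i$, the worst case places $x_2^j=\tilde b_i$ so that $B_j(\tilde b_i)=1$ while $B_j(b_i)=\pi_j$; the loss is then
\[
(v_i-\tilde b_i)\;-\;(v_i-b_i)\prod_{j\neq i}\pi_j,
\]
which is of the form $A-B\prod_j\pi_j$ with $A,B>0$. Here you must \emph{minimize} $\prod_j\pi_j$, not maximize it, and your ``common extreme corner'' argument (pushing each $\pi_j$ up) would point in the wrong direction. The phrase ``handled analogously'' papers over a case where the structure is genuinely different.

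The paper's route avoids these pitfalls by organizing the case split around the position of $\tilde b_i$ relative to $\mu$: since the best response must sit at one of the two mass points, $\tilde b_i<\mu$ forces $\tilde b_i=x_1^j$ and $\tilde b_i\ge\mu$ forces $\tilde b_i=x_2^j$. In each sub-case (further split by the position of $b_i$) the worst-case $(x_1^j,x_2^j)$ is pinned down by the common data $(\tilde b_i,b_i,l,\mu,u)$ alone, so it is automatically identical across $j$. Your proposal would be repaired by adopting this case structure rather than trying to force a single factorized form.
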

\begin{proof}
	The worst-case bid distribution for each other bidder $j$ has at most two elements, $x_{j1}$ and $x_{j2}$, in the support. The best response to a discrete bid distribution is to bid (slightly above) one of the mass points. Tie-breaking does not play a role as suprema are taken. With the current moment constraints and for a given pair $(\tilde b,b)$ as in \eqref{eq:individual_loss_swapped_order}, the maximization of loss equals 
	\begin{align*}
		\sup_{x_{j1},x_{j2}:l\le x_{j1}\le \mu\le x_{j2}\le u}(v-\tilde b)\prod_{j\neq i} (\pi(x_{j1},x_{j2})\chi_{x_{j1}\le \tilde b}+(1-\pi(x_{j1},x_{j2}))\chi_{x_{j2}\le \tilde b})\\ - (v- b)\prod_{j\neq i} (\pi(x_{j1},x_{j2})\chi_{x_{j1}< b}+(1-\pi(x_{j1},x_{j2}))\chi_{x_{j2}< b}),
	\end{align*}
	where we break ties in favor of higher loss and $\chi_A$ denotes the indicator function, i.e., the function that takes the value 1 if $A$ is true and 0 otherwise. 

	If $\tilde b< \mu$, then loss is maximized by $x_{j1} = \tilde b$ as this maximizes the mass on $\tilde b$. The second mass point $x_{j2}$ is then chosen equal to $u$ as this maximizes the mass on $\tilde b$ and is higher than $b$.

	Let $\mu \le \tilde b$. If $\tilde b<b$, then loss is maximized by $x_{j2}=\tilde b$ and $x_{j1}=l$ as this maximizes the mass on $\tilde b$. If $\mu<b<\tilde b$, then the same construction maximizes loss. If $b<\mu<\tilde b$, then loss is again maximized by $x_{j2}=\tilde b$, but now $x_{j1}=b$. The bid $b$ always loses, while the bid $\tilde b$ wins with certainty. Thus, loss is always highest with symmetric worst-case distributions.
\end{proof}

Due to symmetric worst-case beliefs, the maximization of loss takes the form
\begin{align*} 
	\sup_{x_1,x_2}& \max\{\pi(x_1,x_2)^{n-1}(v-x_1), v-x_2,0\} - (v-b)(\chi_{x_1<b<x_2} \pi(x_1,x_2)^{n-1}+ \chi_{x_2<b})\\
	&\text{s.t. } l\le x_1\le \mu\le x_2\le u,
\end{align*}
for a given $b\in [l,u]$. Bidding above value is dominated, so we consider $b\le v$. The case in which the highest possible utility is 0 is then irrelevant as it never maximizes loss.

The maximization of loss requires the discussion of six exhaustive cases. There are three cases in which the bid $b$ is too high as the best response is lower. There are also six cases in which the bid $b$ is too low as the best response is higher. In what follows, we maximize loss for each of these six cases with respect to $x_1$ and $x_2$. We then find the minimax bids.

\paragraph{Loss of bidding too high.}
There are three cases in which one bids too high, that is, when the best response to the bid distribution is lower than $b$. The first case has $x_1<b<x_2$ and $x_1$ is the best response. The second case has $x_2<b$ with $x_1$ as the best response. The third case has $x_2<b$ with $x_2$ being the best response. 

Consider the first case of bidding too high, i.e., $x_1 < b < x_2$ and $x_1$ the best response. Loss equals $\lambda^{H1}(x_1,x_2)=\pi^{n-1}(b-x_1)$, which is to be maximized with respect to $x_1$ and $x_2$ under the constraints $l\le x_1\le \mu<x_2\le u$ and $x_1<b<x_2$.

We show that $\tilde x_1 = \max\{l,\min\{\hat x_1(b,u),\mu\}\}$ and $x_2 =u$ maximize loss, where
	\begin{equation*}
		\hat x_1(b,x_2)=\begin{cases}
			\frac{(n-1)b-x_2}{n-2} &\text{for } n\ge 3\\
			l&\text{for }n=2.
		\end{cases}
	\end{equation*}
Note that $\hat x_1(b,x_2)<b$ for $b<x_2.$ 
The partial derivatives of $\lambda^{H1}$ with respect to $x_1$ and $x_2$ equal
\begin{align*}
	\lambda^{H1}_1(x_1,x_2) &= \pi^{n-1}\left((n-1) \frac{b-x_1}{x_2-x_1} - 1\right) \text{ and}\\
	\lambda^{H1}_2(x_1,x_2) &= (n-1)\pi^{n-2}\pi_2(b-x_1),
\end{align*}
respectively. Loss increases in $x_2$ as $\lambda_2^{H1}>0$. Thus, loss is maximized by $x_2=u.$ For $n=2$, loss decreases in $x_1$ as $\lambda_1^{H1}<0$. It is, therefore, maximized by $\tilde x_1 = l = \max\{l,\min\{\hat x_1(b,u),\mu\}\}$. For $n\ge 3$, the root of $\lambda^{H1}_1$ is $\hat x_1(b,x_2).$ The second derivative of $\lambda^{H1}$ with respect to $x_1$ equals
\begin{equation*}
	\frac{\partial^2 \lambda^{H1}}{\partial x_1^2} = (n-1)\pi^{n-1}\frac{ b n-(n-2) {x_1}-2 {x_2}}{({x_2}-{x_1})^2},
\end{equation*}
which is negative at $x_1 = \hat x_1(b,x_2)$. To see this, note that $b n-(n-2) {\hat x_1}-2 {x_2}<0$ simplifies to $b < x_2$, which is true. Hence, for $\hat x_1(b,u)<l$, loss decreases in $x_1$ and is maximized by $x_1=l.$ For $l\le \hat x_1<\mu$, loss is maximized by $x_1=\hat x_1(b,u)$. For $\mu\le \hat x_1(b,u),$ loss increases in $x_1$ and is maximized by $x_1=\mu$. To summarize, loss is maximized by $\tilde x_1 = \max\{l,\min\{\hat x_1(b,u),\mu\}\}$ and $x_2=u$.

The second and the third case of bidding too high both lead to a lower loss than in the first case. To see this, note that loss in the second case equals $\pi^{n-1}(v-x_1)-(v-b)\le \pi^{n-1}(b-x_1)=\lambda^{H1}(x_1,x_2).$ The worst-case distribution never puts a mass point below $b$ when $x_1$ is the best response. In the third case, loss equals $b-x_2$, which is maximized by $x_2=\mu$. Note that this loss equals the loss of the first case when choosing $x_1=\mu$. This choice is optimal only if $\hat x_1(b,u)>\mu$, in which case the two cases yield the same maximal loss.

To summarize, the worst-case loss when bidding too high $\bar \lambda^H(b)$ is given by
\begin{equation*}
	\bar \lambda^H(b) = \left(\frac{u-\mu}{u-\max\{l,\min\{\hat x(b,u),\mu\}\}} \right)^{n-1}(b-\max\{l,\min\{\hat x(b,u),\mu\}\}).
\end{equation*}
Observe that $\bar \lambda^H$ is continuous in $b$. 

The range and slope of $\bar \lambda^H$ are as follows. First, the loss of bidding too high when bidding the lowest bid is 0, that is, $\bar \lambda^H(l)=0$ as $\hat x(l,u)\le l.$ Second, $\bar \lambda^H$ strictly increases in $b$. The statement is obvious unless $\hat x(b,u)=\max\{l,\min\{\hat x(b,u),\mu\}\}$. In this case, the derivative is 
\begin{equation*}
	\frac{d\,\bar \lambda^H(b)}{db} =\frac{d}{db}\max_{x_1} \lambda^{H1}(x_1,u)= \pi(\hat x(b,u),u)^{n-1}
\end{equation*}
by the envelop theorem.

\paragraph{Loss of bidding too low.}

There are also three cases when one bids too low, i.e., when the best response to the bid distribution is higher than $b$. The first case has $b<x_1$ with $x_1$ as the best response. The second case has $b<x_1$ and $x_2$ is the best response. The third case has $x_1<b<x_2$ with $x_2$ being the best response.

In the first case of bidding too low, loss equals $\lambda^{L1}(x_1,x_2)=\pi^{n-1}(v-x_1)$ as $b < x_1$ and $x_1$ is the best response. Clearly, the case applies only if $b<\mu$. Observe that we can draw from the analysis of $\lambda^{H1}$ as $\lambda^{L1}$ is $\lambda^{H1}$ when substituting $v$ for $b$. It follows that the partial derivative $\lambda_2^{L1}$ is positive, so that loss is maximized by $x_2=u.$ For $n=2$ and $v\le u$, loss $\lambda^{L1}$ is decreasing in $x_1$ as $\lambda_1^{L1}\le 0$ so that loss is maximized by $x_1=b$ and $x_2=u$. For $u < v$, loss is increasing in $x_1$ so that it is maximized by $x_1=\mu$. Maximal loss then equals $v-\mu$. For $n\ge 3$, the root of $\lambda_1^{L1}$ is $\hat x_1(v,x_2)$. The calculations above show that the second derivative $\lambda_{11}^{L1}$ is negative at $\hat x_1(v,x_2)$. Note that $\hat x_1(v,u)\le v$ if and only if $v\le u.$ Hence, for $v\le u$, loss $\lambda^{L1}$ is maximized by $x_1=\max\{b,\min\{\hat x_1(v,u),\mu\}\}$ and $x_2=u$. Observe that maximal loss is weakly larger than $v-\mu=\lambda^{L1}(\mu,u)$. For $u<v$, loss increases in $x_1$ and is maximized by $x_1=\mu$ and $x_2=u$. Note that then $\mu<u<v\le \hat x_1(v,u)$ so that $\max\{b,\min\{\hat x_1(v,u),\mu\}\} =\mu$ as $b<\mu$.

In the second case of bidding too low, loss equals $v-x_2$ as $b<x_1$ and $x_2$ is the best response. Maximal loss equals $v-\mu$, which is weakly less than the worst case of the previous case.

In the third case associated with bidding too low, loss equals $\lambda^{L3}(x_1,x_2)=v-x_2 - \pi^{n-1}(v-b)$ as $x_1<b<x_2$ with $x_2$ as the best response. We show that loss decreases in $x_1$ and $x_2$. The first partial derivatives of $\lambda^{L3}$ are $\lambda^{L3}_1(x_1,x_2) = -(n-1)(v-b) \pi^{n-2} \pi_1$ and $\lambda^{L3}_2(x_1,x_2) = -1-(n-1)(v-b)\pi^{n-2}\pi_2$. Both derivatives are negative. As a result, if $\mu\le b$, then loss is maximized by $x_1 = l$ and $x_2 = b$. If $b < \mu$, then the worst-case has $x_2 = \mu$, so that loss then equals $v-\mu.$ This loss is less than the worst-case loss $\lambda^{L1}(\max\{b,\min\{\mu,\hat x_1(v,u)\}\},u)$. 

To summarize, let $\bar \lambda^L$ denote the worst-case loss when bidding too low. The maximal loss $\bar \lambda^L$ equals
\begin{equation*}
	\bar \lambda^L(b) =
	\begin{cases}
		\left(\frac{u-\mu}{u-\max\{b,\min\{\hat x_1(v,u),\mu\}\}} \right)^{n-1}(v-\max\{b,\min\{\hat x_1(v,u),\mu\}\})&\text{for } b<\mu\\
		(v-b)\left(1-\left(\frac{b-\mu}{b-l} \right)^{n-1}\right)&\text{for } \mu\le b.
	\end{cases}
\end{equation*}
Note that $\bar \lambda^L$ is continuous in $b$. 

The range and slope of $\bar \lambda^L$ are as follows. First, the maximal loss of bidding too low is positive when bidding the lowest bid, i.e., $\bar \lambda^L(l)>0$. Second, loss $\bar \lambda^L$ is constant in $b$ if $b<\min\{\hat x(v,u),\mu\}$. Third, we show that maximal loss $\bar \lambda^L$ is strictly decreasing in $b$ when $\min\{\hat x(v,u),\mu\}\le b$. When $b<\mu$, then the maximization of loss $\lambda^{L1}$ is a constrained maximization problem $\max_{x_1}\lambda^{L1}(x_1,u)$ subject to $b\le x_1\le \mu$. The envelope theorem implies
\begin{equation*}
	\frac{d\bar\lambda^L(b)}{db}\biggr|_{b<\mu} = \frac{\partial}{\partial b}\lambda^{L1}(x_1,u) - \nu_1 (-x_1+b) - \nu_2(x_1-\mu) = -\nu_1,
\end{equation*}
where $\nu_1$ and $\nu_2$ are the respective Lagrange multipliers in the optimal solution. The derivative is negative as the constraint $x_1\ge b$ is binding so that the multiplier $\nu_1$ must be strictly positive by complementary slackness \citep{simon-blume}. For $\mu\le b$, the derivative of $\bar\lambda^L(b)=(v-b)(1-\pi(l,b)^{n-1})$ with respect to $b$ is
\begin{equation*}
	\frac{d\bar\lambda^L(b)}{db}\biggr|_{\mu\le b} = -(1-\pi(l,b)^{n-1}) -(n-1)(v-b)\pi(l,b)^{n-2}\pi_2(l,b)<0.
\end{equation*}
Finally, the loss of bidding too low is 0 when bidding value, that is, $\bar \lambda^L(v)=0.$

\paragraph{Minimization of maximal loss.}
Maximal loss is the upper envelope of $\bar\lambda^L(b)$ and $\bar\lambda^H(b)$, i.e., $\max\{\bar\lambda^L(b),\bar\lambda^H(b)\}$. Recall that $\bar\lambda^L(l)>0$ and $\bar\lambda^L(b)$ decreases in $b$. Moreover, $\bar\lambda^H(l)=0$ and $\bar\lambda^H(b)$ increases in $b$. Hence, maximal loss is minimized by the bid that equalizes the two expressions. Such a bid exists in $[l,v]$ as maximal loss is continuous in $b$ and $\bar\lambda^L(v)=0$. 

We now show that certain cases related to the loss of bidding too low are irrelevant at the minimax bid. First, we argue that the minimax bid $b^*$ is never such that $b^*<\hat x(v,u)\le \mu$. Let $v\le u$ so that $\hat x(v,u)\le v$. Suppose $l<\hat x(v,u)\le \mu$ so that $\bar \lambda^L(b)$ is constant in $b$ for $l\le b\le \hat x(v,u)$. We show $\bar \lambda^L(b)>\bar \lambda^H(b)$ for all $b\in[l,\hat x(v,u)]$. Hence, the two functions can only intersect at $b> \hat x(v,u)$. Recall that for relatively low bids $\bar\lambda^L(b)> \bar\lambda^H(b)$. Let $b\in[l,\hat x(v,u)]$. Recall that $x_1=\hat x(v,u)$ maximizes $\pi(x_1,u)^{n-1}(v-x_1)$. In particular, for $x_1'=\max\{l,\hat x(b,u)\}$ we have $\bar \lambda^L(b) = \bar \lambda^L(\hat x(v,u))\ge \pi(x_1',u)^{n-1}(v-x_1')$. Hence, we have
\begin{equation*}
	\bar \lambda^L(b)\ge \pi(x_1',u)^{n-1}(v-x_1') > \pi(x_1',u)^{n-1}(b-x_1')= \bar \lambda^H(b),
\end{equation*}
where the second inequality follows from $v>\hat x(v,u)\ge b$. The equality holds as $x_1'$ maximizes $\pi( x_1,u)^{n-1}(b-x_1)$.

We show that the minimax bid $b^*$ is never such that $b^*<\mu$ and $\bar\lambda(b^*)=v-\mu$. By way of contradiction, suppose this was true. Then we have either $u<v$ or $v\le u$ and $\hat x(v,u)>\mu$. Let $b<\mu$ and observe that $\lambda^{L1}(x_1,u)$ increases in $x_1$ so that the highest loss of bidding too low is $v-\mu$. Note that $\max\{l,\min\{\hat x(\mu,u),\mu\}\}<\mu$. We note that $\bar \lambda^H(b)\le\bar \lambda^H(\mu)$ and
\begin{align*}
	\bar \lambda^H(\mu)&=\frac{\left(u-\mu\right)^{n-1}}{\left(u-\max\{l,\min\{\hat x(\mu,u),\mu\}\}\right)^{n-1}} (\mu-\max\{l,\min\{\hat x(\mu,u),\mu\}\})\\
	&< \frac{\left(u-\mu\right)^{n-1}}{\left(u-\max\{l,\min\{\hat x(\mu,u),\mu\}\}\right)^{n-1}} (v-\max\{l,\min\{\hat x(\mu,u),\mu\}\})\\
	&=\lambda^{L1}(\max\{l,\min\{\hat x(\mu,u),\mu\}\},u)\le \bar \lambda^L(b)=v-\mu.
\end{align*}
The first line states the formula for the maximal loss when the bid $\mu$ is too high. The second line uses the fact that $\mu< v$ must hold. The third line observes that the second equals $\lambda^{L1}$ and that $\lambda^{L1}$ increases in $x_1$. We conclude that $\bar\lambda^H(b)=\bar\lambda^L(b)$ can be true only for $b\ge \mu$ if $\lambda^{L1}$ increases in $x_1$ for $b\le \mu$.

The previous two paragraphs show that $\bar \lambda^H$ and $\bar \lambda^L$ never intersect in the interior of the region on which $\bar \lambda^L(b)$ is constant in $b$. The worst-case loss of bidding too low $\bar \lambda^L$ simplifies to the relevant expression
\begin{equation*}
	\bar \lambda^L(b) =
	\begin{cases}
		\left(\frac{u-\mu}{u-b} \right)^{n-1}(v-b)&\text{for } b<\mu\\
		(v-b)\left(1-\left(\frac{b-\mu}{b-l} \right)^{n-1}\right)&\text{for } \mu\le b\le v.
	\end{cases}
\end{equation*}

	Equating the worst-case losses $\bar \lambda^H$ and $\bar \lambda^L$ gives the minimax bid in Proposition~\ref{prop:individual-minimax-bdding-function}.

	Let $n=2$. The first cutoff value $\hat v_1$ is such that $\beta^*(\hat v_1)=m$, and the second cutoff value is such that $\beta^*(\hat v_2)=u$. 


\subsection{Proof of Proposition~\ref{prop:comparative_statics_individual_moment_belief}}

The proof is analogous to the proof of Proposition~\ref{prop:comparative_statics_aggregate_moment_belief}. The minimization of maximal loss takes the form of finding $b$ such that $F(v,\mu,b)=\bar \lambda^L(b)-\bar\lambda^H(b)=0$. Note that there is a unique $b$ that solves the equation for each $v$ and $\mu$.

The proof of Proposition~\ref{prop:individual-minimax-bdding-function} implies that $\partial F/\partial b<0$ as $\bar \lambda^L$ decreases in $b$ and $\bar \lambda^H$ increases in $b$. 

The IFT implies that the minimax bid $\beta$ increases in $v$ as $\frac{\partial \beta}{\partial v} = -{\frac{\partial F}{\partial v}}\left({\frac{\partial F}{\partial b}}\right)^{-1}$ and $\frac{\partial F}{\partial v} = \frac{\partial \bar \lambda^L(b)}{\partial v}>0$.

We show that the minimax bid increases in $\mu$. The minimax bid is independent of the mean $\mu$ for values $v$ such that $\beta(v|l,\mu,u)\le \mu$. In general, the minimax bid increases in $\mu$ if $\partial F/\partial \mu \ge 0$. For bids above the mean, the maximal loss associated with bidding too low increases in $\mu$ as
\begin{equation*}
	\frac{\partial\bar\lambda^L}{\partial \mu}\biggr|_{\mu\le b} = (n-1)\frac{v-b}{b-l}\left(\frac{b-\mu}{b-l} \right)^{n-2}>0.
\end{equation*}
We now show that the maximal loss from bidding too high $\bar\lambda^H$ decreases in $\mu$. If $\max\{l,\min\{\hat x_1(b,u),\mu\}\}\neq \mu$, then the maximal loss of bidding too high decreases in $\mu$ as
\begin{equation*}
	\frac{\partial \bar\lambda^H}{\partial \mu}\biggr|_{\hat x_1(b,u)\le\mu} = -(n-1)\pi(\max\{l,\hat x_1(b,u)\},u)^{n-2}\frac{b-\max\{l,\hat x_1(b,u)\}}{u-\max\{l,\hat x_1(b,u)\}}<0. 
\end{equation*}
If $\mu=\max\{l,\min\{\hat x_1(b,u),\mu\}\}$, then $\bar\lambda^H(b)=b-\mu$ so that the derivative with respect to $\mu$ is $-1$. As a result, $\partial F/\partial \mu > 0$ and $\partial \beta/\partial \mu\ge 0$.

The first derivative of $\bar\lambda^H$ with respect to $n$ is
\begin{equation*}
	\frac{\partial \bar\lambda^H}{\partial n} =
	\begin{cases}
		(b-l) (\frac{u-\mu}{u-l})^{n-1} \log (\frac{u-\mu}{u-l})&\text{if }\tilde x_1 = l\\
		\frac{(n-1) (u-b)^2 }{(n-2)^2 (u-\mu)}\left(\frac{(n-2) (u-\mu)}{(n-1) (u-b)}\right)^n \log \left(\frac{(n-2)
   (u-\mu)}{(n-1) (u-b)}\right)&\text{if }\tilde x_1=\hat x(b,u)\\
   0&\text{if }\tilde x_1=\mu.
	\end{cases} 
\end{equation*}
The partial derivative is negative.

For bids above the mean, the partial derivative of $\bar \lambda^L$ with respect to $n$ is
\begin{equation*}
	\frac{\partial \bar\lambda^L}{\partial n} \biggr|_{\mu\le b}=- (v-b) \left(\frac{b-\mu}{b-l}\right)^{n-1} \log \left(\frac{b-\mu}{b-l}\right)>0.
\end{equation*}
Hence, for bids above the mean $\partial (\bar\lambda^L-\bar\lambda^H)/\partial n >0$. 

For bids below the mean $\mu$, we consider the inverse minimax bid $\beta^{-1}$. The partial derivative of $\beta^{-1}$ with respect to $n$ is
\begin{equation*}
		\frac{\beta^{-1}}{\partial n} \biggr|_{b<m}=
		\begin{cases}
			(b-l) (\frac{u-b}{u-l})^{n-1} \log (\frac{u-b}{u-l}) &\text{if } l\ge \hat x(b,u)\\
			\frac{u-b}{n-2}(\frac{n-2}{n-1})^{n-1}  \log (\frac{n-2}{n-1}) &\text{else}.
		\end{cases}
\end{equation*}
The derivative is negative in both cases. Note that taking the total derivative of $\beta ^{-1}(\beta(v|l,\mu,u)|l,\mu,u)=v$ with respect to $\mu$ on both sides and reformulating leads to 
\begin{equation*}
	\frac{\partial \beta(v|l,\mu,u)}{\partial \mu}=-\frac{\partial \beta ^{-1}(\beta(v|l,\mu,u)|l,\mu,u)}{\partial \mu} \left(\frac{\partial \beta ^{-1}(\beta(v|l,\mu,u)|l,\mu,u)}{\partial b}\right)^{-1}.
\end{equation*}
The inverse bid increases in $b$ and the inverse bid decreases in $n$. Thus, the minimax bid increases in $n$.



\subsection{Proof of Theorem~\ref{thm:identification_aggregate_meq}}
Let $l<m<u$. The inverse bidding function is strictly increasing on $[l,u]$ as $l<m<u$. We use the strictly increasing inverse bidding function $v^\AGG(\cdot|l,m,u)$ to construct the value distribution $F$ by defining $v=v^\AGG(b|l,m,u)$ for all $b\in[l,u]$. Clearly, $F$ is a c.d.f. with support $[\underline v,\overline v]$, where $\underline v = v^\AGG(l|l,m,u) = l$ and $\overline v =v^\AGG(u|l,m,u)$. Note that as $v^\AGG$ is strictly increasing in $b$, so is its inverse $\beta^\AGG$ in $v$, and $\beta^\AGG(v^\AGG(b))=b.$ Hence, $F(v) =G(\beta^\AGG(v|l,m,u))$ for all $v\in[\underline v,\overline v]$. The equilibrium is separating by construction and has consistent range beliefs by construction. To see that $G$ is the bid distribution of a moment equilibrium, observe that 
\begin{equation*}
	\int_{\underline v}^{\overline v}\beta^\AGG(v| l, m, u)dF^n(v) = \int_{l}^{u}\beta^\AGG(v^\AGG(b|l,m,u)|l,m,u)dG^n(b)=\int_{l}^{u}bdG^n(b)=m.
\end{equation*}
Hence, the moment condition holds, and $(\beta^\AGG,l,m,u)$ is a separating aggregate moment equilibrium bid distribution $G$ when the value distribution is $F$ as constructed.

Let there be a value distribution $F$ and $G$ a corresponding bid distribution of a separating aggregate moment equilibrium. Separation and $\underline v<\overline v$ implies $l<m<u$.

\subsection{Proof of Proposition~\ref{prop:affine}}

To prove that if $b$ is a minimax bid for $l_0,\mu_0,u_0$, and $v$, then $p\cdot b + q$ is a minimax bid for $p\cdot l_0 + q, p\cdot \mu_0 + q, p\cdot u_0 + q$, and $p\cdot v + q$, one plugs the affine transformations into equations~\eqref{eq:ind_minimax_bid_low_value} and \eqref{eq:ind_minimax_bid_high_value} and observes that $p$ and $q$ drop out.

To see that $(\beta^\IND_\alpha,\alpha(l_0;\mathbf y),\alpha(\mu_0;\mathbf y),\alpha(u_0;\mathbf y))$ is a moment equilibrium for distribution $F_\alpha$, observe that
\begin{align*}
	\int_{\underline v}^{\bar v} \beta(\alpha(v;\mathbf y)&|\alpha(l_0;\mathbf y),\alpha(\mu_0;\mathbf y), \alpha(u_0;\mathbf y)) dF_\alpha(\alpha(v;\mathbf y)) \\
	&= \int_{\underline v}^{\bar v}p(\mathbf y)\beta(v|l_0,\mu_0,u_0) + q(\mathbf y) dF(v) = \alpha(\mu_0;\mathbf y).
\end{align*}

An analogous proof works for aggregate moment beliefs.


\section{Incentives in Procurement Auctions}
\label{app:empirical}
In procurement auctions, the incentives are reversed compared to the usual bidder-as-buyer auction. As bidders are sellers in procurement auctions, the bidder with the lowest bid wins. The private information is the privately known cost of producing the good. Bidders incur this cost conditional on winning the auction. In a bidder-as-seller auction, the expected utility of bidder $i$ when bidding $b_i$ is
\begin{align*}
	\mathbb P(b_i<\min b_{-i})(b_i-c_i) &= (1-\mathbb P(\min b_{-i}< b_i))(b_i-c_i) \\
	&= (1 - (1 - (1 - B(b_i))^{n-1})) (b_i-c_i)\\
	&= (1 - B(b_i))^{n-1}(b_i-c_i),
\end{align*}
where the bids of the other bidders are drawn from distribution $B$.

We first consider the incentives with aggregate moment beliefs. Let $l$ be the lowest bid and $u$ the highest bid of the other bidders. Let $m$ denote the expected winning bid, i.e., the expected minimum of $n$ independent bids. The worst-case bid distribution has the elements $x_1$ and $x_2$ in the support, $l\le x_1 \le m \le x_2\le u$. The moment constraint implies $q\cdot x_1 + (1-q)\cdot x_2 = m$, so $q(x_1,x_2) = \frac{x_2 - m}{x_2 - x_1} $. The lowest bid of $n$ independent bids is $x_1$ if not all bid $x_2$, i.e., $q = 1 - \mathbb P(b_j = x_2)^n  = 1 - (1-\mathbb P(b_j = x_1))^n$. The probability that a competing bidder bids $x_1$ is therefore $\mathbb (b_j = x_1) = 1 - (1-q)^\frac{1}{n} $. The probability that $x_1$ is the lowest of $n-1$ independent draws is then $1 - (1 - q)^{\frac{n-1}{n}} $. The probability that a bid $b\in (x_1,x_2)$ is winning is then $(1-q)^\frac{n-1}{n}.$

The maximal loss conditional on bidding too high is $\max\{(1-q(l,b))^\frac{n-1}{n}(b-c),m-c\}$ if $m< b$ and $(b-c)(1- (1-q(b,u))^\frac{n-1}{n})$ if $b\le m$. The maximal loss conditional on bidding too low is $(1-q(l,u))^\frac{n-1}{n}(u-b)$. The minimax bid equates the maximal loss conditional on bidding too high and the maximal loss conditional on bidding too low. The inverse minimax bidding function $c^\AGG(b|l,m,u)$ is well defined in a separating aggregate moment equilibrium.

We now consider individual moment beliefs in a procurement context. If other bidders only use two bids, $x_1$ and $x_2$, then the probability that bid $b\in (x_1,x_2)$ wins if all $n-1$ other bidders bid $x_2$. In a procurement auction, the bid is too low if a higher bid would not decrease the probability of winning. The worst-case bid distribution is then $\pi[l]+(1-\pi)x_2^P$, where $x_2^P = \min\{u,\max\{m,\hat x_2^P\}\}$ and
\begin{equation*}
	\hat x_2^P=
	\begin{cases}
		\frac{(n-1)b-l}{n-2}&\text{if }n\ge 3\\
		u&\text{if }n=2 .
	\end{cases}
\end{equation*}
Maximal loss conditional on bidding too high equals $\left(\frac{m-l}{x_2^P-l} \right)^{n-1}(x_2^P-b)$. The worst case when bidding too high comes from bidding marginally too high. A lower bid would have won. If $m< b$, then the worst-case bid distribution is $\pi [l]+(1-\pi)[b-\epsilon]$ so that maximal loss is $\left(\frac{m-l}{b-l} \right)^{n-1}(b-c)$. If $b\le m$, then the worst-case bid distribution is $\pi[b-\epsilon]+(1-\pi)[u]$. Maximal loss equals $(b-c)(1-\left(\frac{m-b}{u-b} \right)^{n-1})$. In a separating equilibrium, the (inverse) minimax bidding function equalizes the maximal loss from bidding too high with the maximal loss from bidding too low and equals
\begin{equation}
	c^\IND (b|l,m,u) = 
	\begin{cases}
		b - \left(\frac{b-l}{x_2^P-l} \right)^{n-1}(x_2^P-b) &\text{if } m<b\\
		b - \left(\frac{(m-l)(u-b)}{x_2^P-l} \right)^{n-1} \frac{x_2^P-b}{(u-b)^{n-1}-(m-b)^{n-1}}&\text{if }b\le m 
	\end{cases}
	\label{eq:minimax-bid-procurement}
\end{equation}
for $b\in[l,u]$.

The upper bound belief $u$ must equal the highest cost $\bar c$ in a moment equilibrium.

\singlespacing

\renewcommand{\bibname}{References} 
\bibliographystyle{chicago}
\bibliography{robust-moment}

\end{document}